\newcommand{\CH}{\mathcal{H}}
\newcommand{\CC}{\mathcal{C}}
\newcommand{\pname}[1]{\textnormal{\textsc{#1}}}
\newcommand{\cclass}[1]{\textnormal{\textsf{#1}}}
\newcommand{\HED}{\pname{$H$-free Edge Deletion}}
\newcommand{\CHED}{\pname{$\CH$-free Edge Deletion}}
\newcommand{\DED}{\pname{Diamond-free Edge Deletion}}
\newcommand{\DKTED}{\pname{\{Diamond, $K_t$\}-free Edge Deletion}}
\newcommand{\SDED}{\pname{$s$-Diamond-free Edge Deletion}}
\newcommand{\SDKTED}{\pname{\{$s$-Diamond, $K_t$\}-free Edge Deletion}}
\newcommand{\SDEE}{\pname{$s$-Diamond-free Edge Editing}}
\newcommand{\SDKTEE}{\pname{\{$s$-Diamond, $K_t$\}-free Edge Editing}}
\newcommand{\HEE}{\pname{$H$-free Edge Editing}}
\newcommand{\DEE}{\pname{Diamond-free Edge Editing}}
\newcommand{\PIVD}{\pname{$\Pi$ Vertex Deletion}}
\newcommand{\KOSVD}{\pname{$K_{1,s}$-free Vertex Deletion}}
\newcommand{\KOSOVD}{\pname{$K_{1,s+1}$-free Vertex Deletion}}
\newcommand{\TSAT}{\pname{3-Sat}}
\newcommand{\VC}{\pname{Vertex Cover}}
\newcommand{\NPC}{\cclass{NP-complete}}
\newcommand{\SUBEX}{$2^{o(|G|)}$}
\newcommand{\PSUBEX}{$2^{o(k)}\cdot |G|^{O(1)}$}
\newcommand{\PH}{$\cclass{NP} \subseteq \cclass{coNP/poly}$}
\newcommand{\defstage}[2]{% PGD Version
  \hfill\\\smallskip\noindent%
  \begin{tabularx}{\textwidth}{|l X|}%
    \hline%
    \multicolumn{2}{|l|}{\textbf{#1}}\\%
    &#2\\\hline%
  \end{tabularx}%
%  \smallskip%
}
\newtheorem{thm}{Theorem}[section]
\newtheorem{lem}[thm]{Lemma}
\newtheorem{pro}[thm]{Proposition}
\newtheorem{observation}[thm]{Observation}
\newtheorem{cor}[thm]{Corollary}
\newtheorem{defn}[thm]{Definition}
\newtheorem{drule}{Rule}
\newtheorem{op}{Open Problem}
\title{A cubic vertex kernel for Diamond-free Edge Deletion and more\thanks{A preliminary version
of this paper has appeared in the proceedings of IPEC 2015.}}
\author{R. B. Sandeep\inst{1}\thanks{supported by TCS Research Scholarship} \and Naveen Sivadasan\inst{2}}
\institute{Department of Computer Science \& Engineering\\
Indian Institute of Technology Hyderabad, India\\
\email{cs12p0001\makeatletter@\makeatother iith.ac.in}
\and
TCS Innovation Labs, Hyderabad, India\\
\email{naveen\makeatletter@\makeatother atc.tcs.com}}
\begin{document}

\maketitle

\begin{abstract}
  A diamond is a graph obtained by removing an edge from a 
  complete graph on four vertices. A graph is diamond-free if it does not contain an induced diamond.
  The \DED\ problem
  asks  whether there exist at most $k$
  edges in the input graph $G$ whose deletion results in a diamond-free graph. 
  For this problem, a polynomial kernel of $O(k^4$)
  vertices was found by Fellows et. al. (Discrete Optimization, 2011). 

  In this paper, we give an improved kernel of $O(k^3)$ vertices for \DED. 
  Further, we give an $O(k^2)$ vertex kernel for a related problem
  \DKTED, where $t\geq 4$ is any fixed integer. To complement our results,
  we prove that these problems are \NPC\ even for $K_4$-free graphs
  and can be solved neither in 
  subexponential time (i.e., $2^{o(|G|)}$) nor in parameterized subexponential
  time (i.e., $2^{o(k)}\cdot |G|^{O(1)}$), unless Exponential Time Hypothesis fails.
  Our reduction implies the hardness and lower bound for a general 
  class of problems, where these problems come as a special case.
\end{abstract}

\section{Introduction}

For a finite set of graphs $\CH$,
\CHED\ problem asks whether 
we can delete at most $k$ edges from an input graph $G$ 
to obtain a graph $G'$ such that for every $H\in \CH$, 
$G'$ does not have an induced
copy of $H$.
If $\CH=\{H\}$, the problem is denoted by \HED.
\pname{Editing} problems are defined similarly were we are allowed to add or delete
at most $k$ edges.
\CHED\ comes under the broader category of graph modification
problems which have found applications in DNA physical mapping 
\cite{bodlaender1996intervalizing},
circuit design \cite{el1988complexity}
and machine learning \cite{BansalBC04correlation}.
Cai has proved that \CHED\
is fixed parameter tractable~\cite{Cai96fixed}.
Polynomial kernelization and incompressibility of these problems 
were subjected to rigorous studies in the recent past.
Kratsch and Wahlstr{\"{o}}m gave the first example on the incompressibility
of \HED\ problems by proving that the problem is incompressible if $H$
is a certain graph on seven vertices, unless \PH~\cite{kratsch2013two}. 
Later, it has been proved that there exist no polynomial kernel for 
\HED\
where $H$ is any 3-connected graph other than a complete 
graph, unless \PH~\cite{CaiC15incompressibility}. In the same paper, 
under the same assumption, it is proved that, 
if $H$ is a path or a cycle, then \HED\ 
is incompressible if and only if $H$ has at least four edges. 
Except for a few cases, the kernelization complexity of \HED\
is known when $H$ is a tree~\cite{cai2012polynomial}.
It has been proved that 
\CHED\ admits
polynomial kernelization on bounded degree graphs if $\CH$ is a 
finite set of graphs \cite{drange16compressing}. 
Though kernelization complexities of 
many \HED\ problems are known, \textsc{Claw-free Edge Deletion} 
withstood the test of time and yielded neither an incompressibiltiy
result nor a polynomial kernel. Some progress has been made recently 
for this problem such as a polynomial kernel for \textsc{Claw-free Edge Deletion} 
on $K_t$-free input graphs~\cite{AravindSS14on} and a polynomial kernel for 
\textsc{$\{\text{Claw, Diamond}\}$-free Edge Deletion}~\cite{cygan2015polynomial}.

\subsection*{Motivation}
As described above, the kernelization complexity of \HED\
is known when $H$ is any 3-connected graph, path or cycle.
Except for a few cases, the status is known when $H$ is any tree.
Every new insight into these problems may help us to obtain a dichotomy
on the kernelization complexities of \HED\ problems.

The polynomial kernelization in this paper is inspired by two properties
related to diamond graph. Firstly, a graph is diamond-free if and only if 
every edge is part of exactly one maximal clique. The second property is 
that, the neighborhood of every vertex in a diamond graph is connected. 
It can be easily verified that, when a graph $H$ has this property and if 
$H$ is diamond-free then $H$ is a disjoint union of cliques (cluster). 
Though our kernelization technique may give polynomial kernels for 
\CHED, such that $\CH$ contains diamond and every other $H\in \CH$ is a cluster,
it complicates the analysis of the kernel size. Hence we restrict the study
to \DED\ and \DKTED.

\subsection*{Our Results}
In this paper, we study the polynomial kernelization and hardness results of 
\DED\ and \DKTED, where $t\geq 4$ is any fixed integer. 
It has been proved 
that \DED\ admits a kernel of $O(k^4)$ 
vertices~\cite{FellowsGKNU11graph}.
We improve this result by giving a kernel of $O(k^3)$ vertices. 
A proper subset of the rules applied for \DED\ gives us an $O(k^2)$
vertex kernel for \DKTED.
We use vertex modulator technique, which was used recently 
to give a polynomial kernel for \textsc{Trivially Perfect Editing}~\cite{DrangeP15polynomial} 
and to obtain a 
polynomial kernel for \textsc{$\{\text{Claw, Diamond}\}$-free Edge Deletion}~\cite{cygan2015polynomial}. 
We introduce a rule named as \emph{vertex-split}
which \emph{splits} a vertex $v$ into a set of independent vertices 
where each vertex in the set corresponds to a 
component in the neighborhood of $v$. 
We prove that, this rule is safe for many other \HED\ problems.

For any fixed $s\geq 1$, an $s$-diamond is defined as the graph $K_2\times (s+1)K_1$.
When $s=1$, we get a diamond graph. 
As part of a dichotomy result on the hardness of \HED\ and \HEE\ problems,
it has been proved that \SDED\ and \SDEE\ are \NPC\ and cannot be 
solved in parameterized subexponential time, unless 
Exponential Time Hypothesis (ETH) fails~\cite{AravindSS16parameterized}.
We improve these results by proving that 
\SDED\ and \SDEE\ are \NPC\ even
on $K_4$-free graphs and can be solved neither in subexponential time nor 
in parameterized subexponential time, unless ETH fails. Our reduction
implies that these results are applicable for \SDKTED\ and \SDKTEE\ for
any fixed $s\geq 1$ and $t\geq 4$.
Fellows et. al. have proved \cite{FellowsGKNU11graph} the hardness of a similar kind of problems termed as
\pname{$s$-Edge Overlap Deletion} (\pname{$s$-Edge Overlap Editing}),
where the objective is to delete (edit) at most $k$ edges from the 
input graph such that every edge in the resultant graph is part of at most
$s$ maximal cliques. We observe that when $s=1$, \SDED~(\SDEE) coincides 
with \pname{$s$-Edge Overlap Deletion} (\pname{$s$-Edge Overlap Editing}).
%It has been proved that \SDED\ and \SDEE\ are 
%\NPC\ \cite{FellowsGKNU11graph}\footnote{In \cite{FellowsGKNU11graph}, these problems are termed as
%\pname{$s$-Edge Overlap Deletion} and \text{Editing}}.
%The reduction implies that the problem cannot be solved in 
%parameterized subexponential time, unless ETH fails.
%Since the reduced instance size is not linear, their reduction
%does not imply no-subexponential time algorithm for these problems.

%\input{preliminaries}
% diamond \{a,b,c,d\} order
% G-S
% equivalent edges in G and G'
% exhaustively apply
% safety of rule
% instance means instance of \DED
% E(G) V(G)
% neighborhood
% independent set
% G[S]
% V_E
% E' forms a K_4
% an instance always means an instance of \DED
% matching, non-matching
% G-e
% S-e
% diamond-free
% solution
% maximal clique partitioning

\subsection{Preliminaries}

\noindent \textbf{Graphs:} Every graph considered here is simple, finite and undirected.
For a graph $G$, $V(G)$ and $E(G)$ denote the vertex set and the edge set of $G$ respectively.
$N_G(v)$ denotes the (open) neighborhood of a vertex $v\in V(G)$, which is the set of 
vertices adjacent to $v$ in $G$. 
%The closed neighborhood of $v$ is denoted by 
%$N_G[v]$ and is defined by $N_G(v)\cup \{v\}$. 
We remove the subscript when there is no ambiguity about the underlying graph $G$.
A graph $G'=(V',E')$ is called an induced subgraph of a graph $G$ if $V'\subseteq V(G)$, $E'\subseteq E(G)$
and an edge $\{x,y\}\in E(G)$ is in $E'$ if and only if $\{x,y\}\subseteq V'$.
For a vertex set $V'\subseteq V(G)$, $G[V']$ denotes the induced subgraph with a vertex set $V'$ of $G$.
A component $G'$ of a graph $G$ is a connected induced subgraph of $G$ such that
there is no edge between $V(G')$ and $V(G)\setminus V(G')$.
For a set of vertices $V'\subseteq V(G)$, $G-V'$ denotes the graph obtained by removing the vertices in 
$V'$ and all its incident edges from $G$. For an edge set $E'\subseteq E(G)$, $G-E'$ denotes 
the graph obtained by deleting all edges in $E'$ from $G$. 
%If $V'$ ($E'$) is a singleton set $\{v\}$ ($\{e\}$),
%we denote the graph $G-V'$ ($G-E'$) by $G-v$ ($G-e$). 
%For an edge set $E'\subseteq E(G)$,
%$V_{E'}(G)$ denotes the vertices in $G$ incident to the edges in $E'$. 
A matching (non-matching) is a set of 
edges (non-edges) such that every vertex in the graph is incident to at most one edge (non-edge) in the matching (non-matching).
$K_t$ denotes the complete graph on $t$ vertices and $K_{1,s}$
denotes the graph where a vertex is adjacent to an independent set of $s$ vertices,
i.e., the graph $K_1\times sK_1$.
An $s$-diamond is defined as the graph $K_2\times (s+1)K_1$~\cite{AravindSS16parameterized}.
We note that $1$-diamond is a diamond graph (see Figure~\ref{fig:diamond}). 
The edge between the two vertices with degree three in a diamond is the \emph{middle edge}
of the diamond.
$K_{1,3}$ is also known as a claw graph.
%A diamond is a graph obtained by deleting an edge from a complete graph on four vertices. 
In this paper, $\CH$ always denotes a finite set of graphs.
A graph $G$ is called $\CH$-free, 
if $G$ does not contain any induced copy of any $H\in \CH$.
%Every 
%Whenever we mention that $\{a,b,c,d\}\subseteq V(G)$ induces a diamond in $G$, $a$ and $b$ are degree-3 vertices
%and $c$ and $d$ are degree-2 vertices. In a diamond, we call the edge between the degree-3 vertices as the \emph{middle}
%edge.

\begin{figure}[h]
  \centering
  \includegraphics[width=1.5in]{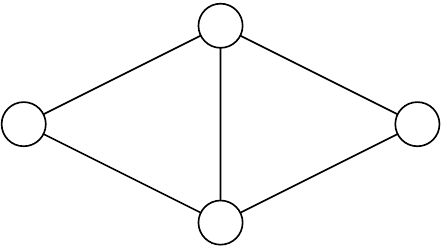}
  \caption{1-diamond is isomorphic to a diamond graph}
  \label{fig:diamond}
\end{figure}

\noindent \textbf{Parameterized complexity:} A parameterized problem is \emph{fixed parameter tractable},
if there is an algorithm to solve it in time $f(k)\cdot n^{O(1)}$, where $f$ is any computable function and $n$
is the size of the input, and $k$ is the parameter. 
A \emph{polynomial kernelization} is an algorithm which takes as input $(G,k)$
of a parameterized problem,
runs in time $(|G|+k)^{O(1)}$ and returns an instance $(G',k')$ of the same problem such that
$|G'|, k'\leq p(k)$, where $p$ is any polynomial function. A rule for kernelization is \emph{safe} if 
$(G,k)$ is a yes-instance if and only if $(G',k')$ 
is a yes-instance where $(G,k)$ and  $(G',k')$ are the input and output of 
the rule. A \emph{linear reduction} is a polynomial time reduction
from a problem $A$ to another problem $B$ such that $|G'|=O(|G|)$,
where $G$ and $G'$ are the input and output of the reduction.
A \emph{linear parameterized reduction} from a parameterized 
problem $A$ to another problem $B$ is a polynomial time reduction such that 
$k'=O(k)$ where $k$ and $k'$ are the parameters of the instances of
$A$ and $B$ respectively. 
A problem is solvable in \emph{subexponential} time
if it admits an algorithm which runs in time \SUBEX, where $G$ is the input.
Similarly, a parameterized problem is solvable in \emph{parameterized subexponential} time
if it admits an algorithm which runs in time \PSUBEX, where $G$ is the input and $k$ is the 
parameter.

Exponential Time Hypothesis (ETH) (along with Sparsification Lemma \cite{ImpagliazzoPZ01which}) 
implies that there is no algorithm which solves \TSAT\ in time $2^{o(n+m)}$, where $n$ is the number of 
variables and
$m$ is the number of clauses in the input instance. 
We can use a linear reduction from a problem 
(which does not admit subexponential time algorithm, assuming ETH) to another problem
to show that the latter does not have a subexponential time algorithm, unless ETH fails.
Similarly, under the same assumption, we can use a linear parameterized reduction from a parameterized problem
(which does not admit parameterized subexponential time algorithm, assuming ETH)
to another parameterized problem to show that the latter does not have a 
parameterized subexponential time algorithm.
We refer the book \cite{CyganFKLMPPS15parameterized} for further reading on parameterized 
algorithms and complexity.

\section{Polynomial Kernels}
In this section, we give a kernelization for \DED\ and \DKTED.
There are two phases for the kernelization. In the first phase, apart from three 
standard rules, we introduce a new rule named as vertex-split, which has applications
in the kernelization of other edge deletion problems. In the second phase,
we apply vertex modulator technique.

\subsection{Phase 1}
We start with two standard rules. 
The first rule deletes an \textit{irrelevant} edge and the second rule
deletes a \textit{must-delete} edge.
 
\begin{defn}[Core Member]
  Let $G$ be an input graph of an \CHED\ problem. 
  Then, a vertex or an edge of a graph $G$ 
  is a \emph{core member} of $G$ if it is contained in a
  subgraph (not necessarily induced) of $G$ isomorphic to an $H\in \CH$.
\end{defn}

\begin{drule}[Irrelevant Edge]
  \label{rule:irrelevant-edge}
  Let $G$ be an input to the rule, which is an input graph to an \CHED\ problem.
  If there is an edge $e\in E(G)$ which is not a core member of $G$, 
  then delete $e$ from $G$.
\end{drule}

\begin{lem}
  \label{lem:irrelevant-edge}
  Irrelevant edge rule is safe and can be applied in polynomial time for any \CHED.
\end{lem}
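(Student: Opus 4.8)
The plan is to prove both halves of the claim: that the rule runs in polynomial time, and that it is safe, i.e. $(G,k)$ is a yes-instance if and only if $(G',k)$ is, where $G'=G-e$ and $e=\{u,v\}$ is the deleted edge. Throughout I would exploit the defining consequence of the non-core hypothesis: since $e$ is not a core member, $G$ contains \emph{no} subgraph isomorphic to any $H\in\CH$ that uses the edge $e$; in particular $e$ appears in no induced copy of any $H\in\CH$ inside $G$. The polynomial running time is routine. For a fixed $H\in\CH$ with $h=|V(H)|$ vertices, testing whether $e$ lies in a subgraph of $G$ isomorphic to $H$ is a subgraph-isomorphism test with one edge of $H$ pinned to $e$: enumerate the constantly many choices of which edge of $H$ maps to $e$, then the $O(|V(G)|^{\,h-2})$ placements of the remaining vertices, and check adjacencies. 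Repeating this over all edges of $G$ and all (finitely many) $H\in\CH$ decides applicability in polynomial time.

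For safety I would treat the two implications separately, since one is clean and the other carries the content. First, the direction ``$G'$ is a yes-instance $\Rightarrow$ $G$ is a yes-instance'' follows directly from the non-core hypothesis. Take a solution $F'$ of size at most $k$ for $G'$; note $e\notin F'$ because $e\notin E(G')$. I claim $F'$ also solves $G$. Indeed $G-F'=\bigl((G-e)-F'\bigr)+e$ is obtained from the $\CH$-free graph $(G-e)-F'$ by adding back $e$, so any induced copy of some $H\in\CH$ created in $G-F'$ must contain both endpoints of $e$ with $u\sim v$, hence must use the edge $e$. But that copy is an induced subgraph of $G-F'\subseteq G$ isomorphic to $H$ through $e$, contradicting that $e$ is not a core member. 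Thus $G-F'$ is $\CH$-free.

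The reverse direction ``$G$ is a yes-instance $\Rightarrow$ $G'$ is a yes-instance'' is where I expect the main obstacle. Let $F$ be a solution of size at most $k$ for $G$. If $e\in F$, then $F\setminus\{e\}$ has size at most $k$ and $(G-e)-(F\setminus\{e\})=G-F$ is $\CH$-free, so we are done immediately. The remaining case $e\notin F$ is the crux: I must show that deleting the non-core edge $e$ does not \emph{create} a new forbidden induced subgraph, i.e. that $(G-e)-F$ is $\CH$-free. Suppose not; then it has an induced copy of some $H\in\CH$ on a vertex set $S$. Since $G-F$ is $\CH$-free and $G-F$ differs from $(G-e)-F$ only in the pair $\{u,v\}$, we must have $u,v\in S$ with $u\not\sim v$ in the copy, so $(G-F)[S]$ is exactly this copy of $H$ with the edge $e$ added back.

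The heart of the argument is therefore the structural sub-claim that filling in a missing edge of an induced copy of some $H\in\CH$ yields a graph that still contains a member of $\CH$ \emph{through the added edge}; granting it, $(G-F)[S]\subseteq G$ would be a subgraph isomorphic to some $H'\in\CH$ using $e$, contradicting that $e$ is not a core member and finishing the proof. I would verify this sub-claim for the members at hand: for a diamond the only non-adjacent pair is the pair of degree-two vertices, so filling in $e$ produces $K_4$, which contains a diamond using $e$; for $K_t$ there is no missing edge at all, so deleting an edge can never create an induced $K_t$. Establishing this structural property for the graphs in $\CH$ is the step I expect to be delicate, and it is exactly where the content of the reverse direction resides.
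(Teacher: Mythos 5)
Your proposal follows the same decomposition as the paper's proof, and the comparison is instructive because you are careful at exactly the point where the paper is not. The easy direction is identical in both: an induced copy of $H$ that appears once $e$ is present must pass through $e$, and since core membership is defined via not-necessarily-induced subgraphs, that copy itself witnesses that $e$ is a core member --- contradiction. For the hard direction ($(G,k)$ yes $\Rightarrow$ $(G-e,k)$ yes), the paper locates the copy $D'$ of $H$ in $G'-S$, notes both endpoints of $e$ lie in $D'$, and concludes ``then $D'$ induces a supergraph of $H$ in $G$, which is a contradiction.'' But a supergraph of $H$ containing $e$ contradicts non-core-membership of $e$ only if it contains a copy of some member of $\CH$ \emph{through} $e$ --- which is precisely the structural sub-claim you isolate, and which the paper never proves.

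Your instinct that this sub-claim is where the content lies is correct, and your reluctance to assert it for arbitrary $\CH$ is well founded: it is false in general, and with it the lemma as literally stated. Take $\CH=\{2K_2\}$ and $G=P_4$ with $e$ its middle edge. No $2K_2$-subgraph of $G$ uses $e$, so the rule deletes it; yet $(G,0)$ is a yes-instance ($P_4$ has no induced $2K_2$) while $(G-e,0)=(2K_2,0)$ is not. So the paper's one-line ``contradiction'' hides a genuine gap. What rescues the paper is that the lemma is only ever invoked through Corollary~\ref{cor:irrelevant-edge}, i.e.\ for $\CH\subseteq\{\text{diamond},K_t\}$, and there your verification --- a diamond plus its unique non-edge is $K_4$, which contains a diamond subgraph through every edge, and $K_t$ has no non-edge, so edge deletions never create an induced $K_t$ --- is exactly what is needed to close the argument. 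In short, your proof is the correct and complete one for the cases the paper actually uses, and it makes explicit a hypothesis on $\CH$ (every non-edge completion of an $H\in\CH$ contains a member of $\CH$ through the new edge) under which the general lemma does hold.
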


\begin{proof}
  Let $(G,k)$ be an instance of \CHED. 
  Let $G'$ be obtained by applying irrelevant edge rule on $G$. 
  We claim that $(G,k)$ is a yes-instance if and only if $(G',k)$ is a yes-instance.
  Let $S$ be a solution of size at most $k$ of $(G,k)$. For a contradiction, assume 
  that $G'-S$ has an induced $H\in \CH$ with a vertex set $D'$. 
  Since $D'$ does not induce $H$ in $G-S$, 
  the edge $e$ deleted by irrelevant edge rule has both the end points in $D'$. 
  Then $D'$ induces a supergraph of $H$ in $G$, which is a contradiction.
  Conversely, let $S'$
  be a solution of size at most $k$ of $(G',k)$. Assume that $G-S'$ has an induced
  $H\in \CH$ with vertex set $D$. The edge deleted by irrelevant edge rule
  has both the end points in $D$. This implies that $D$ induces a supergraph 
  of $H$ in $G$, which is a contradiction. Since, in polynomial time, we can verify whether
  an edge is part of an $H\in \CH$ in $G$, the rule can be applied in polynomial time.
\end{proof}

\begin{cor}
  \label{cor:irrelevant-edge}
  Irrelevant edge rule is safe and can be applied in polynomial time for \DED\ and \DKTED.
\end{cor}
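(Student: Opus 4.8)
The plan is simply to instantiate the general result of Lemma~\ref{lem:irrelevant-edge} for the two specific families $\CH$ that define \DED\ and \DKTED. First I would recall that both problems are special cases of \CHED\ for a finite set $\CH$: for \DED\ we take $\CH$ to be the singleton containing the diamond, and for \DKTED\ we take $\CH=\{\text{diamond},K_t\}$ for the fixed integer $t\geq 4$. Since both of these are finite sets of graphs, they fall squarely within the hypotheses of Lemma~\ref{lem:irrelevant-edge}, and so both the safety and the polynomial-time applicability of the irrelevant edge rule transfer to them with no additional argument.

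Having made this identification, the only point worth checking is that the polynomial-time test underlying the rule---deciding whether a given edge is a core member, i.e.\ whether it lies in a (not necessarily induced) subgraph isomorphic to some $H\in\CH$---remains polynomial for these concrete choices of $\CH$. This is immediate because each $H$ has bounded size (four vertices for the diamond, $t$ for $K_t$), so membership can be tested by brute force over all vertex subsets of that fixed size, which is polynomial in $|G|$. I expect no genuine obstacle here: the corollary is a direct specialization of the preceding lemma, and the reason to state it separately is only to pin down the notation for the two problems that the remainder of the kernelization will address.
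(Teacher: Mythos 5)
Your proposal is correct and matches the paper exactly: the paper states this corollary without proof, treating it as an immediate specialization of Lemma~\ref{lem:irrelevant-edge} to $\CH=\{\text{diamond}\}$ and $\CH=\{\text{diamond},K_t\}$, which is precisely your argument. Your extra remark on brute-force testing of core membership for fixed-size graphs is a harmless elaboration of what the lemma already covers.
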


The next rule deletes an edge $e$, if $e$ is the middle edge of $k+1$ otherwise edge-disjoint diamonds.
This rule is found in \cite{FellowsGKNU11graph}.

\begin{drule}[Sunflower]
  \label{rule:sunflower-middle}
  Let $(G,k)$ be an input to the rule. If there is an edge $e=\{x,y\}\in E(G)$ 
  such that $G[N(x)\cap N(y)]$ has a non-matching of size at
  least $k+1$,  then delete $e$ from $G$ and decrease $k$ by $1$.
\end{drule}

\begin{lem}
  \label{lem:sunflower-middle}
  Sunflower rule is safe and can be applied in polynomial time.
\end{lem}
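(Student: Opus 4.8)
The plan is to show that a large non-matching in the common neighborhood forces the edge $\{x,y\}$ into every solution of size at most $k$, and then deduce the two directions of safety from this. First I would translate the hypothesis into a statement about diamonds. If $a,b\in N(x)\cap N(y)$ with $\{a,b\}\notin E(G)$, then the four vertices $\{x,y,a,b\}$ induce a diamond whose middle edge is $\{x,y\}$: both $a$ and $b$ are adjacent to each of $x,y$, the edge $\{x,y\}$ is present, and $\{a,b\}$ is the unique missing edge. A non-matching $\{a_1,b_1\},\dots,\{a_{k+1},b_{k+1}\}$ of size $k+1$ in $G[N(x)\cap N(y)]$ is a set of pairwise vertex-disjoint non-edges, so the $k+1$ induced diamonds $D_i$ on $\{x,y,a_i,b_i\}$ pairwise share only $x,y$ and the middle edge $\{x,y\}$. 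In particular, the four remaining edges of $D_i$, namely $\{x,a_i\},\{x,b_i\},\{y,a_i\},\{y,b_i\}$, each meet $\{a_i,b_i\}$, and since these pairs are disjoint across distinct $i$, the four-edge sets of the $D_i$ are pairwise disjoint.

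The key claim is that every solution $S$ with $|S|\leq k$ contains $\{x,y\}$, and I would argue this by contradiction. Since $G-S$ is diamond-free, for each $i$ the set $\{x,y,a_i,b_i\}$ must fail to induce a diamond in $G-S$. As we are only deleting edges, the sole way to destroy this induced diamond is to delete one of the five edges of $D_i$: we cannot complete it to a $K_4$ by adding the missing edge $\{a_i,b_i\}$. Hence, if $\{x,y\}\notin S$, then for every $i$ the set $S$ contains at least one of the four edges $\{x,a_i\},\{x,b_i\},\{y,a_i\},\{y,b_i\}$; by the disjointness established above these are distinct edges for distinct $i$, forcing $|S|\geq k+1$, a contradiction. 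Therefore $\{x,y\}\in S$.

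Given the claim, safety is routine. For the forward direction, if $S$ is a solution for $(G,k)$ then $\{x,y\}\in S$, so $S\setminus\{\{x,y\}\}$ is a solution of size at most $k-1$ for $(G-\{x,y\},\,k-1)$. Conversely, if $S'$ is a solution of size at most $k-1$ for $(G-\{x,y\},\,k-1)$, then $S'\cup\{\{x,y\}\}$ has size at most $k$ and $G-(S'\cup\{\{x,y\}\})=(G-\{x,y\})-S'$ is diamond-free, solving $(G,k)$. For the running time, I would observe that a non-matching in $G[N(x)\cap N(y)]$ is precisely a matching in the complement graph on the vertex set $N(x)\cap N(y)$; for each of the $O(|E(G)|)$ candidate edges $\{x,y\}$ we compute this common neighborhood and a maximum matching in the complement (via the blossom algorithm) in polynomial time, applying the rule exactly when that matching has size at least $k+1$.

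The main obstacle is the central claim that every solution must delete $\{x,y\}$. Its two load-bearing ingredients deserve emphasis: the non-matching guarantees genuinely edge-disjoint ``petals'', so destroying them separately is as expensive as their number, and the deletion-only model rules out the alternative of repairing a diamond by completing it to a $K_4$. Once both are in place the pigeonhole argument closes immediately.
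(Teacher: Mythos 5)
Your proof is correct and follows essentially the same route as the paper's: each non-edge of the non-matching yields an induced diamond through $\{x,y\}$, these $k+1$ diamonds pairwise share only the edge $\{x,y\}$, so any solution of size at most $k$ must contain $\{x,y\}$, from which safety of deleting $e$ and decrementing $k$ follows. You merely spell out the steps the paper leaves implicit (the reduction of safety to the key claim, and computing the non-matching as a maximum matching in the complement), which is fine.
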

\begin{proof}
  Let $(G,k)$ be an instance of \DED~(\DKTED).
  Let $e=\{x,y\}\in E(G)$ and $V'$ be $N(x)\cap N(y)$. Assume that $G[V']$ 
  has a non-matching $M'$ of size at
  least $k+1$. Let sunflower rule be applied on $(G,k)$ to obtain $(G-e,k-1)$.
  It is enough to prove that every solution $S$ of size at most $k$ of $(G,k)$ 
  contains the edge $e$.
  Every non-edge $\{a,b\}$ in $M'$ corresponds to 
  an induced diamond $\{x,y,a,b\}$ in $G$. 
  The diamonds corresponds to any two different non-edges in $M'$ share only
  one edge $\{x,y\}$. Since at least one edge from every induced diamond is in $S$, 
  $e$ must be in $S$. 
  The rule can be applied in 
  polynomial time as maximum non-matching can be found in polynomial-time.
\end{proof}

Now we introduce a property and a rule based on it. 

\begin{defn}[Connected Neighborhood]
  For a graph $G$ and a vertex $v\in V(G)$, $v$ has \emph{connected neighborhood} if $G[N(v)]$ is connected.
  $G$ has connected neighborhood if every vertex in $G$ has connected neighborhood.
\end{defn}

\begin{drule}[Vertex-Split]
  \label{rule:vertex-split}
  Let $v\in V(G)$ and $v$ does not have connected neighborhood in $G$. 
  Let there be $t>1$ components in $G[N(v)]$ with vertex sets 
  $V_1, V_2,\ldots,V_t$. Introduce $t$ new vertices $v_1, v_2,\ldots,v_t$ and make
  $v_i$ adjacent to all vertices in $V_i$ for $1\leq i\leq t$. Delete $v$.
\end{drule}

An example of the application of vertex-split rule is depicted in Figure~\ref{fig:vs}.
We denote the set of vertices created by splitting $v$ by $V_v$.
Let $G'$ be the graph obtained by splitting a vertex $v$ in $G$.
For convenience, we identify an edge $(v, u)$ in $G$ with an edge $(v_j, u)$ in $G'$
where $u$ is in the $j^{th}$ component of $G[N(v)]$, so that for every set of 
edges $S$ in $G$, there is a corresponding set of edges in $G'$ and vice versa. 
We identify a set of vertices $V'\subseteq V(G)\setminus\{v\}$ with the corresponding vertices in $G'$.
Similarly, we identify $V'\subseteq V(G')\setminus V_v$ with the corresponding vertex set in $G$.
Before proving the safety of the rule, we prove two simple observations.

\begin{figure}[h]
\centering
\subfloat[Subfigure 1 list of figures text][A graph $G$ where a vertex $v$ has a disconnected neighborhood.]{
\resizebox{3cm}{!}{\input{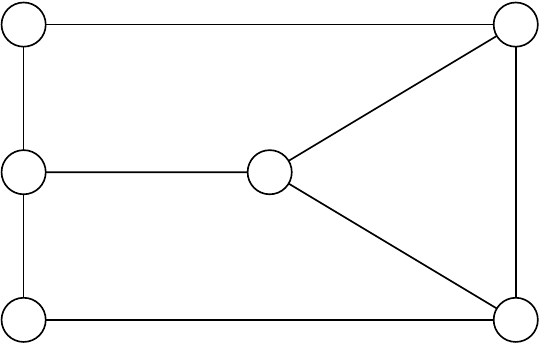_t}}
\label{fig:vs1}}
%\caption{}
\qquad
\subfloat[Subfigure 2 list of figures text][Vertex-split rule is applied at $v$.]{
\resizebox{3cm}{!}{\input{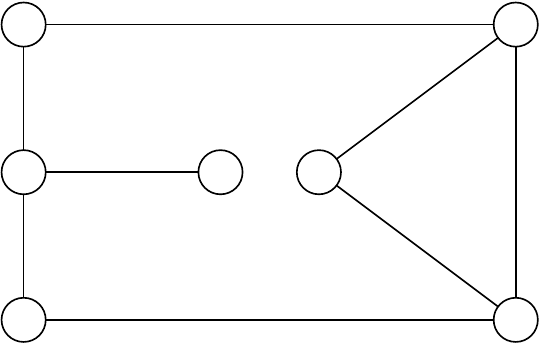_t}}
\label{fig:vs2}}
\caption{An application of vertex-split rule}
\label{fig:vs}
\end{figure}

\begin{observation}
  \label{obs:vertex-split}
  Let vertex-split rule be applied on $G$ to obtain $G'$. 
  Let $v\in V(G)$ be the vertex being split. 
  \begin{enumerate}[(i)]
  \item\label{item:obs:vertex-split-d4} Then, 
    for every pair of vertices $\{v_i,v_j\}\subseteq V_v$, the distance
    between $v_i$ and $v_j$ is at least four.
  \item\label{item:obs:vertex-split-cn}
    Let $u\in V(G)\setminus \{v\}$ and $u$ has connected neighborhood in $G$. 
    Then $u$ has connected neighborhood in $G'$ . Furthermore, every new vertex $v_i$ 
    introduced in $G'$ has connected neighborhood.
  \end{enumerate}
\end{observation}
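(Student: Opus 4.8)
The plan is to prove both parts by short case analyses that exploit the two structural features the split creates: the new vertices $v_1,\dots,v_t$ form an independent set, each $v_i$ is adjacent in $G'$ to exactly $V_i$, and the induced subgraph on $V(G)\setminus\{v\}$ is left untouched. The fact I would lean on repeatedly is that $V_1,\dots,V_t$ are distinct components of $G[N(v)]$, so there is no edge of $G$ (and hence none of $G'$) between $V_i$ and $V_j$ whenever $i\neq j$.

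For part (\ref{item:obs:vertex-split-d4}) I would simply rule out distances $1$, $2$ and $3$ between $v_i$ and $v_j$. Distance $1$ is impossible since $v_i v_j\notin E(G')$. A common neighbour would have to lie in $V_i\cap V_j=\emptyset$, so distance $2$ fails. A path of length three must read $v_i\,w_1\,w_2\,v_j$ with $w_1\in V_i$, $w_2\in V_j$ and $w_1 w_2\in E(G')$; but $w_1$ and $w_2$ lie in different components of $G[N(v)]$, so no such edge exists. Hence the distance is at least four.

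For part (\ref{item:obs:vertex-split-cn}) I would distinguish $u\notin N(v)$ from $u\in N(v)$. If $u\notin N(v)$, then $u$ is adjacent to none of the $v_i$ (adjacency to $v_i$ would force $u\in V_i\subseteq N(v)$), so $N_{G'}(u)=N_G(u)$ and $G'[N_{G'}(u)]=G[N_G(u)]$, which is connected by hypothesis. If $u\in N(v)$, say $u\in V_p$, then in $G'$ the role of $v$ inside $N(u)$ is taken over by $v_p$ and no other new vertex becomes adjacent to $u$, so $N_{G'}(u)=\{v_p\}\cup(N_G(u)\setminus\{v\})$. The crux is the observation that every neighbour of $u$ that lies in $N(v)$ already lies in the component $V_p$: such a vertex is adjacent to $u\in V_p$ inside $G[N(v)]$ and therefore belongs to $u$'s component. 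Consequently $v_p$ is adjacent in $G'$ to exactly the subset of $N_G(u)\setminus\{v\}$ that $v$ was adjacent to in $G$, so $G'[N_{G'}(u)]$ is isomorphic to $G[N_G(u)]$ under the map sending $v_p$ to $v$ and fixing everything else, and connectedness transfers. Finally, for a new vertex $v_i$ we have $N_{G'}(v_i)=V_i$, and $G'[V_i]=G[V_i]$ is connected because $V_i$ is a component of $G[N(v)]$.

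The only genuinely delicate step is the case $u\in V_p$: one must verify that $u$ has no neighbour sitting in a different component $V_q$, which is precisely the statement that neighbours of $u$ inside $G[N(v)]$ remain in $u$'s own component. This is what guarantees that $v_p$ inherits the entire connecting role that $v$ had. Everything else is bookkeeping about which edges survive the split.
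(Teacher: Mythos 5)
Your proof is correct and follows essentially the same route as the paper's: part (i) is the paper's contradiction argument (a shared or adjacent pair of neighbours would have merged $V_i$ and $V_j$ into one component) recast as ruling out distances $1$, $2$, $3$, and part (ii) uses the same case split on $u\in N(v)$ together with the same ``$v_p$ plays the role of $v$'' isomorphism. The only difference is that you spell out the key step the paper calls straightforward, namely that every neighbour of $u$ inside $N(v)$ lies in $u$'s own component $V_p$, which is a welcome addition rather than a deviation.
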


\begin{proof}
  (\ref{item:obs:vertex-split-d4}).
  Let $\{v_i,v_j\}\subseteq V_v$. Clearly, $v_i$ and $v_j$ are non-adjacent. Consider any two vertices $u_i\in N(v_i)$ and
  $u_j\in N(v_j)$. If $u_i= u_j$ or $u_i$ and $u_j$ are adjacent in $G'$, there would be only one vertex
  generated for the component containing $u_i$ and $u_j$ in $G[N(v)]$ by splitting $v$, which is a contradiction.
  It follows that
  the distance between $v_i$ and $v_j$ is at least four.

  (\ref{item:obs:vertex-split-cn}).
  If $v\notin N_G(u)$,
  then the neighborhood of $u$ is not disturbed by the rule and hence
  $u$ has connected neighborhood
  in $G'$. Let $v\in N(u)$. 
  Let $v_i$ be the vertex generated by splitting $v$ for the component in $G[N(v)]$ containing $u$.
  Since, there is only one new vertex introduced for a component of $G[N(v)]$, no other new vertex is adjacent to $u$ in $G'$. 
  It is straight-forward to verify that $v_i$ in $G'[N_{G'}(u)]$ plays the role of $v$ in
  $N_{G}(u)$ and hence $G[N_G(u)]$ and $G'[N_{G'}(u)]$ are isomorphic.
  To prove the last statement, we observe that, 
  since a new vertex is made adjacent to a component in the neighborhood of $v$, 
  every new vertex $v_j$ in $G'$ has connected neighborhood.
\end{proof}

\begin{lem}
  \label{lem:vertex-split}
  Vertex-split rule is safe and can be applied in polynomial time for any \pname{$\mathcal{H}$-free Edge Deletion}
  problem where every $H\in \mathcal{H}$ has diameter at most two and has 
  connected neighborhood.
\end{lem}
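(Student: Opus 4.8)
The plan is to establish safety through the natural, size-preserving correspondence between candidate solutions in $G$ and $G'$. Recall the edge identification introduced just before the rule: deleting $\{v,u\}$ with $u\in V_i$ in $G$ is the same as deleting $\{v_i,u\}$ in $G'$, and every other edge is common to both graphs, so each edge set $S$ in $G$ corresponds bijectively to an edge set $S'$ in $G'$ with $|S|=|S'|$. Hence it suffices to show that for every such pair, $G-S$ is $\mathcal{H}$-free iff $G'-S'$ is $\mathcal{H}$-free. Since the deletions on the two sides are identified, $G'-S'$ is exactly the graph obtained from $G-S$ by splitting $v$ along the parts $V_1,\ldots,V_t$ (restricted to the surviving neighbors of $v$); these parts remain pairwise non-adjacent in $G-S$ because $G-S$ has no more edges than $G$, and by Observation~\ref{obs:vertex-split}(i) the split vertices stay at pairwise distance at least four. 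Thus the whole statement reduces to the following claim about a single split.

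\textbf{Claim.} If $G'$ is obtained from $G$ by splitting $v$ along pairwise non-adjacent parts $V_1,\ldots,V_t$ of $N(v)$, then for any $H$ of diameter at most two with connected neighborhood, $G$ contains an induced $H$ iff $G'$ does. For the direction from $G$ to $G'$, suppose $G$ has an induced copy on a set $D$. If $v\notin D$ then $D$ uses only original vertices and $G[D]=G'[D]$. If $v\in D$, I would invoke the connected-neighborhood hypothesis: the set $D\cap N(v)$ is the neighborhood of the image of $v$ inside the induced copy, hence connected, and since distinct parts have no edges between them, a connected subset of $N(v)$ lies inside a single part $V_i$. (Note $D\cap N(v)\neq\emptyset$ when $|D|\ge 2$, as a graph of diameter at most two on at least two vertices is connected, so $v$ is not isolated in $G[D]$.) Replacing $v$ by $v_i$ yields $D'$ with $G'[D']\cong G[D]\cong H$, because $v_i$ is adjacent in $G'$ to exactly the vertices $D\cap N(v)\subseteq V_i$ that $v$ saw in $D$.

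For the reverse direction, suppose $G'$ has an induced copy on $D'$. By Observation~\ref{obs:vertex-split}(i) two split vertices are at distance at least four, while $G'[D']\cong H$ has diameter at most two, so $D'$ contains at most one split vertex. Moreover a split vertex $v_i$ and any $w\in V_j$ with $j\neq i$ are non-adjacent and share no common neighbor (a common neighbor would lie in $N(v_i)=V_i$ and be adjacent to $w\in V_j$, impossible as the parts are pairwise non-adjacent), hence are at distance at least three; so once $D'$ contains $v_i$ it meets no $V_j$ with $j\neq i$. If $D'$ has no split vertex it uses only original vertices and $G'[D']=G[D']$; if it has exactly one split vertex $v_i$, merging it back to $v$ gives $D=(D'\setminus\{v_i\})\cup\{v\}$ with $G[D]\cong G'[D']$, since in $D$ the vertex $v$ is adjacent to precisely the vertices of $V_i$ present (the same ones $v_i$ saw) and to no vertex of any other part. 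In both cases $G$ contains an induced $H$, completing the claim.

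Finally, the rule is applicable in polynomial time: for each vertex one computes the components of $G[N(v)]$ and splits the first vertex found with more than one component. The step I expect to be the crux is pinning down that an induced copy of $H$ cannot straddle two parts --- realized on the $G'$ side through the diameter bound together with Observation~\ref{obs:vertex-split}(i), and on the $G$ side through the connected-neighborhood hypothesis; once these are in place, the two reconstructions are immediate isomorphism checks, and it is exactly here that both structural hypotheses on $H$ are needed.
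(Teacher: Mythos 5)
Your proof is correct, and it takes a genuinely different route in how it organizes the argument, even though the core combinatorial ingredients (the distance bound of Observation~\ref{obs:vertex-split}(i), the diameter argument limiting the copy to one split vertex, and the swap isomorphism $v\leftrightarrow v_i$) are the same as the paper's. The paper works directly with the solution-deleted graphs: its forward direction (an induced copy of $H$ in $G'-S$ yields one in $G-S$) matches your ``split to original'' direction almost verbatim, but its converse direction reasons inside $G-S'$, where edges incident to $v$ may already have been deleted, and is therefore forced into a disjunctive case analysis --- either $v$'s neighborhood inside the copy is disconnected (violating the connected-neighborhood hypothesis) or some copy vertex in another component is at distance at least three from $v$ (violating the diameter bound). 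You avoid that disjunction by first commuting deletion with splitting: you observe that $G'-S'$ is itself a split of $G-S$ along the parts $V_p\cap N_{G-S}(v)$, which remain pairwise non-adjacent though they need no longer be components, and then prove a deletion-free claim in which the two hypotheses separate cleanly --- connected neighborhood alone handles ``original to split'' (the copy's neighborhood of $v$ is genuinely connected, hence lies in one part), and the diameter alone handles ``split to original.'' What your route buys is this cleaner separation, plus a slightly more general statement (splitting along any pairwise non-adjacent partition of $N(v)$); what it costs is the obligation to verify the commutation and to note that Observation~\ref{obs:vertex-split}(i), stated for splits along components, still holds for the generalized split --- which it does, since its proof uses only pairwise non-adjacency of the parts, and you in any case re-derive the needed distance bound inline. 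Both proofs are sound; yours makes more transparent exactly where each hypothesis on $H$ is used.
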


\begin{proof}
  Let $G'$ be obtained by applying vertex-split rule on a vertex $v$ of $G$. We claim 
  that $(G,k)$ is a yes-instance if and only if $(G',k)$ is a yes-instance.

  Let $(G,k)$ be a yes-instance. Let $S$ be a solution of size at most $k$ of $(G,k)$.
  For a contradiction, assume that $G'-S$ has an induced $H\in \mathcal{H}$ with a vertex set $D'$.
  Since $G-S$ is $H$-free, $D'$ must contain at least one newly created vertex $v_i$. 
  Since the diameter of $H$ is at most two, by 
  Observation~\ref{obs:vertex-split}(\ref{item:obs:vertex-split-d4}), $D'$ can contain at most
  one newly created vertex. 
  Hence, let $D'\cap V_v = \{v_i\}$. 
  Since $H$ is connected, $D'\cap N(v)\neq \emptyset$.
  Now, there are two cases and in each case we get a contradiction.
  \begin{enumerate}[(i)]
    \item 
      $D'\cap N(v)\subseteq V_i$: In this case, $v_i$ plays the role of $v$ and hence
      $(G-S)[(D'\setminus v_i)\cup \{v\}]$ and $(G'-S)[D']$ are isomorphic.
    \item 
      $D'$ contains vertices from multiple components of $N(v)$, i.e., 
      $(D'\cap N(v))\setminus V_i\neq \emptyset$: 
      Let $u_j\in D'\cap (N(v)\setminus V_i)$.
      Now, it is straight-forward to verify that the distance between $v_i$ and $u_j$
      is at least three in $(G'-S)[D']$, which is a contradiction to the fact that $H$ has 
      diameter at most two.
  \end{enumerate}  
  For the converse, let $S'$ be a solution of size at most $k$ of $(G',k)$.
  For a contradiction, assume that $G-S'$ has an induced $H\in \mathcal{H}$
  with a vertex set $D$. Clearly, $v\in D$. Since $H$ is connected, there 
  exists a $u_i\in D\cap N(v)$.
  Now there are two cases and in each case we get a contradiction.
  \begin{enumerate}[(i)]
    \item 
      $D\cap N(v)\subseteq V_i$: In this case, $(D\setminus \{v\})\cup \{v_i\}$
      induces $H$ in $G'-S'$, which is a contradiction. 
    \item 
      $D$ contains vertices from multiple components of $N(v)$:
      Let $u_j\in (D\cap N(v))\setminus V_i$.
      Here, it can be verified that either $v$ does not have connected neighborhood
      in $(G-S')[D]$ or the distance between either $v$ and $u_i$ or between 
      $v$ and $u_j$ is at least three, which is a contradiction.
  \end{enumerate}  
  It is straight-forward to verify that splitting a vertex can be done in linear time.
\end{proof}

\begin{cor}
  \label{cor:vertex-split}
  Vertex-split rule is safe and can be applied in polynomial time for the 
  problems \DED\ and \DKTED.
\end{cor}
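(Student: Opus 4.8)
The plan is to derive the corollary directly from Lemma~\ref{lem:vertex-split}, which already establishes safety and polynomial-time applicability of the vertex-split rule for any \pname{$\mathcal{H}$-free Edge Deletion} problem under the hypothesis that every $H\in\mathcal{H}$ has diameter at most two and connected neighborhood. Thus the entire task reduces to verifying these two structural conditions for the families $\CH$ associated with \DED\ and \DKTED; no new argument about the rule itself is needed.

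First I would handle \DED, where $\CH=\{\text{diamond}\}$. For the diameter bound, observe that in a diamond the only non-adjacent pair is the pair of degree-two vertices; each of them is adjacent to both degree-three vertices, so they lie at distance exactly two, while every other pair is adjacent. Hence the diamond has diameter two. For connected neighborhood, this is essentially the second property noted in the motivation: the neighborhood of each degree-three vertex induces a path on the remaining three vertices, and the neighborhood of each degree-two vertex induces the single edge joining the two degree-three vertices; both induced graphs are connected.

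Next I would treat \DKTED, where $\CH=\{\text{diamond},K_t\}$ with $t\geq 4$. The diamond is already covered by the previous paragraph, so it remains only to check $K_t$. Since $K_t$ is complete, any two of its vertices are adjacent and its diameter equals one, which is at most two. Moreover, the neighborhood of any vertex of $K_t$ induces $K_{t-1}$, which is connected because $t-1\geq 3$. Therefore both conditions hold for every $H\in\CH$ in this case as well.

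With both conditions verified in each case, Lemma~\ref{lem:vertex-split} applies verbatim and yields the claimed safety and polynomial-time applicability for \DED\ and \DKTED. There is no genuine obstacle in this proof; the only point requiring a moment of care is confirming the connected-neighborhood condition for the degree-two vertices of the diamond, where the induced neighborhood is a single edge rather than a larger connected subgraph, and confirming the $t\geq 4$ bound is comfortably enough to keep $K_{t-1}$ connected.
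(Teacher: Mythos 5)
Your proposal is correct and matches the paper's intent exactly: the corollary is stated as an immediate consequence of Lemma~\ref{lem:vertex-split}, and the only work is verifying that the diamond and $K_t$ ($t\geq 4$) each have diameter at most two and connected neighborhood, which you do accurately. The paper leaves this verification implicit, so your explicit check of the hypotheses is the same argument, just spelled out.
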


The next rule is a trivial one and the safety of it can be easily verified.

\begin{drule}[Irrelevant component]
  \label{rule:irrelevant-component}
  Let $G$ be an input to the rule, which is an input graph of an $\CHED$ problem.
  If a component of $G$ is $\CH$-free, then delete the component from $G$.
\end{drule}

\begin{lem}
  \label{lem:irrelevant-component}
  Irrelevant component rule is safe and can be applied in polynomial time for every \CHED\ problem.
\end{lem}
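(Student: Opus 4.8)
The plan is to prove the biconditional directly: writing $C$ for the $\CH$-free component that the rule deletes and $G'=G-V(C)$, I want to show that $(G,k)$ is a yes-instance if and only if $(G',k)$ is a yes-instance, and then argue the polynomial running time. The structural fact I would lean on throughout is that $C$ is a connected component, so there are no edges between $V(C)$ and $V(G')$. Consequently every edge set $S$ of $G$ splits as a disjoint union of its part inside $C$ and its part inside $G'$, and for any $W\subseteq V(G')$ the induced graph $(G-S)[W]$ coincides with $(G'-(S\cap E(G')))[W]$.

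For the forward direction I would take a solution $S$ of $(G,k)$ and set $S'=S\cap E(G')$. Then $|S'|\le |S|\le k$, and since $G'-S'=(G-S)[V(G')]$ is an induced subgraph of the $\CH$-free graph $G-S$, it is itself $\CH$-free; hence $S'$ solves $(G',k)$. This direction needs nothing beyond the fact that $\CH$-freeness is hereditary under induced subgraphs.

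The backward direction carries the real content. Given a solution $S'$ of $(G',k)$, I would reuse the same edge set as a candidate for $G$ (note $S'\subseteq E(G')$, so it is a legal set of edges of $G$). The graph $G-S'$ is the disjoint union of $G'-S'$, which is $\CH$-free by the choice of $S'$, and of $C$, which is untouched and $\CH$-free by the rule's precondition. The only way $G-S'$ could fail to be $\CH$-free is through a new induced copy of some $H\in\CH$ using vertices from both parts. Ruling this out is the main obstacle, and it is exactly where connectivity of the forbidden graphs is used: a connected induced subgraph must lie entirely within a single connected component of $G-S'$, and every such component is contained in either $V(G')$ or $V(C)$, both of which are $\CH$-free. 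For the problems of interest, \DED\ and \DKTED, every forbidden graph (the diamond, and $K_t$) is connected, so no boundary-crossing copy can arise and the step goes through; more generally the rule is safe whenever $\CH$ consists of connected graphs.

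Finally, for the running time I would compute the components of $G$ in linear time and, for each component, test $\CH$-freeness by brute force over all vertex subsets of size $|V(H)|$ for each $H\in\CH$. Since $\CH$ is a fixed finite set of constant-size graphs, this is polynomial, so the rule can be applied in polynomial time.
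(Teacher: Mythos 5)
Your proof is correct, and it supplies an argument that the paper never actually writes down: the paper dismisses the irrelevant component rule as ``trivial'' and states Lemma~\ref{lem:irrelevant-component} with no proof at all. Your decomposition is the natural one --- forward direction by restricting a solution to $E(G')$ and invoking heredity of $\CH$-freeness under induced subgraphs, backward direction by reusing $S'$ verbatim and confining any induced copy of a forbidden graph to a single connected component of $G-S'$ --- and your running-time argument (components in linear time, brute-force $\CH$-freeness test over subsets of constant size) is fine.

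More importantly, you have put your finger on a genuine subtlety that the paper's blanket statement glosses over. Your backward direction needs every $H\in\CH$ to be connected, and that hypothesis is not cosmetic: the lemma as literally stated, ``for every \CHED\ problem,'' is false without it. Take $\CH=\{2K_2\}$, let $G$ be the disjoint union of two edges, and let $k=0$. Each component is a single edge and hence $\CH$-free, so the rule may delete one of them; but $G$ itself is an induced $2K_2$, so $(G,0)$ is a no-instance while $(G',0)$ is a yes-instance. Since the paper only ever uses the rule through Corollary~\ref{cor:irrelevant-component}, for \DED\ and \DKTED, whose forbidden graphs (the diamond and $K_t$) are connected, nothing downstream breaks --- but your restriction to connected $\CH$ is the correct form of the lemma, not a deficiency of your proof.
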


\begin{cor}
  \label{cor:irrelevant-component}
  Irrelevant component rule is safe and can be applied in polynomial time for \DED\ and \DKTED.
\end{cor}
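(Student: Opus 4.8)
The plan is to derive the corollary directly from Lemma~\ref{lem:irrelevant-component}. First I would observe that both target problems are instances of \CHED\ with a finite forbidden set: \DED\ is the case $\CH=\{D\}$, where $D$ is the diamond, and \DKTED\ is the case $\CH=\{D,K_t\}$. Since each of these sets is finite, and Lemma~\ref{lem:irrelevant-component} asserts the statement for every \CHED\ problem, the corollary follows immediately. The only point worth isolating is the property of $\CH$ that underlies the safety argument of the lemma, namely that every graph in $\CH$ is connected; both the diamond and $K_t$ are connected, so this holds here.

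To make the dependence on connectivity explicit, I would recall the safety argument specialized to this setting, since the lemma's proof was merely asserted. Let $C$ be a $\CH$-free component of $G$ and let $G'=G-V(C)$. Because $C$ is a connected component, $G$ is the disjoint union of $C$ and $G'$, so $E(G)=E(C)\sqcup E(G')$, and since the diamond and $K_t$ are connected, every induced copy of a member of $\CH$ lies entirely inside a single component. For the forward direction, given a solution $S$ of $(G,k)$, I would set $S'=S\cap E(G')$; as $\CH$-freeness is hereditary and $G'-S'=(G-S)[V(G')]$, the graph $G'-S'$ is $\CH$-free with $|S'|\le k$. For the converse, given a solution $S'$ of $(G',k)$, the graph $G-S'$ is the disjoint union of the $\CH$-free graph $C$ and the $\CH$-free graph $G'-S'$, and since no connected forbidden subgraph can span two components, $G-S'$ is $\CH$-free as well.

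For the running time, I would note that $\CH$ is finite and every $H\in\CH$ has a fixed number of vertices, so testing whether a given component contains an induced copy of some $H\in\CH$ takes polynomial time by brute-force enumeration of vertex subsets of the relevant sizes; iterating over all components then identifies the $\CH$-free ones in polynomial time. There is essentially no real obstacle here: the argument is a direct specialization of Lemma~\ref{lem:irrelevant-component}, and the only subtle ingredient—that a connected forbidden subgraph cannot be split across two components—is automatically satisfied because both the diamond and $K_t$ are connected.
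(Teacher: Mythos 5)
Your proposal is correct and follows the paper's own route: the paper states Corollary~\ref{cor:irrelevant-component} as an immediate specialization of Lemma~\ref{lem:irrelevant-component} (whose proof it declares trivial and omits), which is exactly what you do. Your extra step of writing out the safety argument and isolating connectivity of the diamond and $K_t$ as the essential hypothesis is a sound filling-in of the omitted details --- and in fact a careful one, since the rule would fail if some graph in $\CH$ were disconnected.
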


Now, we are ready with the Phase 1 of the kernelization for \DED\ and \DKTED.

\defstage{Phase 1}
{ Let $(G,k)$ be an input to Phase 1.
  \begin{itemize}
  \item Exhaustively apply rules irrelevant edge,  sunflower, vertex split  and irrelevant component
    on $(G,k)$ to obtain $(G',k)$.
  \end{itemize}
}

\begin{lem}
  \label{lem:phase1-props}
  Let $(G,k)$ be an instance of \DED~(\DKTED).
  Let $(G',k')$ be obtained by applying Phase 1 on $(G,k)$. Then:
  \begin{enumerate}[(i)]
    \item \label{item:phase1-prop-cm} Every vertex and edge in $G$ is a core member.
    \item \label{item:phase1-prop-nh} $G'$ has connected neighborhood.
%    \item \label{item:phase1-prop-id} Every component in $G'$ has an induced diamond (diamond or $K_t$).
    \item \label{item:phase1-prop-bound} $|E(G')|\leq |E(G)|$ and $|V(G')|\leq 2|E(G)|$.
  \end{enumerate}
\end{lem}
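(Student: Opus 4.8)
The plan is to obtain all three properties from the single fact that, once Phase~1 halts, none of its four rules is applicable to the output graph; the bounds in (iii) then drop out of a handshake count. (I read $G'$ for $G$ in statement (i), since the claim clearly concerns the output.) For (i) I treat edges and vertices separately. Because the irrelevant-edge rule is no longer applicable, every edge of $G'$ is a core member, i.e.\ it lies in a (not necessarily induced) copy of some $H\in\mathcal H$. For a vertex $u$: since a vertex with no incident edge forms an $\mathcal H$-free component (every $H\in\mathcal H$ has an edge) and the irrelevant-component rule is inapplicable, $G'$ has no isolated vertices; hence $u$ has an incident edge $\{u,w\}$, which lies in some copy $H'\cong H$, and as both endpoints of an edge of $H'$ belong to $V(H')$, the vertex $u$ is a core member as well.

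Property (ii) is immediate from the inapplicability of vertex-split: that rule fires on any vertex with disconnected neighborhood, so at the fixpoint every vertex has connected neighborhood. For (iii) I first note that no rule increases the edge count---irrelevant-edge and sunflower each remove one edge, irrelevant-component removes an entire component, and vertex-split merely redistributes the edges incident to $v$ among $v_1,\dots,v_t$ without creating edges between them---so $|E(G')|\le|E(G)|$. The vertex bound then combines (i) with handshaking: by (i) every vertex of $G'$ lies in a connected copy of some $H$ on at least two vertices, hence has degree at least one, so $|V(G')|\le\sum_{v}\deg_{G'}(v)=2|E(G')|\le 2|E(G)|$.

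The point requiring the most care is that Phase~1 terminates, so that a fixpoint at which all four non-applicability statements hold simultaneously actually exists; this is not a priori clear, since vertex-split increases $|V|$ and a later edge deletion can re-disconnect a neighborhood and re-enable splitting. Here I would lean on Observation~\ref{obs:vertex-split}(\ref{item:obs:vertex-split-cn}): splitting $v$ keeps $G'[N_{G'}(u)]$ isomorphic to $G[N_G(u)]$ for every $u\ne v$ and gives each new vertex $v_i$ a connected neighborhood, so a single split strictly decreases the number of vertices with disconnected neighborhood and creates none. Since the edge count is non-increasing under every rule, the lexicographic measure (edge count, number of disconnected-neighborhood vertices, vertex count) strictly decreases at each step, which gives termination; all three claimed properties then follow by reading them off the halting configuration as above.
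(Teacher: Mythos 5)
Your proposal is correct and follows essentially the same route as the paper: properties (i) and (ii) are read directly off the non-applicability of the irrelevant-edge and vertex-split rules at the Phase~1 fixpoint, and (iii) off the fact that no rule increases the edge count. You additionally make explicit what the paper's terse proof leaves implicit --- the role of the irrelevant-component rule in excluding isolated vertices for (i), the degree-at-least-one handshake step needed for the vertex bound in (iii), and termination of Phase~1, which the paper establishes separately in Lemma~\ref{lem:phase1-safe} using the same measure idea (vertex-split strictly decreasing the number of disconnected-neighborhood vertices).
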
 

\begin{proof}
  (\ref{item:phase1-prop-cm}) and (\ref{item:phase1-prop-nh}) %and (\ref{label:phase1-prop-id})
  follow from the fact that irrelevant edge and vertex-split % and irrelevant component 
  rules are not applicable on $(G',k')$.
  (\ref{item:phase1-prop-bound}) follows from the fact that none of the rules 
  increases the number of edges in the graph.

\end{proof}

\begin{lem}
  \label{lem:phase1-safe}
  Applying Phase 1 is safe and Phase 1 runs in polynomial time.
\end{lem}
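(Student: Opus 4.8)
The plan is to treat the two assertions separately. For \textbf{safety}, I would observe that Phase~1 is merely a finite composition of individually safe rules. Each of the four rules it uses is safe for \DED\ and \DKTED\ by Corollary~\ref{cor:irrelevant-edge}, Lemma~\ref{lem:sunflower-middle}, Corollary~\ref{cor:vertex-split}, and Corollary~\ref{cor:irrelevant-component}; that is, a single application turns an instance into an equivalent one (a yes-instance if and only if a yes-instance). Since this equivalence is transitive, the instance $(G',k')$ produced by any finite sequence of applications is a yes-instance if and only if $(G,k)$ is. It then only remains to show that the exhaustive application halts after a polynomial number of steps, since each single application, as well as the test of whether a rule applies, runs in polynomial time by the same lemmas.

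For the \textbf{running time}, the subtle point, and the step I expect to be the main obstacle, is that the vertex-split rule does not shrink the graph: it removes one vertex but creates $t>1$ new ones, so a naive measure such as $|V(G)|+|E(G)|$ need not decrease and can in fact grow. I would instead use a lexicographic potential $\mu(G)=\big(|E(G)|,\,d(G),\,|V(G)|\big)$, where $d(G)$ is the number of vertices of $G$ with disconnected neighborhood, and argue that every rule application strictly decreases $\mu$. The irrelevant edge and sunflower rules strictly decrease $|E(G)|$, as they delete an edge and no rule ever adds one (Lemma~\ref{lem:phase1-props}(\ref{item:phase1-prop-bound})). The irrelevant component rule removes a connected component: if that component contains an edge it decreases $|E(G)|$, and otherwise it is a single isolated vertex, which leaves $|E(G)|$ unchanged, does not increase $d(G)$, and strictly decreases $|V(G)|$. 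The crucial case is vertex-split, which keeps $|E(G)|$ fixed and strictly decreases $d(G)$: by Observation~\ref{obs:vertex-split}(\ref{item:obs:vertex-split-cn}) every newly created vertex has connected neighborhood, and for every surviving vertex $u$ the graphs $G[N_G(u)]$ and $G'[N_{G'}(u)]$ are isomorphic, so the connected/disconnected status of each surviving vertex is preserved; thus splitting a single disconnected-neighborhood vertex lowers $d(G)$ by exactly one.

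Finally, I would bound $\mu$ polynomially, so that a strictly decreasing lexicographic potential can drop only polynomially often. All three coordinates stay $O(|V(G)|+|E(G)|)$ throughout, measured on the input $G$ to Phase~1: the first coordinate never exceeds the initial edge count; for the last coordinate, the key point is that vertex-split never creates an isolated vertex (each new $v_i$ is adjacent to the nonempty set $V_i$), so the number of non-isolated vertices is always at most $2|E(G)|$, while isolated vertices arise only initially or from edge deletions (at most two per deletion, and at most $|E(G)|$ deletions occur), keeping $|V(G)|$ polynomially bounded and $d(G)\le |V(G)|$. Hence $\mu$ ranges over a set of polynomial size, the number of rule applications is polynomially bounded, and together with the per-application polynomial time this gives that Phase~1 runs in polynomial time. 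The only genuinely delicate part is the preservation-of-connectivity bookkeeping in the vertex-split case; the remainder reduces to the already established per-rule lemmas.
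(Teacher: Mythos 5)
Your proposal is correct and follows essentially the same route as the paper: safety by composing the per-rule safety results, and termination from the same three monotonicity facts (the edge count never increases and drops with each irrelevant-edge/sunflower application; vertex-split strictly decreases the number of disconnected-neighborhood vertices by Observation~\ref{obs:vertex-split}(\ref{item:obs:vertex-split-cn}); irrelevant component creates no new work and only vertex-split adds vertices). Your lexicographic potential $\bigl(|E(G)|, d(G), |V(G)|\bigr)$ and the explicit isolated-vertex bookkeeping are just a cleaner, more rigorous packaging of the paper's informal nested counting (``between two edge deletions, only linearly many splits''), not a different argument.
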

\begin{proof}
  The safety follows from the safety of the rules being applied. 
  Single application of each rule can be done in polynomial time
  (Corollary~\ref{cor:irrelevant-edge}, Lemma~\ref{lem:sunflower-middle}, 
  Corollary~\ref{cor:vertex-split} and Corollary~\ref{cor:irrelevant-component}). 
  None of the rules increases the number of edges. Hence, number of applications of 
  irrelevant edge rule and sunflower rule is linear. An application of irrelevant component
  rule does not necessitate an application of vertex-split rule.
  By Observation~\ref{obs:vertex-split}(\ref{item:obs:vertex-split-cn}), 
  an application of vertex-split rule decreases the number of vertices with
  disconnected neighborhood. Hence, between two applications of either 
  irrelevant edge rule or sunflower rule, only a linear number of applications 
  of vertex-split rule is possible. Hence, vertex-split rule can be applied
  only polynomial number of times.
  Since, only vertex-split rule increases the number of vertices and there are 
  only polynomial many applications of it, irrelevant-component rule can be 
  applied only polynomial number of times.
\end{proof}
%%%%%%%%%%%%%%%%%%%%%%%%%%%%%%%%%%%%%%%%%%%%%%%%%%%%%%%%%%%%%%%%%%%%%%%%%%%%%%%%%%%%%%%%%%%%%%%%%

%\input{kernel2}
\subsection{Phase 2}
In this phase, we apply vertex modulator technique to complete
the kernelization of \DED\ and \DKTED.
We define a vertex modulator for \DED~(\DKTED) similar to that defined 
for \textsc{Trivially Perfect Editing}~\cite{DrangeP15polynomial}.

\begin{defn}[D-modulator]
  \label{def:modulator}
  Let $(G,k)$ be an instance of \DED~(\DKTED). 
  Let $V'\subseteq V(G)$ be such that $G - V'$ is diamond-free ($\{\text{diamond},K_t\}$-free).
  Then, $V'$ is called a D-modulator.
\end{defn}

Now we state a folklore characterization of diamond-free graphs.
\begin{pro}%[\cite{farrugia2002clique}]
  \label{pro:diamond-free}
  A graph $G$ is diamond-free if and only 
  if every edge in $G$ is a part of exactly one maximal clique.
\end{pro}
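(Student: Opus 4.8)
The plan is to prove the biconditional in Proposition~\ref{pro:diamond-free} by contraposition in one direction and by a direct argument in the other, exploiting the structure of a diamond as two triangles sharing the middle edge.

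First I would establish the forward direction: if $G$ is diamond-free, then every edge lies in exactly one maximal clique. Every edge $e=\{x,y\}$ lies in at least one maximal clique (extend $\{x,y\}$ greedily to a maximal clique). For uniqueness, suppose $e$ were contained in two distinct maximal cliques $C_1$ and $C_2$. Since both are maximal and distinct, there exist vertices $a\in C_1\setminus C_2$ and $b\in C_2\setminus C_1$. Here $a$ is adjacent to both $x$ and $y$ (as $a\in C_1\ni x,y$), and likewise $b$ is adjacent to both $x$ and $y$. The key step is to argue that $a$ and $b$ are non-adjacent: if $a$ were adjacent to $b$, then $a$ would be adjacent to every vertex of $C_2$, contradicting the maximality of $C_2$ (since $a\notin C_2$). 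With $a\not\sim b$, the four vertices $\{x,y,a,b\}$ induce a diamond, with $\{x,y\}$ as the middle edge, contradicting diamond-freeness.

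For the converse I would prove the contrapositive: if $G$ contains an induced diamond, then some edge lies in two distinct maximal cliques. An induced diamond on $\{x,y,a,b\}$ with middle edge $\{x,y\}$ gives two triangles $\{x,y,a\}$ and $\{x,y,b\}$ sharing $\{x,y\}$, with $a\not\sim b$. Extend each triangle to a maximal clique $C_a\supseteq\{x,y,a\}$ and $C_b\supseteq\{x,y,b\}$. Both contain $\{x,y\}$, so the edge $\{x,y\}$ is in two maximal cliques; these are distinct because $C_a$ contains $a$ while $C_b$ cannot (as $b\in C_b$ but $a\not\sim b$), so $a\neq b$ separates them. Hence $\{x,y\}$ lies in at least two maximal cliques, which is the negation of the right-hand side.

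I expect the main obstacle to be the uniqueness argument in the forward direction, specifically the careful invocation of maximality to force $a\not\sim b$ and thereby produce the induced (not merely subgraph) diamond; one must ensure the four chosen vertices induce exactly a diamond rather than a $K_4$ or something larger. This hinges on choosing $a,b$ in the symmetric differences of the two maximal cliques and on the observation that any common neighbor phenomenon collapses the two cliques into one. Since this is stated as a folklore characterization, the argument is short, and I would keep it terse, emphasizing the maximality step as the crux.
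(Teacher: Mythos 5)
The paper states this proposition as folklore and gives no proof at all, so there is nothing to compare against; your attempt has to stand on its own. Your converse direction is fine: given an induced diamond on $\{x,y,a,b\}$ with middle edge $\{x,y\}$, extending the two triangles to maximal cliques $C_a\supseteq\{x,y,a\}$ and $C_b\supseteq\{x,y,b\}$ and observing that $C_b$ cannot contain $a$ (because $b\in C_b$ and $a\not\sim b$) correctly produces two distinct maximal cliques through $\{x,y\}$.

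The forward direction, however, has a genuine gap at exactly the step you yourself flagged as the crux. You claim: \emph{if $a$ were adjacent to $b$, then $a$ would be adjacent to every vertex of $C_2$}, and you attribute this to the maximality of $C_2$. Maximality gives nothing of the sort: it only says that a vertex adjacent to \emph{all} of $C_2$ cannot lie outside $C_2$; it does not propagate adjacency from the single vertex $b$ to the rest of $C_2$. As a standalone implication about graphs it is false: let $C_2=\{x,y,b,c\}$ be a maximal clique and let $a$ be adjacent to $x,y,b$ but not to $c$; then $a\sim b$ yet $a$ misses a vertex of $C_2$. (That example is of course not diamond-free --- $\{x,y,a,c\}$ induces a diamond --- but this is precisely the point: the implication you invoke is itself an instance of the diamond argument, not a consequence of maximality, so as written the proof assumes what it needs to establish.) The repair is short and makes $b$ unnecessary: fix $a\in C_1\setminus C_2$. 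If $a$ is adjacent to every vertex of $C_2$, then $C_2\cup\{a\}$ is a clique, contradicting maximality of $C_2$. Otherwise there is $c\in C_2$ with $a\not\sim c$; since $a\sim x$ and $a\sim y$, we have $c\notin\{x,y\}$, and then $\{x,y,a,c\}$ induces a diamond with middle edge $\{x,y\}$, contradicting diamond-freeness. Alternatively, keep both sides symmetric: if every vertex of $C_1\setminus C_2$ were adjacent to every vertex of $C_2\setminus C_1$, then $C_1\cup C_2$ would be a clique properly containing $C_1$, contradicting maximality; hence some pair $a\in C_1\setminus C_2$, $b\in C_2\setminus C_1$ is non-adjacent, and $\{x,y,a,b\}$ is the induced diamond you wanted.
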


For a diamond-free graph $G$, since every edge is in exactly one maximal clique, 
there is a unique way of partitioning the edges 
into maximal cliques. For convenience, we call the set of 
subsets of vertices, where each subset is the vertex set of a maximal clique,
as a \emph{maximal clique partitioning}. 
We note that, one vertex may be a part of many sets in the partitioning.

\begin{lem}
  \label{lem:modulator}
  Let $(G,k)$ be an instance of \DED~(\DKTED). 
  Then, in polynomial time, the edge set $X$ of size at most $5k$~($t\cdot(t-1)\cdot k/2$) of a maximal 
  set of edge-disjoint diamonds (diamonds and $K_t$s), a D-modulator $V_X$ of size at most $4k$~($tk$) 
  and a maximal clique partitioning $\mathcal{C}$
  of $G-V_X$ can be obtained or 
  it can be declared that $(G,k)$ is a no-instance.
\end{lem}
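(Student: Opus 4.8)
The plan is to obtain all three objects from a single greedy packing of forbidden subgraphs. First I would greedily construct a maximal set $\mathcal{D}$ of pairwise edge-disjoint induced diamonds (for \DED) respectively edge-disjoint induced diamonds and induced $K_t$s (for \DKTED): starting from the empty collection, repeatedly search $G$ for an induced diamond (or induced $K_t$) none of whose edges is already used by a previously chosen obstruction, add it to $\mathcal{D}$, and mark its edges as used; stop when no such obstruction exists. Each search step can be done by examining all vertex subsets of size $4$ (respectively of size $t$), hence in time $n^{O(t)}$, and each successful step marks at least five fresh edges, so there are only $O(|E(G)|)$ iterations and the whole packing is computed in polynomial time. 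Let $X$ be the set of all edges occurring in obstructions of $\mathcal{D}$ and let $V_X=\bigcup_{D\in\mathcal{D}}V(D)$.

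Next I would use $\mathcal{D}$ for the no-instance test and the size bounds. Each obstruction in $\mathcal{D}$ is an \emph{induced} diamond or induced $K_t$, so any solution $S$ of $(G,k)$ must delete at least one of its edges (otherwise its vertex set would still induce the forbidden graph in $G-S$). Since the obstructions are pairwise edge-disjoint, these deleted edges are distinct, whence $|S|\geq|\mathcal{D}|$; thus if $|\mathcal{D}|>k$ we safely declare $(G,k)$ a no-instance. Otherwise $|\mathcal{D}|\leq k$. For \DED\ each obstruction is a diamond with exactly $5$ edges and $4$ vertices, so $|X|=5|\mathcal{D}|\leq 5k$ and $|V_X|\leq 4|\mathcal{D}|\leq 4k$. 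For \DKTED\ each obstruction has at most $t(t-1)/2$ edges and at most $t$ vertices (a diamond has $5\leq t(t-1)/2$ edges and $4\leq t$ vertices, as $t\geq 4$), so $|X|\leq t(t-1)k/2$ and $|V_X|\leq tk$.

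Then I would verify that $V_X$ is a D-modulator, which is exactly where maximality of $\mathcal{D}$ is used and the step I expect to require the most care. Suppose $G-V_X$ contained an induced diamond (or $K_t$) $D$. Every edge of every obstruction of $\mathcal{D}$ has both endpoints in $V_X$, whereas $V(D)\cap V_X=\emptyset$, so $D$ is edge-disjoint from all of $\mathcal{D}$; and $V(D)\cap V_X=\emptyset$ also guarantees that $D$ is induced in $G$, since $G[V(D)]=(G-V_X)[V(D)]$. Hence $D$ could have been added to $\mathcal{D}$, contradicting maximality. Therefore $G-V_X$ is diamond-free ($\{\text{diamond},K_t\}$-free), i.e. $V_X$ is a D-modulator.

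Finally I would compute the maximal clique partitioning $\mathcal{C}$ of $G-V_X$. By Proposition~\ref{pro:diamond-free} every edge of the diamond-free graph $G-V_X$ lies in a unique maximal clique, and for each edge $\{u,v\}$ this clique is precisely $\{u,v\}\cup(N(u)\cap N(v))$: in a diamond-free graph the common neighbourhood of an edge is itself a clique (two non-adjacent common neighbours would complete a diamond with $u$ and $v$), so $\{u,v\}\cup(N(u)\cap N(v))$ is a clique, and any clique containing $\{u,v\}$ can use only common neighbours of $u$ and $v$, so this one is maximal. Iterating over all edges and discarding duplicates yields $\mathcal{C}$ in polynomial time, completing the construction.
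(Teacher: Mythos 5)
Your proposal is correct and takes essentially the same approach as the paper: a greedy maximal packing of edge-disjoint induced diamonds (and $K_t$s), declaring a no-instance when the packing exceeds $k$, taking $V_X$ to be the set of endpoints of the packed edges so that maximality yields the modulator property, and extracting $\mathcal{C}$ via Proposition~\ref{pro:diamond-free}. The one difference worth noting is that you require each packed subgraph to be induced in $G$ itself (and merely edge-disjoint from earlier ones), whereas the paper packs subgraphs induced in $G-X$ after deleting already-used edges; your variant is the more careful one, since the lower-bound argument that every solution must hit every packed diamond genuinely needs the diamonds to be induced in $G$ (a diamond induced only in $G-X$ may span a $K_4$ of $G$, which a solution to \DED\ need not touch).
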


\begin{proof}
  We prove the lemma for \DED. Similar arguments apply for \DKTED.
  Let $X=\emptyset$. Include edges of any induced diamond of $G$ in $X$. 
  Then, iteratively include edges of any induced diamond of $G - X$ in $X$ 
  until $k+1$ iterations are completed or no more induced diamond is found in $G - X$. 
  If $k+1$ iterations are completed, then we can declare that the instance is a 
  no-instance as every solution must have at least one edge from every induced diamonds.
  If the number of iterations is less than $k+1$ such that there is no
  induced diamond in $G - X$, then $|X|\leq 5k$, as every diamond has five edges.
  Let $V_X$ be the 
  set of vertices incident to the edges in $X$. Then $|V_X|\leq 4k$, as every diamond has four vertices.
  Since $G - V_X$ has no induced diamond,
  $V_X$ is a D-modulator.  
  Since, there are only at most $k+1$ iterations and each iteration takes polynomial time, 
  this can be done in polynomial time. Since $G - V_X$ is diamond-free, by Proposition~\ref{pro:diamond-free},
  every edge in it is part of exactly one maximal clique.
  Now, the maximal clique partitioning $\mathcal{C}$ of $G - V_X$
  where each $C\in \mathcal{C}$ is a set of vertices of a maximal clique, 
  can be found by greedily obtaining the maximal cliques,
  which can be done in polynomial time.
\end{proof}

Let $(G,k)$ be an output of Phase 1. Here onward, we assume that $X$ is an edge set of 
the maximal set of edge-disjoint diamonds (diamonds and $K_t$s), 
$V_X$ is a D-modulator, which is the set of vertices incident to $X$ and 
$\mathcal{C}$ is the maximal clique partitioning of $G - V_X$.
Observation~\ref{obs:modulator-intersect} directly follows from the maximality of $X$.
Observation~\ref{obs:intersection} is found in Lemma 3.1 of \cite{cygan2015polynomial}. It was proved there,
if $G$ is $\{\text{claw, diamond}\}$-free, but is also applicable if $G$ is diamond-free.

\begin{observation}
  \label{obs:modulator-intersect}
  Every induced diamond (diamond and $K_t$) in $G$ has an edge in $X$.
\end{observation}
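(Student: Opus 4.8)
The plan is to read off the claim directly from the way $X$ was constructed in the proof of Lemma~\ref{lem:modulator}, exploiting the fact that edge deletion can only destroy edges, never create them. Recall that $X$ is obtained by repeatedly locating an induced diamond (and, in the \DKTED\ case, an induced $K_t$) in $G - X$ and throwing its edges into $X$, and that the process stops only when $G - X$ contains no induced diamond (respectively, no induced diamond and no induced $K_t$). So the single fact I would use is the termination condition: \emph{$G - X$ is diamond-free (respectively $\{\text{diamond},K_t\}$-free)}.

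The proof I would give is by contradiction. Suppose some induced diamond of $G$, on a vertex set $D$, has none of its five edges in $X$. I would then observe that $G - X$ is obtained from $G$ purely by removing edges, so every non-edge of $G$ remains a non-edge in $G - X$, while every edge of $G$ not lying in $X$ survives. Restricting to $D$, all five edges of the diamond are preserved (they avoid $X$ by assumption) and the unique non-edge of the diamond is untouched; hence $(G-X)[D]$ is identical to $G[D]$ and is still an induced diamond in $G - X$. This contradicts the termination condition, so every induced diamond of $G$ must contain an edge of $X$. The argument for an induced $K_t$ in the \DKTED\ case is the same: a $K_t$ has no non-edges, so if all its $\binom{t}{2}$ edges avoid $X$ they all survive in $G - X$, leaving an induced $K_t$ there and contradicting maximality.

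The only point that requires a moment's care — and the closest thing to an obstacle — is confirming that ``induced'' is preserved under edge deletion for the subgraphs in question. This is immediate once one notes that deleting edges is monotone on induced subgraph membership in exactly one direction: it can only remove edges from $G[D]$, so if none of the edges of the target pattern is deleted then $G[D]$ is unchanged on $D$. Since neither the diamond nor $K_t$ relies on any \emph{missing} edge that could spuriously appear, no subtlety arises, and the observation follows directly from the maximality of the edge-disjoint family defining $X$.
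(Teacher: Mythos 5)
Your proof is correct and matches the paper's reasoning: the paper simply asserts that the observation ``directly follows from the maximality of $X$,'' which is precisely the termination condition of the greedy construction in Lemma~\ref{lem:modulator} that you invoke (namely $G-X$ is diamond-free, respectively $\{\text{diamond},K_t\}$-free). You merely spell out the implicit step---that a diamond or $K_t$ whose edges all avoid $X$ survives unchanged as an induced subgraph of $G-X$, since deleting edges cannot create new adjacencies---which is exactly the intended argument.
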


\begin{observation}
  \label{obs:intersection}
  Let $C,C'\in \mathcal{C}$ and be distinct. Then:
  \begin{enumerate}[(i)]
    \item\label{label:obs-intersection-1} $|C\cap C'|\leq 1$.
    \item\label{label:obs-intersection-noedge} If $v\in C\cap C'$, then there is no edge between $C\setminus \{v\}$ and $C'\setminus \{v\}$.
  \end{enumerate}
\end{observation}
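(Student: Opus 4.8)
The plan is to lean on the structural characterization of Proposition~\ref{pro:diamond-free} together with the maximality of the cliques in $\mathcal{C}$. Recall that $\mathcal{C}$ is the maximal clique partitioning of the diamond-free graph $G - V_X$, so every edge of $G - V_X$ lies in exactly one member of $\mathcal{C}$. I would treat the two parts separately, proving (i) first and then invoking it in the proof of (ii).

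For part (\ref{label:obs-intersection-1}) I would argue directly by contradiction. Suppose $C$ and $C'$ share two distinct vertices $u$ and $w$. Since $C$ is a clique, $\{u,w\}$ is an edge, and this edge is contained in the maximal clique $C$; since $C'$ is also a clique containing both $u$ and $w$, the very same edge is contained in $C'$. But by Proposition~\ref{pro:diamond-free} every edge of $G - V_X$ lies in exactly one maximal clique, so $C = C'$, contradicting that they are distinct. Hence $|C \cap C'| \le 1$.

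For part (\ref{label:obs-intersection-noedge}) I would again argue by contradiction and the crux is to exhibit an induced diamond. Assume $v \in C \cap C'$ and that there is an edge $\{a,b\}$ with $a \in C \setminus \{v\}$ and $b \in C' \setminus \{v\}$. First note $a \ne b$, for otherwise $a$ would be a second vertex of $C \cap C'$, contradicting part (\ref{label:obs-intersection-1}); likewise $b \notin C$, since otherwise $b$ (which differs from $v$) would again be a second common vertex. Now $v$ is adjacent to $a$ (both lie in the clique $C$), $v$ is adjacent to $b$ (both lie in the clique $C'$), and $a$ is adjacent to $b$ by assumption, so $b$ is adjacent to both endpoints of the edge $\{v,a\}$. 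Since $C$ is the unique maximal clique containing $\{v,a\}$ and $b \notin C$, the set $C \cup \{b\}$ cannot be a clique (otherwise $C$ would not be maximal); hence there is a vertex $c \in C$ with $c$ not adjacent to $b$. This $c$ satisfies $c \ne v$ and $c \ne a$ precisely because $b$ is adjacent to both $v$ and $a$. Then $\{v,a,b,c\}$ induces a diamond in $G - V_X$: the pairs $\{v,a\},\{v,c\},\{a,c\}$ are edges because $v,a,c \in C$, the pair $\{v,b\}$ is an edge because $v,b \in C'$, the pair $\{a,b\}$ is an edge by assumption, and $\{b,c\}$ is a non-edge by the choice of $c$. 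This contradicts the diamond-freeness of $G - V_X$.

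The main obstacle is part (\ref{label:obs-intersection-noedge}): the key realization is that the maximality of $C$ as a clique is exactly what manufactures the missing fourth vertex $c$ needed to promote the triangle $\{v,a,b\}$ into an induced diamond. The only delicate bookkeeping is verifying $c \ne v$ and $c \ne a$, so that $\{v,a,b,c\}$ genuinely consists of four distinct vertices carrying the correct adjacencies; once that is in place, the contradiction with diamond-freeness is immediate.
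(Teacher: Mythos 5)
Your proof is correct, and part (\ref{label:obs-intersection-1}) coincides exactly with the paper's: two shared vertices would place a single edge inside two distinct maximal cliques, contradicting Proposition~\ref{pro:diamond-free}. For part (\ref{label:obs-intersection-noedge}), however, you take a genuinely different route. The paper stays inside the clique-partition characterization: since $y\notin C$, the triangle $\{x,v,y\}$ extends to a maximal clique distinct from $C$, so the edge $\{x,v\}$ lies in two maximal cliques, contradicting Proposition~\ref{pro:diamond-free} a second time. You instead work with diamond-freeness directly: you use the \emph{maximality} of $C$ to manufacture a vertex $c\in C$ non-adjacent to $b$, verify $c\notin\{v,a\}$, and exhibit $\{v,a,b,c\}$ as an explicit induced diamond in $G-V_X$. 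In effect you have inlined the proof of one direction of Proposition~\ref{pro:diamond-free}; this makes your argument more self-contained and elementary (the proposition is invoked only in part (\ref{label:obs-intersection-1})), at the cost of the distinctness and non-adjacency bookkeeping, whereas the paper's version is shorter precisely because the proposition packages that diamond construction once and for all. One small observation: your step establishing $b\notin C$ quietly relies on part (\ref{label:obs-intersection-1}), just as the paper's claim that $\{x,y\}$ is not an edge of the clique induced by $C$ does, so both proofs genuinely need (i) before (ii).
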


\begin{proof}
  (\ref{label:obs-intersection-1}).
  Assume that $x,y\in C\cap C'$. Then the edge $\{x,y\}$ is part of two maximal cliques, which is a 
  contradiction by Proposition~\ref{pro:diamond-free}.

  (\ref{label:obs-intersection-noedge}).
  Let $x\in C\setminus \{v\}$ and $y\in C'\setminus \{v\}$. Let $x$ and $y$ be adjacent. 
  Clearly, $\{x,y\}$ is not part of the clique induced by $C$. Now, $\{x,v\}$
  is part of not only the clique induced by $C$ but also a maximal clique containing
  $x,y$ and $v$, which is a contradiction.
\end{proof}

\begin{defn}[Local Vertex]
  \label{def:local}
  Let $G$ be a graph and $C\subseteq V(G)$ induces a clique in $G$. A vertex $v$ in $C$
  is called local to $C$ in $G$, if $N(v)\subseteq C$. 
\end{defn}

\begin{lem}
  \label{lemma:clique-reduce}
  Let $(G,k)$ be an instance of \DED~(\DKTED). Let $C$ be a clique with at least $2k+2$ vertices in $G$.
  \begin{enumerate}[(i)]
    \item\label{label:clique-reduce-size}
      Every solution $S$ of size at most $k$ of $(G,k)$ does not contain any edge $e$ where both 
      the end points of $e$ are in $C$.
    \item\label{label:clique-reduce-nodiamond}
      Let $C'\subseteq C$ be such that every vertex $v\in C'$ is local to $C$ in $G$. 
      Every induced diamond with vertex set $D$ in $G$ can contain at most one vertex in $C'$.
  \end{enumerate}
\end{lem}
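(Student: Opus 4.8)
The plan is to prove both parts by contradiction, each time exploiting that the target graph is diamond-free; for \DKTED\ this is used through the fact that a $\{\text{diamond},K_t\}$-free graph is in particular diamond-free, so the diamond-based arguments below apply verbatim to both problems.

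For part (i), I would suppose that some solution $S$ with $|S|\le k$ deletes an edge $e=\{x,y\}$ with $x,y\in C$, and derive a forbidden induced diamond in $G-S$. The key is a counting argument: the edges of $S$ are incident to at most $2|S|\le 2k$ vertices of $C$, so since $|C|\ge 2k+2$ there remain at least two vertices $a,b\in C$ incident to no edge of $S$, and these are necessarily distinct from $x$ and $y$ (which are incident to $e\in S$). Because $C$ is a clique, every pair among $x,y,a,b$ is an edge of $G$; since $a,b$ are untouched by $S$, every edge incident to $a$ or $b$ survives in $G-S$, while $\{x,y\}$ is deleted. Hence $G-S$ induces on $\{x,y,a,b\}$ exactly $K_4$ minus the edge $\{x,y\}$, i.e.\ an induced diamond, contradicting that $G-S$ is diamond-free. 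The only thing to get right here is the counting that guarantees two untouched vertices; the rest is routine.

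For part (ii), I would again argue by contradiction, assuming an induced diamond $D$ contains two vertices $u,w\in C'$. Since $u,w\in C'\subseteq C$ and $C$ is a clique, $u$ and $w$ are adjacent in $G$, hence also in the induced diamond $D$. The crucial structural fact is that a diamond's only non-adjacent pair is its two degree-two vertices; as $u\sim w$, at least one of them — say $u$, without loss of generality — must be one of the two degree-three vertices of $D$, and is therefore adjacent to all three other diamond vertices. Now I invoke locality: $u\in C'$ gives $N(u)\subseteq C$, so all three neighbours lie in $C$, and together with $u$ every vertex of $D$ lies in $C$. Since $C$ is a clique, $D$ then induces $K_4$ in $G$, contradicting that $D$ induces a diamond.

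I expect neither part to present a serious obstacle. The only points demanding care are, in part (i), confirming that the surviving vertices $a,b$ are genuinely distinct from the deleted edge's endpoints (immediate from the incidence with $e$), and in part (ii), ensuring the argument does not presuppose which of $u,w$ plays the degree-three role (handled by the symmetry noted above). It is worth observing that part (ii) does not actually use the size bound $|C|\ge 2k+2$ — it holds for any clique — and only part (i) relies on $C$ being large.
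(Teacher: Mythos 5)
Your proof is correct, and for part (i) it takes a genuinely different route from the paper. The paper proves (i) by a packing argument in the style of the sunflower rule: after deleting $e=\{x,y\}$, it takes a maximum matching $M$ in the clique $G[C\setminus\{x,y\}]$ (of size at least $k$ since $|C\setminus\{x,y\}|\geq 2k$), observes that each matching edge $\{a,b\}$ yields a diamond on $\{a,b,x,y\}$ and that these $k$ diamonds are pairwise edge-disjoint, so $S$ must contain one edge from each of them in addition to $e$, forcing $|S|\geq k+1$, a contradiction with $|S|\leq k$. You instead argue directly from the solution property: since the edges of $S$ touch at most $2k$ vertices while $|C|\geq 2k+2$, two vertices $a,b\in C$ are untouched by $S$ and distinct from $x,y$, so $\{x,y,a,b\}$ induces a diamond surviving in $G-S$, contradicting that $G-S$ is diamond-free (which covers \DKTED\ as well, since $\{\text{diamond},K_t\}$-free graphs are diamond-free). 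Both arguments are sound and use $|S|\leq k$ essentially; yours is more direct and exhibits a single violated diamond, whereas the paper's counts a packing and never needs to inspect $G-S$ itself. For part (ii) your argument is essentially the paper's: the paper notes that the two remaining diamond vertices must be adjacent to one of the two local vertices and hence lie in $C$, while you route through the observation that one of the two adjacent local vertices must be a degree-three vertex of the diamond, whose locality then pulls all of $D$ into $C$; either way $D$ induces a $K_4$, a contradiction. Your closing remark that (ii) needs no lower bound on $|C|$ is accurate and consistent with the paper's proof.
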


\begin{proof}
  (\ref{label:clique-reduce-size}).
  Let $e=\{x,y\}$ be an edge in $G$ such that $x,y\in C$. 
  Let $S$ be a solution of size at most $k$ of $(G,k)$ such that $e\in S$.
  Consider any two vertices $a,b\in C\setminus \{x,y\}$ (assuming $k$ is at least 1). Clearly, $\{a,b,x,y\}$
  induces a diamond in $G-e$. Consider a maximum matching $M$ of $G[C\setminus \{x,y\}]$.
  Since $C\setminus \{x,y\}$ induces a clique of size at least $2k$ in $G$, $|M|\geq k$.
  For any two edges $\{a,b\}, \{a',b'\} \in M$,
  the diamonds induced by $\{a,b,x,y\}$ and $\{a',b',x,y\}$
  are edge-disjoint. $S$ must contain one 
  edge from the diamonds corresponds each edge in $M$. Since $e\in S$, $|S|\geq k+1$,
  which is a contradiction.

  (\ref{label:clique-reduce-nodiamond}).
  For a contradiction, assume that $D$ induces a diamond in $G$ and $D$ contains two 
  vertices $\{x,y\}$ of $C'$. Let $a$ and $b$ be the other two vertices in $D$.
  Since $x$ and $y$ are local to $C$ in $G$, $a,b\in C$. Hence, $\{a,b,x,y\}$
  is a clique in $G$, which is a contradiction. 
\end{proof}

We note that the following lemma is applicable only for \DED.
\begin{lem}
  \label{lem:onlyded}
  Let $C'\subseteq C$ be such that every vertex $v\in C'$ is local to $C$ in $G$. Then, 
  for the problem \DED, it is safe
  to delete $\min\{|C'|-1, |C|-(2k+2)\}$ vertices of $C'$ in $G$.
\end{lem}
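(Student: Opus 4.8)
The plan is to show that deleting $r := \min\{|C'|-1,\, |C|-(2k+2)\}$ vertices of $C'$ is safe, i.e.\ that $(G,k)$ and $(G-W,k)$ are equivalent, where $W\subseteq C'$ is an arbitrary set of $r$ vertices (if $r\le 0$ the statement is vacuous, so assume $r\ge 1$). The first thing I would record is that every vertex $v\in C'$ is a true twin of the others: since $v$ is local to $C$ we have $N(v)\subseteq C$, and since $C$ is a clique we have $C\setminus\{v\}\subseteq N(v)$, so $N[v]=C$ for every $v\in C'$. Two facts fixed by the choice of $r$ will be used repeatedly: at least one vertex of $C'$ survives (because $r\le |C'|-1$), and the surviving part of the clique still has $|C\setminus W|=|C|-r\ge 2k+2$ vertices (because $r\le |C|-(2k+2)$), so Lemma~\ref{lemma:clique-reduce}(\ref{label:clique-reduce-size}) applies to $C\setminus W$ inside $G-W$.

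For the forward direction I would take a solution $S$ of $(G,k)$ and set $S':=S\cap E(G-W)$, discarding the solution edges incident to $W$. Then $|S'|\le k$, and any induced diamond of $(G-W)-S'$ would use only vertices outside $W$ and only edges of $E(G)\setminus S$, hence would already be an induced diamond of $G-S$, which is impossible. Thus $(G-W,k)$ is a yes-instance.

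The real work is the converse. Let $S'$ be a solution of $(G-W,k)$, viewed now as an edge set of $G$. Since $W$ is not present in $G-W$, $S'$ contains no edge incident to $W$; and since $|C\setminus W|\ge 2k+2$, Lemma~\ref{lemma:clique-reduce}(\ref{label:clique-reduce-size}) guarantees $S'$ contains no edge inside $C\setminus W$, in particular none incident to a surviving twin. I claim $G-S'$ is diamond-free. If not, let $D$ induce a diamond in $G-S'$; as $(G-W)-S'$ is diamond-free, $D$ must contain some $w\in W$. Using $N_{G-S'}(w)=N_G(w)=C\setminus\{w\}$ (no solution edge touches $w$) together with the fact that a diamond has exactly one non-adjacent pair and minimum degree two, a short case analysis (in the spirit of Lemma~\ref{lemma:clique-reduce}(\ref{label:clique-reduce-nodiamond})) forces $D=\{w,a,b,c\}$ with $a,b\in C$ of degree three, $w$ and $c$ of degree two, and $c\notin C$ (so $c\notin N(w)$). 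Note that $a,b$ are adjacent to $c\notin C$, hence are not local, so $a,b\notin C'\supseteq W$; and of course $c\notin C\supseteq W$. In particular the edges $ab,ac,bc$ lie outside $S'$ and $D$ meets $W$ in exactly $\{w\}$.

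To finish I would pick a surviving twin $w'\in C'\setminus W$ and swap it for $w$. Because $w'$ is a twin it is adjacent to $a,b\in C$ and non-adjacent to $c\notin C$; because $S'$ carries no edge incident to $w'$, the edges $w'a,w'b$ survive in $(G-W)-S'$; and $ab,ac,bc$ survive there as well since $a,b,c\notin W$. Hence $\{w',a,b,c\}$ induces a diamond in $(G-W)-S'$, contradicting that $S'$ is a solution. Therefore $G-S'$ is diamond-free and $(G,k)$ is a yes-instance. I expect the main obstacle to be exactly this structural step in the converse: ruling out a second vertex of $W$ in $D$ and locating the non-neighbour $c$ strictly outside $C$, since this is where the twin property $N[v]=C$, the $2k+2$ clique bound, and Lemma~\ref{lemma:clique-reduce} must be combined. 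Once the shape of $D$ is pinned down, the twin swap itself is routine.
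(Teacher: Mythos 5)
Your proof is correct and follows essentially the same route as the paper's: the forward direction is the trivial induced-subgraph observation, and the converse shows that any diamond $D$ in $G-S'$ must meet the deleted set in exactly one vertex (the paper cites Lemma~\ref{lemma:clique-reduce}(\ref{label:clique-reduce-nodiamond}) where you redo the shape analysis inline) and then swaps that vertex for a surviving true twin to exhibit a diamond in $(G-W)-S'$. Your write-up is somewhat more explicit than the paper's on why no solution edge touches the surviving twin, but the underlying argument is the same.
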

\begin{proof}
  Let $G'$ be obtained by deleting a set $C''$ of $t$ vertices of $C'$ from $G$ such that 
  $t = \min\{|C'|-1, |C|-(2k+2)\}$.
  We need to prove that $(G,k)$ is a yes-instance if and only if $(G',k)$ is a yes-instance.
  Let $S$ be a solution of size at most $k$ of $(G,k)$. 
  Since $G'-S$ is an induced subgraph of $G-S$, and $G-S$ is diamond-free, we obtain that
  $G'-S$ is diamond-free.
  Conversely, let $S'$ be a solution of size at most $k$ of $(G',k)$. 
  We claim that $S'$ is a solution of $(G,k)$. Assume not.
  Let $G-S'$ has an induced diamond with a vertex set $D$. Since $|C|-|C''|\geq 2k+2$,
  by \ref{lemma:clique-reduce}(\ref{label:clique-reduce-size}),  
  $S'$ does not contain any edge in the clique induced by $C\setminus C''$ in $G'$. 
  Now there are three cases:
  \begin{enumerate}[(a).]
  \item $C''\cap D=\emptyset$: In this case $D$ induces a diamond in $G'-S'$, which is a contradiction.
  \item $C''\cap D=\{v\}$: We observe that we retained at least one vertex $u$ of $C'$ in $G'$. 
    By \ref{lemma:clique-reduce}(\ref{label:clique-reduce-nodiamond}), 
    $D$ does not contain any other vertex from $C'$.
    Then, $D\cup \{u\}\setminus \{v\}$ induces a diamond in $G'-S'$.
  \item $|C''\cap D|\geq 2$: This case is not possible by 
    \ref{lemma:clique-reduce}(\ref{label:clique-reduce-nodiamond}).
  \end{enumerate}
\end{proof}

We define $\CC_i\subseteq \CC$ as the set of sets of vertices of the maximal cliques with 
exactly $i$ vertices. Similarly, $\CC_{\geq i}\subseteq \CC$ denote the set of sets of 
vertices of the maximal cliques with at least $i$ vertices.

%We partition $\mathcal{C}$ into three - $\mathcal{C}_1, \mathcal{C}_2$ and 
%$\mathcal{C}_{\geq 3}$, the sets of vertices of maximal cliques with one, 
%two and three or more vertices respectively. 
The first in the following observation has been proved in Lemma 3.2 
in \cite{cygan2015polynomial} in the context where $G - V_X$ is $\{\text{diamond, claw}\}$-free. 
Here we prove it in the context where $G - V_X$ is diamond-free.

\begin{observation}
  \label{obs:abd}
  Let $C\in \mathcal{C}$. Then:
  \begin{enumerate}[(i)]
    \item\label{label:obs-abd-ac}
      If there is a vertex $v\in V_X$ such that $v$ is adjacent to at least two vertices in $C$,
      then $v$ is adjacent to all vertices in $C$.
    \item\label{label:obs-abd-bc} A vertex in $V(G)\setminus (V_X\cup C)$ is adjacent to at most one vertex in $C$.
  \end{enumerate}
\end{observation}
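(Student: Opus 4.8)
The plan is to prove both parts by contradiction, exploiting that every edge of $X$ has both of its endpoints in $V_X$ while the cliques of $\mathcal{C}$ live entirely in $G - V_X$. For part (\ref{label:obs-abd-ac}), I would suppose some $v \in V_X$ is adjacent to two vertices $x, y \in C$ but misses a third vertex $z \in C$; if $v$ is adjacent to at least two but not all of $C$, such a $z$ exists. Since $x, y, z$ lie in the clique $C$ they are pairwise adjacent, so together with the edges $vx, vy$ and the non-edge $vz$ the set $\{v, x, y, z\}$ induces a diamond in $G$. The crucial point is that this diamond cannot contain any edge of $X$: among its vertices only $v$ belongs to $V_X$, since $x, y, z$ lie in $C \subseteq V(G - V_X)$, so each of the five edges $vx, vy, xy, xz, yz$ has an endpoint outside $V_X$ and hence lies outside $X$. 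This contradicts Observation~\ref{obs:modulator-intersect}, which forces every induced diamond of $G$ to meet $X$.

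For part (\ref{label:obs-abd-bc}), I would take $u \in V(G) \setminus (V_X \cup C)$ and suppose, for contradiction, that $u$ is adjacent to two distinct vertices $x, y \in C$. Then $u, x, y$ all lie in $G - V_X$ and are pairwise adjacent, forming a triangle there. For any $z \in C \setminus \{x, y\}$, if $u$ were non-adjacent to $z$ then $\{u, x, y, z\}$ would induce a diamond inside the diamond-free graph $G - V_X$, which is impossible; hence $u$ is adjacent to $z$. As $z$ is arbitrary, $u$ is adjacent to every vertex of $C$, so $C \cup \{u\}$ is a clique strictly larger than $C$, contradicting the maximality of $C$ in $G - V_X$.

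I expect the only delicate point to be in part (\ref{label:obs-abd-ac}): the diamond produced there lives in $G$, which is \emph{not} diamond-free, so the contradiction cannot come from diamond-freeness directly. Instead it must come from the placement of the vertices relative to the modulator --- because $X$-edges are internal to $V_X$ while $C$ is disjoint from $V_X$, the diamond is forced to avoid $X$ entirely, which is exactly what Observation~\ref{obs:modulator-intersect} forbids. Part (\ref{label:obs-abd-bc}), by contrast, stays inside $G - V_X$ and is a routine application of diamond-freeness together with the maximality of $C$.
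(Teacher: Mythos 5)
Your proof is correct and takes essentially the same approach as the paper: part (i) is the identical argument — the diamond $\{v,x,y,z\}$ has all its edges incident to vertices of $C\subseteq V(G)\setminus V_X$, so it is edge-disjoint from $X$, contradicting Observation~\ref{obs:modulator-intersect}. Part (ii) uses the same two ingredients as the paper (a diamond inside the diamond-free graph $G-V_X$, and the maximality of $C$), only reorganized: the paper splits into the cases ``adjacent to all of $C$'' and ``adjacent to some but not all,'' whereas you first rule out non-adjacency to any third vertex and then invoke maximality.
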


\begin{proof}
  (\ref{label:obs-abd-ac}).
  Let $v$ is adjacent to two vertices in $x,y$ in $C$ but not adjacent to $z\in C$. Then $\{x,y,v,z\}$ induces a
  diamond such that none of the edges of the diamond is in $X$.

  (\ref{label:obs-abd-bc}).
  Assume that a vertex $u\in V(G)\setminus (V_X\cup C)$ 
  is adjacent to all vertices in $C$. This contradicts with the fact that $C$ induces a maximal clique in $G - V_X$. Let $u$
  be adjacent to at least two vertices $\{a,b\}$ in $C$ and non-adjacent to at least one vertex $v\in C$.
  Then $\{a,b,u,v\}$ induces a diamond where none of the edges of the diamond is in $X$.
\end{proof}

Consider $C\in \mathcal{C}$. We define three sets of vertices in $G$ based on $C$. 
\begin{eqnarray*}
A_C &=& \{v\in V_X: v\ \text{is adjacent\ to\ all\ vertices\ in}\ C\}\\
B_C &=& \{v\in V(G)\setminus (V_X\cup C): v\ \text{is adjacent\ to\ exactly\ one\ vertex\ in}\ C\}\\
D_C &=& \{v\in V_X: v\ \text{is adjacent\ to\ exactly\ one\ vertex\ in}\ C\}
\end{eqnarray*}
For a vertex $v\in C$, let $B_v$ denote the set of all vertices in $B_C$ adjacent to $v$. 
Similarly let $D_v$ denote the set of all vertices in $D_C$ adjacent to $v$.

\begin{observation}
  \label{obs:abdmore}
  Let $C\in \mathcal{C}$. Then,
  \begin{enumerate}[(i)]
  \item \label{label:obs-abdmore-all}
    The set of vertices in $V(G)\setminus C$ adjacent to at least one vertex in $C$ is $A_C\cup B_C\cup D_C$.
  \item \label{label:obs-abdmore-ac}
    If $|C|>1$, then $A_C$ induces a clique in $G$.
  \item \label{label:obs-abdmore-disjoint}
    For two vertices $u,v\in C$, $B_u\cap B_v=\emptyset$ and $D_u\cap D_v=\emptyset$.
  \end{enumerate}
\end{observation}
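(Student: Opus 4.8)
The plan is to derive all three parts directly from the definitions of $A_C$, $B_C$, and $D_C$, together with Observation~\ref{obs:abd} and the maximality of $X$ recorded in Observation~\ref{obs:modulator-intersect}. Throughout I will use the fact that $C\subseteq V(G)\setminus V_X$, since $C$ is a maximal clique of $G-V_X$; in particular $C\cap V_X=\emptyset$, and every edge of $X$ has both endpoints in $V_X$ (as $V_X$ is exactly the set of vertices incident to $X$).

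For part~(\ref{label:obs-abdmore-all}), I would take an arbitrary $u\in V(G)\setminus C$ that is adjacent to at least one vertex of $C$ and split into two cases according to whether $u\in V_X$. If $u\in V_X$, then Observation~\ref{obs:abd}(\ref{label:obs-abd-ac}) forces $u$ to be adjacent either to all of $C$ (so $u\in A_C$) or to exactly one vertex of $C$ (so $u\in D_C$). If $u\notin V_X$, then Observation~\ref{obs:abd}(\ref{label:obs-abd-bc}) bounds its neighbours in $C$ by one, and since it has at least one neighbour it has exactly one, giving $u\in B_C$. The reverse inclusion is immediate, because each of $A_C,B_C,D_C$ is by definition a set of vertices outside $C$ each having a neighbour in $C$.

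The substantive step is part~(\ref{label:obs-abdmore-ac}). Assuming $|C|>1$, I would pick distinct $u,v\in A_C$ and suppose, for contradiction, that $u$ and $v$ are non-adjacent. Choosing any two distinct $x,y\in C$, the four vertices $\{u,v,x,y\}$ induce a diamond: the edges $ux,uy,vx,vy$ exist because $u,v$ are adjacent to all of $C$, the edge $xy$ exists because $C$ is a clique, and $uv$ is the unique non-edge, with middle edge $xy$. By Observation~\ref{obs:modulator-intersect} this diamond must contain an edge of $X$. But $x,y\notin V_X$, so $xy\notin X$, and none of $ux,uy,vx,vy$ lies in $X$ either, since each of these four edges has an endpoint outside $V_X$. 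This contradiction shows $u$ and $v$ are adjacent, so $A_C$ induces a clique. The main obstacle here is precisely this bookkeeping: one must verify that \emph{no} edge of the constructed diamond can belong to $X$, which relies on both endpoints of every $X$-edge lying in $V_X$ while $C$ avoids $V_X$.

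Part~(\ref{label:obs-abdmore-disjoint}) is then immediate for distinct $u,v\in C$: a vertex in $B_u\cap B_v$ would lie in $B_C$ yet be adjacent to two vertices of $C$, contradicting the definition of $B_C$ as the vertices with exactly one neighbour in $C$; the same argument with $D_C$ in place of $B_C$ handles $D_u\cap D_v$.
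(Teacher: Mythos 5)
Your proposal is correct and follows essentially the same route as the paper: part~(\ref{label:obs-abdmore-all}) and part~(\ref{label:obs-abdmore-disjoint}) from the definitions together with Observation~\ref{obs:abd}, and part~(\ref{label:obs-abdmore-ac}) by taking a non-adjacent pair in $A_C$ and two vertices of $C$ to form an induced diamond edge-disjoint from $X$, contradicting Observation~\ref{obs:modulator-intersect}. The only difference is that you spell out the bookkeeping (every edge of $X$ lies inside $V_X$ while $C\cap V_X=\emptyset$) that the paper leaves implicit in the phrase ``a diamond with no edge in $X$.''
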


\begin{proof}
  (\ref{label:obs-abdmore-all}) directly follows from Observation~\ref{obs:abd}.

  (\ref{label:obs-abdmore-ac}).
  Assume not. Let $a$ and $b$ be two non-adjacent vertices in $A_C$. 
  By Observation~\ref{obs:abd}(\ref{label:obs-abd-ac}), both $a$ and $b$ are 
  adjacent to all vertices in $C$. Consider any two vertices $x,y\in C$.
  $\{x,y,a,b\}$ induces a diamond with no edge in $X$, which is a contradiction.

  (\ref{label:obs-abdmore-disjoint}) directly follows from the definition of $B_C$ and $D_C$.
\end{proof}

\begin{lem}
  \label{lemma:abd-connected}
  Let $v\in C\in \CC$. If $B_v$ is non-empty then $D_v$ is non-empty.
\end{lem}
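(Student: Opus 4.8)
The plan is to analyze $N(v)$ using the decomposition supplied by Observation~\ref{obs:abdmore} and to exploit the connected-neighborhood property guaranteed by Phase~1 (Lemma~\ref{lem:phase1-props}(\ref{item:phase1-prop-nh})). Since $v\in C$, every neighbour of $v$ outside $C$ is adjacent to at least one vertex of $C$, so by Observation~\ref{obs:abdmore}(\ref{label:obs-abdmore-all}) we have $N(v)=(C\setminus\{v\})\cup A_C\cup B_v\cup D_v$. First I would record that $|C|\geq 2$: a vertex $b\in B_v$ lies in $V(G)\setminus(V_X\cup C)$ and is adjacent to $v$, so $v$ has a neighbour outside $C$ in $G-V_X$; were $C=\{v\}$ it would not be a maximal clique, a contradiction. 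Hence $C\setminus\{v\}\neq\emptyset$.

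The heart of the argument is the following claim: there is no edge between $A_C$ and $B_v$. Suppose $a\in A_C$ and $b'\in B_v$ were adjacent, and fix any $c\in C\setminus\{v\}$. Then $a$ is adjacent to both $v$ and $c$ (being adjacent to all of $C$), $b'$ is adjacent to $v$ but not to $c$ (meeting $C$ only in $v$), and $\{v,c\}$ and $\{a,b'\}$ are edges; consequently $\{v,c,a,b'\}$ induces a diamond whose unique missing edge is $\{c,b'\}$. Its only vertex lying in $V_X$ is $a$, so no edge of this diamond has both endpoints in $V_X$, and therefore no edge of it belongs to $X$. This contradicts Observation~\ref{obs:modulator-intersect}, proving the claim.

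With the claim in hand the conclusion is immediate from connectivity. In $G[N(v)]$ the set $B_v$ has no edge to $C\setminus\{v\}$ (by the definition of $B_v$) and no edge to $A_C$ (by the claim). Since $B_v\neq\emptyset$ and $N(v)\setminus B_v\supseteq C\setminus\{v\}\neq\emptyset$, the connectedness of $G[N(v)]$ forces at least one edge from $B_v$ to $N(v)\setminus B_v=(C\setminus\{v\})\cup A_C\cup D_v$; as the first two parts are excluded, this edge must land in $D_v$, whence $D_v\neq\emptyset$.

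I expect the main obstacle to be the claim itself: one must verify carefully that the four chosen vertices really induce a diamond (checking the single non-edge $\{c,b'\}$ together with the five present edges) and that none of its edges can lie in $X$. The latter is exactly where the fact that every edge of $X$ is incident to two vertices of $V_X$, combined with $v,c\notin V_X$ and $b'\notin V_X$, does the work; the former is where the ``adjacent to all of $C$'' condition defining $A_C$ and the ``exactly one neighbour in $C$'' condition defining $B_v$ are used. Everything else reduces to bookkeeping on the partition $N(v)=(C\setminus\{v\})\cup A_C\cup B_v\cup D_v$, and the same argument applies verbatim to \DKTED\ since Observation~\ref{obs:modulator-intersect} still yields an $X$-edge for every induced diamond.
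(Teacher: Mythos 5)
Your proof is correct and follows essentially the same route as the paper's: the same decomposition $N(v)=(C\setminus\{v\})\cup A_C\cup B_v\cup D_v$, the same appeal to connectedness of $G[N(v)]$, and the same contradiction via a diamond $\{a,v,c,b'\}$ that is edge-disjoint from $X$ (the paper's $\{v_a,v,v',v_b\}$), invoking Observation~\ref{obs:modulator-intersect}. The only difference is organizational --- you first rule out all $A_C$--$B_v$ edges and then apply connectivity, whereas the paper first picks a boundary vertex of $B_v$ by connectivity and then derives the diamond contradiction --- and you spell out the $|C|=1$ case that the paper dismisses as trivial.
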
 

\begin{proof}
  The statement is trivially true if $|C|=1$. Hence assume that $|C|\geq 2$.
  Since $v$ has connected neighborhood, $G[N(v)]$ is connected. We observe that
  $N(v)=A_C\cup B_v\cup D_v\cup (C\setminus \{v\})$. Assume $B_v$ is non-empty.
  By Observation~\ref{obs:abd}(\ref{label:obs-abd-bc}), there is no edge between the sets $B_v$ and $C\setminus \{v\}$. 
  Consider a vertex $v_b\in B_v$ adjacent to $A_C\cup D_v$. Assume $v_b$ is not adjacent to 
  $D_v$. Then $v_b$ must be adjacent to a vertex $v_a\in A_C$. Let $v'$ be any other vertex in $C$.
  Then $\{v_a,v,v',v_b\}$ induces a diamond which has no edge intersection with $X$.
  Therefore $v_b$ must be adjacent to a vertex in $D_v$.
\end{proof}

\begin{observation}
  \label{obs:twox}
  Let $C\in \mathcal{C}$. Then there are two adjacent vertices $x$ and $y$ such that $x\in A_C$ and $y\in A_C\cup D_C$.
\end{observation}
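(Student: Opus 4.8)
The plan is to first show that $A_C$ is non-empty and then to exhibit the required edge by inspecting the connected neighborhood of a single vertex of $A_C$. For the non-emptiness, I would dispose of $|C|=1$ first: if $C=\{v\}$ then, as $\{v\}$ is a maximal clique of $G-V_X$, every neighbor of $v$ lies in $V_X$, so $A_C=D_C=N(v)$ and $B_C=\emptyset$; since $v$ is a core member of the Phase~1 output it lies in a subgraph isomorphic to a diamond or a $K_t$, each of minimum degree two, so $\deg(v)\geq 2$ and $A_C=N(v)\neq\emptyset$. For $|C|\geq 2$ I would argue by contradiction: if $A_C=\emptyset$, then for each $v\in C$ we have $N(v)=B_v\cup D_v\cup(C\setminus\{v\})$, and the vertices of $B_v\cup D_v$ are adjacent to exactly one vertex of $C$ (namely $v$), hence to no vertex of the non-empty clique $C\setminus\{v\}$. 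With no vertex of $A_C$ to bridge them, $G[N(v)]$ would be disconnected unless $B_v\cup D_v=\emptyset$; as the Phase~1 output has connected neighborhood (Lemma~\ref{lem:phase1-props}), we get $B_v\cup D_v=\emptyset$ for every $v\in C$, so $C$ is a connected component inducing a clique. For \DED\ this component is diamond-free, and for \DKTED\ it induces a clique on fewer than $t$ vertices (an induced $K_t$ inside $C$ would, by Observation~\ref{obs:modulator-intersect}, contain an edge of $X$, impossible since $C\cap V_X=\emptyset$), hence is $\{\text{diamond},K_t\}$-free; either way the irrelevant component rule would have removed it, a contradiction. So $A_C\neq\emptyset$.

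Next I would fix any $a\in A_C$, so that $C\subseteq N(a)$ and $C$ induces a clique. Because $a\in A_C\subseteq V_X$, it is incident to an edge of $X$, whose other endpoint also lies in $V_X$; as $C\cap V_X=\emptyset$, this endpoint lies in $N(a)\setminus C$, so $N(a)\setminus C\neq\emptyset$. Now $G[N(a)]$ is connected (Lemma~\ref{lem:phase1-props}) and is partitioned into the two non-empty sets $C$ and $N(a)\setminus C$, so there is a crossing edge: some $r\in N(a)\setminus C$ adjacent to some $c\in C$. Since $r\notin C$ is adjacent to a vertex of $C$, Observation~\ref{obs:abdmore}(\ref{label:obs-abdmore-all}) gives $r\in A_C\cup B_C\cup D_C$. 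As $r$ is also adjacent to $a$, it remains only to show $r\notin B_C$, and then $x=a$, $y=r$ is the desired pair.

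The crux is ruling out $r\in B_C$, and this is where Observation~\ref{obs:modulator-intersect} enters. If $r\in B_C$ then $r\notin V_X$ and $r$ is adjacent to exactly one vertex of $C$, namely $c$; choosing any $c'\in C\setminus\{c\}$ (possible as $|C|\geq 2$), the set $\{a,c,c',r\}$ induces a diamond, since the edges $ac,ac',cc',ar,cr$ are present while $c'r$ is absent. Every edge of this diamond has at most one endpoint in $V_X$ (only $a$ among the four lies in $V_X$), so none of its edges is in $X$, contradicting Observation~\ref{obs:modulator-intersect}. Hence $r\in A_C\cup D_C$. The only case not covered by this diamond is $|C|=1$, where $B_C=\emptyset$ outright, so again $r\in A_C\cup D_C$. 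I expect this final step---forcing the crossing edge guaranteed by connectedness to land in $A_C\cup D_C$ rather than in $B_C$---to be the main obstacle, and the induced-diamond-without-an-$X$-edge argument is precisely what resolves it.
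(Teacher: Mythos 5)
Your proof is correct, and while its first half coincides with the paper's, its second half takes a genuinely different route. Like the paper, you establish $A_C\neq\emptyset$ by handling $|C|=1$ through the core-member property and $|C|\geq 2$ through connected neighborhood plus the irrelevant component rule (your $K_t$-freeness argument for the clique component via Observation~\ref{obs:modulator-intersect} is a minor variant; the paper could equally cite that $G-V_X$ is a D-modulator). The divergence is in producing the adjacent pair. The paper splits on $|A_C|$: if $|A_C|\geq 2$ it is done because $A_C$ is a clique (Observation~\ref{obs:abdmore}(\ref{label:obs-abdmore-ac})), and if $A_C=\{x\}$ it invokes Lemma~\ref{lemma:abd-connected} to conclude $D_C\neq\emptyset$ --- notably without spelling out why $x$ is actually \emph{adjacent} to a vertex of $D_C$, which requires a further connectivity-plus-diamond argument that the paper leaves implicit. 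You instead anchor the whole argument at a single $a\in A_C$: the definition of $V_X$ guarantees $a$ has a neighbor outside $C$ (the other endpoint of its $X$-edge), connectedness of $G[N(a)]$ (Lemma~\ref{lem:phase1-props}) yields a crossing edge to some $r$ adjacent to $C$, and the trichotomy $r\in A_C\cup B_C\cup D_C$ is resolved by exhibiting an induced diamond $\{a,c,c',r\}$ with no edge in $X$, contradicting Observation~\ref{obs:modulator-intersect}, if $r\in B_C$. That diamond is essentially the one inside the paper's proof of Lemma~\ref{lemma:abd-connected}, so you have re-derived that lemma's content inline, but applied in the neighborhood of the modulator vertex $a$ rather than of a clique vertex $v$. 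What this buys: your argument is uniform over $|A_C|$, avoids Lemma~\ref{lemma:abd-connected} and Observation~\ref{obs:abdmore}(\ref{label:obs-abdmore-ac}) as prerequisites, and explicitly delivers the adjacent pair $\{a,r\}$, thereby closing the small elision in the paper's $|A_C|=1$ case; the paper's version, in exchange, is shorter on the page because it reuses lemmas it needs elsewhere anyway.
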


\begin{proof}
\textbf{Case 1:} $C=\{v\}\in \mathcal{C}_1$. Since $\{v\}\in \mathcal{C}_1$, $v$ is 
      not adjacent to any vertex in $V(G)\setminus V_X$. Since $v$ is a core member,
      $v$ is part of an induced diamond or $K_4$ in $G$. Hence there exist two adjacent vertices 
      $x,y\in A_C$.

\begin{comment}
\textbf{Case 2:} $C=\{u,v\}\in \mathcal{C}_2$. Since the edge $\{u,v\}$ is a core member,
      it is part of some induced diamond or $K_4$ in $G$. Let $a,b$ be the other two
      vertices in an induced diamond or $K_4$ in which $\{u,v\}$ is a part. If both $a,b\in V(G)\setminus V_X$,
      then it contradicts with either the maximality of $X$ (if $a,b,u$ and $v$ induce a diamond) or with
      the fact that $\{u,v\}$ is part of exactly one maximal clique $C$ (if $a,b,u$ and $v$ induce a $K_4$). 
      Let $a\in V_X$ and $b\in V(G)\setminus V_X$.
      Then, if $a,b,u$ and $v$ induces a diamond, then it contradicts with the maximality of $X$.
      If $a,b,u$ and $v$ induces a $K_4$, then $u,v$ and $b$ induce a $K_3$ which contradicts with the 
      fact that $\{u,v\}$ is a part of exactly one maximal clique. Hence $a,b\in V_X$. Since $a,b,u,v$
      induce a diamond or a $K_4$, one of $a,b$ must be adjacent to both $u$ and $v$ and the other 
      vertex must be adjacent to at least one of $u$ and $v$.
\end{comment}
\textbf{Case 2:} $|C|\geq 2$. 
      Assume that $|A_C|=0$. If $B_C\cup D_C = \emptyset$, 
      then by Observation~\ref{obs:abdmore}(\ref{label:obs-abdmore-all}), the clique $C$
      is a component in $G$. Then, irrelevant component rule is applicable.
      Hence $B_C\cup D_C$ is non-empty. 
      Consider a vertex $v\in C$ such that $B_v\cup D_v$ is non-empty. 
      By Observation~\ref{obs:abdmore}(\ref{label:obs-abdmore-disjoint}), $B_v\cup D_v$
      is not adjacent to any vertex in $C\setminus \{v\}$. Hence, $G[N(v)]$ has 
      at least two components, one from $B_v\cup D_v$ and the other from $C$, which contradicts with 
      the fact that $v$ has connected neighborhood. 
      Hence, $|A_C|>0$. Assume $|A_C=\{x\}|=1$. For a contradiction, assume that $D_C=\emptyset$. Then Lemma~\ref{lemma:abd-connected}
      implies that $B_C$ is empty. Then $x$ does not have connected neighborhood or $C\cup \{x\}$ induces an irrelevant 
      component, which are contradictions. 
      Hence, $D_C$ is non-empty. 
      If $|A_C|\geq 2$, then we are done by Observation~\ref{obs:abdmore}(\ref{label:obs-abdmore-ac}).
\end{proof}

%We need the following lemma, only in the case of \DED.
\begin{lem}
  \label{lemma:manylocal}
  In the context of \DED, let $C\in \mathcal{C}_{\geq 3}$. Then, the number of vertices 
  in $C$ which are adjacent to at least one vertex in $B_C\cup D_C$ is at most $4k-1$.
\end{lem}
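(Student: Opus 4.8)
The plan is to recast the quantity we must bound as $|\{v\in C : D_v\neq\emptyset\}|$, bound that by $|D_C|$ using disjointness, and finally bound $|D_C|$ by the modulator size, shaving off one unit using the guaranteed presence of an $A_C$-vertex.

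First I would show that every vertex $v\in C$ adjacent to at least one vertex of $B_C\cup D_C$ actually satisfies $D_v\neq\emptyset$. If $v$ is adjacent to a vertex of $D_C$, then $D_v\neq\emptyset$ by definition. If instead $v$ is adjacent only to vertices of $B_C$, i.e.\ $B_v\neq\emptyset$, then Lemma~\ref{lemma:abd-connected} forces $D_v\neq\emptyset$ as well. Hence the set of vertices to be counted is precisely $\{v\in C: D_v\neq\emptyset\}$. By Observation~\ref{obs:abdmore}(\ref{label:obs-abdmore-disjoint}), the sets $D_v$ for $v\in C$ are pairwise disjoint subsets of $D_C\subseteq V_X$, so each such $v$ consumes at least one distinct vertex of $D_C$; therefore the count is at most $|D_C|$.

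It then remains to prove $|D_C|\leq 4k-1$. Since $C\in\mathcal{C}_{\geq 3}$ has $|C|\geq 3>1$, the sets $A_C$ and $D_C$ are disjoint subsets of $V_X$ (a vertex of $A_C$ is adjacent to all $\geq 3$ vertices of $C$, while a vertex of $D_C$ is adjacent to exactly one). By Observation~\ref{obs:twox} there is a vertex $x\in A_C$, so $|A_C|\geq 1$, and combining with the bound $|V_X|\leq 4k$ from Lemma~\ref{lem:modulator} yields $|D_C|\leq |V_X|-|A_C|\leq 4k-1$.

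The main obstacle—really the only non-routine point—is the final saving of one. Everything up to $|D_C|$ is a clean disjointness/counting argument over the modulator, but obtaining $4k-1$ rather than $4k$ relies essentially on Observation~\ref{obs:twox}, which certifies that $A_C$ is nonempty and hence that $D_C$ cannot exhaust all of $V_X$. I would therefore be careful to invoke Observation~\ref{obs:twox} in exactly the regime it is stated ($|C|\geq 2$, which holds here), and to note that $A_C\cap D_C=\emptyset$ uses $|C|>1$ so that the two membership conditions are genuinely incompatible.
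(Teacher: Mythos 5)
Your proof is correct and follows essentially the same route as the paper's: reduce the count to vertices $v$ with $D_v\neq\emptyset$ via Lemma~\ref{lemma:abd-connected}, bound it by $|D_C|$ using the disjointness in Observation~\ref{obs:abdmore}(\ref{label:obs-abdmore-disjoint}), and obtain $|D_C|\leq 4k-1$ from $|A_C|\geq 1$ (Observation~\ref{obs:twox}) together with $|V_X|\leq 4k$. Your only addition is making explicit that $A_C\cap D_C=\emptyset$ when $|C|>1$, a point the paper leaves implicit.
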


\begin{proof}
%  We prove the statement in the context of \DED. Similar arguments apply for \DKTED.
  By Observation~\ref{obs:twox}, $|A_C|\geq 1$.
  Since $|V_X|\leq 4k$, $|D_C|\leq 4k-1$. Let $C'$ be the set of vertices in $C$ which are 
  adjacent to $B_C\cup D_C$. For every vertex $v\in C'$, by Lemma~\ref{lemma:abd-connected}, if $B_v$
  is non-empty, then $D_v$ is non-empty. Since $v\in C'$, if $B_v$ is empty, then also $D_v$ is non-empty.
  For any two vertices $v,u\in C'$, by Observation~\ref{obs:abdmore}(\ref{label:obs-abdmore-disjoint}), 
  $D_u\cap D_v=\emptyset$. Therefore $|C'|\leq |D_C|\leq 4k-1$.
\end{proof}

Now, we state the last rule of the kernelization. We apply this rule 
only for \DED.

\begin{drule}[Clique Reduction]
  \label{rule:clique-reduce}
  Let $C\in \mathcal{C}_{\geq 3}$ be such that $|C|>4k$. Let $C'$ be $C\cup A_C$. Let $C''$ be the set of vertices in $C$
  which are local to $C'$. Then, delete any $|C''|-1$ vertices from $C''$.
\end{drule}

Clique reduction rule helps us to reduce the size of large
cliques in the clique partitioning in the context of \DED. 
This rule is not required for \DKTED\ as the size of the cliques in the clique partitioning
is already bounded. 

\begin{observation}
  \label{obs:clique-bound}
  After the application of clique reduction rule, the number of vertices retained in $C$ is at most $4k$. 
\end{observation}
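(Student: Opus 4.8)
The plan is to track exactly which vertices of $C$ survive the deletion and to argue that they are few. The rule removes $|C''|-1$ vertices of $C''$ (all lying in $C$), so the number of vertices retained in $C$ is $|C|-(|C''|-1) = |C \setminus C''| + 1$. Thus it suffices to bound $|C \setminus C''|$, and the natural route is to identify $C \setminus C''$ with a set that Lemma~\ref{lemma:manylocal} already controls.

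First I would characterize $C''$, the set of vertices of $C$ that are local to $C'=C\cup A_C$. For $v\in C$, all neighbors of $v$ inside $C$ lie in $C'$, and by Observation~\ref{obs:abdmore}(\ref{label:obs-abdmore-all}) its neighbors outside $C$ lie in $A_C\cup B_v\cup D_v$. Now $A_C\subseteq C'$, whereas $B_v$ and $D_v$ are disjoint from $C'$: indeed $B_C\subseteq V(G)\setminus(V_X\cup C)$ is disjoint from both $C$ and $A_C\subseteq V_X$, and since $|C|\geq 3\geq 2$ the sets $A_C$ (adjacent to all of $C$) and $D_C$ (adjacent to exactly one vertex of $C$) are disjoint. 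Hence $v$ is local to $C'$ precisely when $B_v\cup D_v=\emptyset$, i.e.\ when $v$ has no neighbor in $B_C\cup D_C$. Consequently $C\setminus C''$ is exactly the set of vertices of $C$ adjacent to at least one vertex of $B_C\cup D_C$, which for $C\in\mathcal{C}_{\geq 3}$ has size at most $4k-1$ by Lemma~\ref{lemma:manylocal}.

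Putting these together gives the count: the number of vertices retained in $C$ is $|C\setminus C''|+1\leq (4k-1)+1 = 4k$, as claimed. The only delicate step is the first one—correctly equating $C''$ with the vertices of $C$ having no neighbor in $B_C\cup D_C$—since it hinges on the fact that $A_C$, $B_C$, $D_C$ exhaust the external neighborhood of $C$ and that $A_C$ is disjoint from both $B_C$ and $D_C$ for cliques of size at least two. Once that identification is secured, everything else is an immediate application of Lemma~\ref{lemma:manylocal} together with a one-line arithmetic.
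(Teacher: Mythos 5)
Your proof is correct and follows essentially the same route as the paper: bound the vertices of $C$ not local to $C'$ via Lemma~\ref{lemma:manylocal}, and note that the rule keeps exactly one local vertex, giving $(4k-1)+1=4k$. The only difference is that you explicitly justify the identification of ``not local to $C'$'' with ``adjacent to some vertex of $B_C\cup D_C$'' (using Observation~\ref{obs:abdmore} and the disjointness of $A_C$, $B_C$, $D_C$), a step the paper's proof treats as immediate.
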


\begin{proof}
  By Lemma~\ref{lemma:manylocal}, the number of vertices in $C$ which are not local to $C'$ is at most $4k-1$.
  Hence, the rest of the vertices in $C$ are local to $C'$ in $G$. If $|C|>4k$, clique reduction rule
  retains only one local vertex and delete all other vertices in $C$ local to $C'$. 
\end{proof}

\begin{lem}
  \label{lem:rule-clique-reduce-safe}
  Clique reduction rule is safe for \DED\ and can be applied in polynomial time.
\end{lem}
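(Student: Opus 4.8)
The plan is to treat $C''$ as a module of mutually interchangeable twins and to show that keeping a single representative is safe. First I would record the structure of $C''$. Since $C\in\mathcal{C}_{\ge 3}$ we have $|C|>1$, so by Observation~\ref{obs:abdmore}(\ref{label:obs-abdmore-ac}) $A_C$ induces a clique, and as every vertex of $A_C$ is adjacent to all of $C$, the set $C'=C\cup A_C$ induces a clique with $|C'|\ge|C|>4k\ge 2k+2$ (for $k\ge 1$). By definition every $v\in C''$ satisfies $N(v)\subseteq C'$, and since $v\in C'$ and $C'$ is a clique this forces $N(v)=C'\setminus\{v\}$; hence the vertices of $C''$ are pairwise true twins, each simplicial with closed neighbourhood the clique $C'$. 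I would then note the key combinatorial fact: a simplicial vertex can never be a degree-three vertex of an induced diamond (its two tips would be non-adjacent neighbours), so in any induced diamond a $C''$-vertex can only occur as a tip; since the two tips of a diamond are non-adjacent while two $C''$-vertices are adjacent, \emph{any induced diamond meets $C''$ in at most one vertex}. Consequently, writing $R=C'\setminus C''$, every diamond through a vertex $v\in C''$ has the form $\{v,r,z,w\}$ with middle edge $r$-$z$, where $r,z\in R$ (the middles are adjacent to the external tip, so cannot lie in $C''$) and the opposite tip $w$ lies outside $C'$.

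The forward direction is immediate: if $S$ solves $(G,k)$ then $G'-(S\cap E(G'))$ is an induced subgraph of the diamond-free graph $G-S$, so $(G',k)$ is a yes-instance. For the converse I would take a solution $S'$ of $(G',k)$ and a would-be diamond $\Delta$ of $G-S'$ and aim for a contradiction. If $\Delta$ avoids the deleted set $\widehat{C}=C''\setminus\{u\}$ then it already lives in $G'-S'$. Otherwise, by the at-most-one property, $\Delta\cap\widehat{C}=\{v\}$ with $v$ a tip, $\Delta=\{v,r,z,w\}$, $r,z\in R$, $w\notin C'$. Replacing $v$ by the retained representative $u$ yields $\{u,r,z,w\}$, which is again an induced diamond of $G'$ because $u$ and $v$ are twins; since the three edges $r$-$z$, $r$-$w$, $z$-$w$ survive in $G-S'$ (they belong to $\Delta$), the only way $S'$ can have destroyed $\{u,r,z,w\}$ in $G'$ is by deleting $u$-$r$ or $u$-$z$.

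The hard part, and the real content of the lemma, is exactly this last point: a solution of $(G',k)$ is allowed to destroy a diamond through $u$ by cutting an edge incident to $u$, and such a cut does nothing for the identically-shaped diamonds $\{v,r,z,w\}$ sitting on the deleted twins $v\in\widehat{C}$ in $G$ (whose incident edges $S'$ never touches). I would therefore not argue by naive substitution but by an exchange argument showing that $(G',k)$ admits a solution of the same size using \emph{no} edge incident to $u$. Two facts drive this. First, by Lemma~\ref{lemma:clique-reduce}(\ref{label:clique-reduce-size}) applied to the clique $C'$ of size at least $2k+2$, every solution of $(G,k)$ avoids all edges internal to $C'$; hence over a fixed core triangle $\{r,z,w\}$ the whole twin-family $\{\,\{w',r,z,w\}:w'\in C''\,\}$ can only be eliminated simultaneously by deleting one of the \emph{external} core edges $r$-$w$ or $z$-$w$. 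Second, because the sunflower rule (Rule~\ref{rule:sunflower-middle}) is no longer applicable, the number of external common neighbours $w$ of a fixed pair $r,z\in R$ is bounded, so only few cores are attached to any single edge incident to $u$. Using these I would reroute each $u$-incident deletion of $S'$ onto external core edges, producing a solution $S''$ of $(G',k)$ with $|S''|\le|S'|\le k$ and $S''$ incident to no edge of $u$.

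For such an $S''$ the substitution goes through unobstructed: if $G-S''$ had a diamond meeting $\widehat{C}$, say $\{v,r,z,w\}$ as above, then its three core edges lie outside $S''$ and $S''$ touches neither $u$-$r$ nor $u$-$z$, so $\{u,r,z,w\}$ would be an induced diamond surviving in $G'-S''$, contradicting that $S''$ is a solution; and a diamond avoiding $\widehat{C}$ survives in $G'-S''$ directly. Hence $G-S''$ is diamond-free and, since $S''\subseteq E(G')\subseteq E(G)$ with $|S''|\le k$, the instance $(G,k)$ is a yes-instance. Finally, identifying $C$, $A_C$, and $C''$ and performing the deletions is clearly polynomial, which gives the running-time claim. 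I expect the reroute/exchange of the third paragraph to be the main obstacle, since it is precisely where the interplay between the large clique $C'$ and the twin symmetry must be quantified.
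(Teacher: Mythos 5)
Your setup (true twins, a diamond meets $C''$ in at most one vertex, the one $C''$-vertex is a tip with both middles in $C'\setminus C''$ and the other tip outside $C'$) is correct and matches Lemma~\ref{lemma:clique-reduce}. But the proof stops being a proof exactly at the step you yourself flag as the main obstacle: the reroute/exchange argument is never carried out, and as described it is not budget-preserving. A single deleted edge $ur$, $r\in R$, can be the \emph{only} reason that all diamonds $\{u,r,z,w\}$ — over all $z\in R$ and all external common neighbours $w$ of $r,z$ — are absent from $G'-S'$; in $G'$ the family sitting on a core triangle $\{r,z,w\}$ may consist of $u$ alone, so $S'$ need not contain any external core edge for any of these cores, and rerouting $ur$ would then cost one external edge \emph{per core}. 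The sunflower bound you invoke (Rule~\ref{rule:sunflower-middle} being inapplicable) limits the number of cores per pair $r,z$ to about $k$, not to one, so it does not rescue a one-for-one exchange. Moreover, your appeal to Lemma~\ref{lemma:clique-reduce}(\ref{label:clique-reduce-size}) is made for solutions of $(G,k)$; that constrains the instance you are trying to solve, not the solution $S'$ you are trying to modify, so it gives no leverage on $S'$ at all.

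The paper closes this point by a much shorter route, which you missed: it cites Lemma~\ref{lem:onlyded}, whose proof applies Lemma~\ref{lemma:clique-reduce}(\ref{label:clique-reduce-size}) \emph{in the reduced instance} $(G',k)$, to the clique that survives the deletion, namely $A_C\cup\{u\}\cup(C\setminus C'')$. If that surviving clique has at least $2k+2$ vertices, then no solution $S'$ of $(G',k)$ of size at most $k$ contains any edge with both endpoints in it; since $N_{G'}(u)$ lies entirely inside this clique, $S'$ touches no edge incident to $u$, and your naive substitution of $u$ for $v$ goes through with no exchange whatsoever. One caveat is worth recording: this argument needs the surviving clique to be large, which is precisely what the deletion bound $\min\{|C''|-1,\ |C'|-(2k+2)\}$ in Lemma~\ref{lem:onlyded} guarantees, whereas Rule~\ref{rule:clique-reduce} deletes $|C''|-1$ vertices, which exceeds that bound whenever $|A_C|$ plus the number of non-local vertices of $C$ is below $2k+1$. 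So your instinct that something genuinely remains to be argued in the small-surviving-clique regime is legitimate — the paper's two-line proof silently lives in the large-clique regime — but your proposal does not close that regime either, and within the regime the paper does cover, its big-clique argument in $G'$ makes your entire third paragraph unnecessary.
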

\begin{proof}
  The safety of the rule follows from Lemma~\ref{lem:onlyded}. It is straight-forward
  to verify that the rule can be applied in polynomial-time.
\end{proof}

Now we give the kernelization algorithms.

\defstage{Kernelization of \DED}
{ Let $(G,k)$ be the input.
  \begin{enumerate}[Step 1:]
    \item\label{item:kernelization-phase1} Apply Phase 1 on $(G,k)$ to obtain $(G_1,k')$.
    \item\label{item:kernelization-rule-final} Greedily pack edge disjoint diamonds of $(G_1,k')$. 
      If the count of edge disjoint diamonds in the pack exceeds $k$, then declare that the 
      instance is a no-instance; Otherwise find $X, V_X$ and $\mathcal{C}$ 
      of $G_1$ as given in Lemma~\ref{lem:modulator} from the maximal greedy packing. 
    \item\label{item:kernelization-rule-cr} Exhaustively apply clique reduction rule on $(G_1,k')$ to obtain $(G',k_1)$.
%    \item\label{label:kernelizatoin-final} If neither Step~\ref{label:kernelization-phase1} nor 
%      Step~\ref{label:kernelization-rule-final} is applicable on $(G',k_1)$, then return $(G',k_1)$. 
%      Otherwise apply the kernelization on $(G',k_1)$.
  \end{enumerate}
}
\defstage{Kernelization of \DKTED}
{ Let $(G,k)$ be the input. Apply Phase 1 on $(G,k)$ and return the output $(G',k')$.
  If the size of a greedy packing of edge disjoint diamonds and $K_t$s of $G'$ exceeds $k$,
  then declare that the instance is a no-instance; Otherwise return $(G',k')$.
}
\begin{lem}
  \label{lemma:kernelizaton-safe}
  The kernelization algorithms for \DED\ and \DKTED\ are safe 
  and can be applied in polynomial time.
\end{lem}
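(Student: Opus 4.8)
The plan is to establish safety and polynomial running time for each step of each algorithm separately and then compose them, since a composition of safe, polynomial-time steps is again safe and polynomial-time. For both problems, the first step is an application of Phase~1, whose safety and polynomial running time are already guaranteed by Lemma~\ref{lem:phase1-safe}. It therefore remains to handle the packing/no-instance test and, for \DED\ only, the exhaustive application of clique reduction. Note in particular that the packing step never modifies the instance: it either declares a no-instance or leaves the current instance untouched while computing the auxiliary structures $X$, $V_X$ and $\mathcal{C}$, so for that step safety reduces to correctness of the no-instance declaration.

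For \DED\ I would argue as follows. After Phase~1 produces $(G_1,k')$, Step~2 greedily packs edge-disjoint diamonds exactly as in the construction of Lemma~\ref{lem:modulator}. The no-instance declaration is safe because every induced diamond must lose at least one of its five edges in any solution, and edge-disjoint diamonds force pairwise distinct deletions; hence a packing of more than $k'$ edge-disjoint diamonds certifies that no solution of size at most $k'$ exists. Lemma~\ref{lem:modulator} also gives that $X$, $V_X$ and $\mathcal{C}$ are computed in polynomial time. For Step~3, a single application of the clique reduction rule is safe and polynomial by Lemma~\ref{lem:rule-clique-reduce-safe}, so the only remaining task is to bound the number of applications. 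Here I would observe that the rule only deletes vertices of $C$ that are local to $C'=C\cup A_C$; by Observation~\ref{obs:intersection} together with Observation~\ref{obs:abd} such a vertex lies in no clique of $\mathcal{C}$ other than $C$, so an application to $C$ does not disturb any other clique of $\mathcal{C}$, and since vertex deletion never creates a new induced diamond, the packing $X$ remains valid and need not be recomputed. By Observation~\ref{obs:clique-bound} one application brings $|C|$ down to at most $4k$, after which the rule is no longer applicable to $C$. Thus each clique of $\mathcal{C}$ is processed at most once and the total number of applications is at most $|\mathcal{C}|$, which is polynomial; composing the three safe steps then yields safety of the whole algorithm.

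For \DKTED\ the argument is the same but shorter, since clique reduction is not used. Phase~1 yields $(G',k')$ safely and in polynomial time. The final test greedily packs edge-disjoint diamonds and $K_t$s; any solution must delete at least one edge from each packed diamond (to destroy the induced diamond) and at least one edge from each packed $K_t$ (since a clique on $t$ vertices still induces a $K_t$ unless one of its edges is removed), and edge-disjointness again forces distinct deletions. Hence a packing exceeding $k$ correctly certifies a no-instance, as in the \DKTED\ part of Lemma~\ref{lem:modulator}, and the packing is found in polynomial time.

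The main obstacle I anticipate is not the safety of any individual rule---those are already discharged by the earlier lemmas---but rather the bookkeeping needed to conclude that the exhaustive Step~3 for \DED\ terminates after polynomially many applications without re-running Step~2. The key point, which I would isolate as a small claim, is that the vertices deleted by clique reduction are local to $C'$ and therefore invisible both to every other maximal clique of $\mathcal{C}$ and to the diamond packing $X$; once this locality is made precise via Observation~\ref{obs:intersection} and Observation~\ref{obs:abd}, the polynomial bound on the number of applications---and hence the polynomial running time of the entire kernelization---follows immediately.
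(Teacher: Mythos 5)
Your proposal is correct, and its overall decomposition agrees with the paper's: safety is inherited from Lemma~\ref{lem:phase1-safe}, Lemma~\ref{lem:modulator} and Lemma~\ref{lem:rule-clique-reduce-safe}, and the only real work is bounding the number of applications of the clique reduction rule in Step~3 for \DED. Where you diverge is in that termination bound. The paper dispatches it with a one-line monovariant: every application of the clique reduction rule deletes at least one vertex of a clique of size greater than $4k$, hence strictly decreases the number of edges, and since no step ever increases the edge count, the number of applications is at most $|E(G_1)|$. Your argument is finer: via the locality of the deleted vertices (Observation~\ref{obs:intersection} suffices here; Observation~\ref{obs:abd} is not really needed) you show that the deleted vertices lie in no clique of $\mathcal{C}$ other than $C$ and are not incident to $X$, so each clique of $\mathcal{C}$ is processed at most once, giving a sharper bound of $|\mathcal{C}|$ applications. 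This costs more work, but it buys something the paper leaves implicit: the clique reduction rule is stated relative to the fixed structures $X$, $V_X$ and $\mathcal{C}$ computed in Step~2, so ``exhaustively apply'' is only meaningful---and each subsequent application is only a legitimate, safe application of the rule---if those structures remain valid after every deletion. Your locality claim is exactly what certifies this, whereas the paper's edge-count argument presupposes it silently. Both proofs are sound; yours is the more self-contained reading of the algorithm, the paper's the more economical one.
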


\begin{proof}
  The safety of the kernelizations follow directly from the safety of 
  Phase 1 and clique reduction rule (Lemma~\ref{lem:phase1-safe}, \ref{lem:rule-clique-reduce-safe}).  
  Every application of clique reduction rule
  decreases the number of edges. 
%  Hence Step~\ref{label:kernelization-rule-final} runs at most $|E(G)|$ times. 
  Hence by Lemma~\ref{lem:phase1-safe}, \ref{lem:modulator} and \ref{lem:rule-clique-reduce-safe}
  both the kernelization runs in polynomial time.
%  By Lemma~\ref{lem:phase1-safe}, Phase 1 runs in polynomial time. 
%  This proves that the kernelization of \DKTED\ runs in polynomial time. 
%  By Lemma~\ref{lemma:modulator}, 
%  and Lemma~\ref{lemma:rule-clique-reduce-safe},
%  Step~\ref{label:kernelization-rule-final} runs in polynomial time. 
%  Two execution of Phase 1 cannot be done consecutively. Hence the kernelization runs in polynomial time.
\end{proof}

\subsection{Bounding the Kernel Size}
In this subsection, we bound the number of vertices in the kernels obtained by the kernelizations.
Let $(G,k)$ be an instance of \DED~(\DKTED) and $(G',k')$ is obtained by the kernelization. 
Consider an $X$, $V_X$ and 
$\mathcal{C}$ of $(G',k')$ as obtained by Lemma~\ref{lem:modulator} in the case of 
\DKTED\ and that obtained after the last application of clique reduction rule in the 
case of \DED. 

\begin{lem}
  \label{lem:c1}
  $\sum_{C\in \mathcal{C}_1}|C| = O(k^2)$.
\end{lem}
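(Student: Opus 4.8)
The plan is to prove the stronger statement that $|\mathcal{C}_1| = O(k^2)$; since every $C \in \mathcal{C}_1$ is a singleton this is exactly $\sum_{C\in\mathcal{C}_1}|C|$. First I would pin down what a member of $\mathcal{C}_1$ looks like: a set $\{v\}\in\mathcal{C}_1$ is precisely an isolated vertex of $G-V_X$, so $N(v)\subseteq V_X$, and by Lemma~\ref{lem:phase1-props} the vertex $v$ is a core member with connected neighborhood. Applying Observation~\ref{obs:twox} to $C=\{v\}$ (its Case~1) yields two adjacent vertices $x,y\in A_C=N(v)\subseteq V_X$. I would use this to define a charging map $f$ sending each $\{v\}\in\mathcal{C}_1$ to an edge $f(v)=\{x,y\}\in E(G[V_X])$ lying inside $N(v)$, and then bound $|\mathcal{C}_1|$ by analysing the fibers of $f$.

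The heart of the argument is a dichotomy on a core edge $e=\{x,y\}\in E(G[V_X])$ according to how many singletons are charged to it. If two distinct singletons $v,v'$ satisfy $f(v)=f(v')=e$, then $v,v'$ are non-adjacent (both are isolated in $G-V_X$) while each is adjacent to both $x$ and $y$, so $\{x,y,v,v'\}$ induces a diamond. Since $v,v'\notin V_X$, none of the edges of this diamond incident to $v$ or $v'$ can lie in $X$, so Observation~\ref{obs:modulator-intersect} forces $e\in X$. Hence an edge can carry two or more singletons only if it belongs to the packing $X$, and there are at most $|X|=O(k)$ such edges (Lemma~\ref{lem:modulator}). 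Moreover every singleton charged to a fixed $e$ is an independent common neighbour of $x$ and $y$, so a fiber of size $m$ produces a non-matching of size $\lfloor m/2\rfloor$ in $G[N(x)\cap N(y)]$; as the sunflower rule (Rule~\ref{rule:sunflower-middle}) is no longer applicable, $m\le 2k+1$. Thus the singletons charged to edges of $X$ total at most $O(k)\cdot(2k+1)=O(k^2)$.

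It remains to bound the singletons whose charging edge carries only themselves. These map injectively into $E(G[V_X])$, whose size is at most $\binom{|V_X|}{2}=O(k^2)$ since $|V_X|=O(k)$ by Lemma~\ref{lem:modulator}. Adding the two estimates gives $|\mathcal{C}_1|=O(k^2)$. The same reasoning applies verbatim to \DKTED, where $|X|$ and $|V_X|$ are still $O(k)$ for fixed $t$ and Observation~\ref{obs:modulator-intersect} is stated for induced diamonds and $K_t$s alike.

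The step I expect to be the crux is the dichotomy of the second paragraph. The naive bound ``(number of core edges) $\times$ (singletons per edge)'' is only $O(k^2)\cdot O(k)=O(k^3)$, and the whole improvement to $O(k^2)$ rests on the observation that any edge shared by two singleton-induced diamonds is forced into the packing $X$, so multiplicity greater than one can occur on only $O(k)$ of the $O(k^2)$ edges of $G[V_X]$. The delicate points are verifying that $\{x,y,v,v'\}$ is a genuinely \emph{induced} diamond and that Observation~\ref{obs:modulator-intersect} therefore applies; once that is in place, the rest is routine counting.
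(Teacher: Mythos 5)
Your proposal is correct and is essentially the paper's own argument: both use Observation~\ref{obs:twox} to charge each singleton to an adjacent pair $x,y\in V_X$, bound the singletons per edge of $X$ by $2k+1$ via the sunflower rule, and bound the singletons per edge of $E(G[V_X])\setminus X$ by one via an induced diamond edge-disjoint from $X$ (Observation~\ref{obs:modulator-intersect}), giving $O(k)\cdot(2k+1)+O(k^2)=O(k^2)$. The only difference is presentational (your explicit charging map and fiber dichotomy versus the paper's direct case analysis on the edge $\{x,y\}$).
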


\begin{proof}
  Let $\{v\}\in \mathcal{C}_1$. By Observation~\ref{obs:twox}, 
  $v$ must be adjacent to two vertices $x,y\in V_X$ such that $x$ and 
  $y$ are adjacent. Consider an edge $\{x,y\}\in X$. In the common neighborhood of $\{x,y\}$ there can be 
  at most $2k+1$ vertices $v$ with the property that $\{v\}\in \CC_1$, otherwise sunflower rule applies.
  Now, consider an edge $\{x,y\}\in E(G'[V_X]-X)$. In the common neighborhood of $\{x,y\}$
  there can be at most one vertex $v$  with the property that $\{v\}\in \CC_1$, otherwise there is
  an induced diamond edge-disjoint with $X$.
  Since there are $O(k)$ edges in $X$ and $O(k^2)$ edges in $E(G'[V_X]-X)$, we obtain the result.
\end{proof}

\begin{lem}
  \label{lem:bound-ordered}
  \begin{enumerate}[(i)]
    \item\label{item:lem-bound-ordered-aa} Consider any two vertices $x,y\in V_X$. 
      Let $\mathcal{C'}\subseteq \CC_{\geq 2}$
      such that for any $C\in \mathcal{C'}$, $x,y\in A_C$. If $\{x,y\}\in X$ then $|\mathcal{C}'|\leq 2k+1$. If $\{x,y\}\notin X$, 
      then $|\mathcal{C}'|\leq 1$.
    \item\label{item:lem-bound-ordered-ad} Consider any ordered pair of vertices $(x,y)$ in $V_X$ such that $x$ and $y$ are
      adjacent in $G'$. Let $\mathcal{C'}\subseteq \CC_{\geq 2}$
      such that for any $C\in \mathcal{C'}$, $x\in A_C$ and $y\in D_C$. 
      If $\{x,y\}\in X$ then $|\mathcal{C}'|\leq 2k+1$. If $\{x,y\}\notin X$, then $|\mathcal{C}'|= 0$.
  \end{enumerate}
\end{lem}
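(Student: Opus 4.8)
The plan is to reduce all four bounds to a single structural claim about the cliques of $\mathcal{C}_{\geq 2}$ that share a common fully-adjacent vertex of $V_X$, and then dispatch the four cases by either the sunflower rule or the maximality of $X$.

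\textbf{Structural claim.} Let $C,C'\in\mathcal{C}_{\geq 2}$ be distinct and suppose some $x\in V_X$ lies in $A_C\cap A_{C'}$. Then $C\cap C'=\emptyset$ and there is no edge between $C$ and $C'$. I would prove this using Observation~\ref{obs:modulator-intersect} (every induced diamond of $G$ carries an edge of $X$) together with the fact that every edge of $X$ has both endpoints in $V_X$, so any induced diamond all of whose edges leave $V_X$ is forbidden. For disjointness, suppose $b\in C\cap C'$, which is the whole intersection by Observation~\ref{obs:intersection}(\ref{label:obs-intersection-1}); pick $a\in C\setminus\{b\}$ and $c\in C'\setminus\{b\}$. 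If $a\not\sim c$ then $\{x,a,b,c\}$ induces a diamond (middle edge $\{x,b\}$) whose five edges all leave $V_X$, contradicting Observation~\ref{obs:modulator-intersect}; if $a\sim c$ then $\{a,b,c\}$ extends to a maximal clique containing the edge $ab$, but by Proposition~\ref{pro:diamond-free} that clique is the unique one $C$, forcing $c\in C$ and contradicting $C\cap C'=\{b\}$. For the absence of cross edges, suppose $a\in C$, $c\in C'$ with $a\sim c$; picking $a'\in C\setminus\{a\}$, either $a'\not\sim c$, giving an induced diamond $\{x,a,a',c\}$ missing $X$, or $a'\sim c$, making $c\in V(G)\setminus(V_X\cup C)$ adjacent to two vertices of $C$ and contradicting Observation~\ref{obs:abd}(\ref{label:obs-abd-bc}).

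With the claim in hand the two ``$\{x,y\}\in X$'' bounds are immediate. For part~(\ref{item:lem-bound-ordered-aa}), since $x,y\in A_C$ every vertex of every $C\in\mathcal{C}'$ lies in $N(x)\cap N(y)$, and the claim makes the cliques of $\mathcal{C}'$ pairwise disjoint and mutually non-adjacent; choosing one representative per clique yields $|\mathcal{C}'|$ pairwise non-adjacent vertices of $N(x)\cap N(y)$. If $|\mathcal{C}'|\geq 2k+2$ I pair them into $k+1$ disjoint non-edges, so the sunflower rule (Rule~\ref{rule:sunflower-middle}) would apply to $\{x,y\}$, which is impossible on the reduced instance; hence $|\mathcal{C}'|\leq 2k+1$. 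For part~(\ref{item:lem-bound-ordered-ad}) the argument is identical, using for each $C\in\mathcal{C}'$ the unique neighbor $w_C\in C$ of $y$ (it exists because $y\in D_C$): by the claim the $w_C$ are distinct and pairwise non-adjacent, they all lie in $N(x)\cap N(y)$, and the same counting against Rule~\ref{rule:sunflower-middle} gives $|\mathcal{C}'|\leq 2k+1$.

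The two ``$\{x,y\}\notin X$'' bounds use only maximality. In part~(\ref{item:lem-bound-ordered-aa}), note $x\sim y$ since $x,y\in A_C$ with $|C|\geq 2$ (Observation~\ref{obs:abdmore}(\ref{label:obs-abdmore-ac})); if $|\mathcal{C}'|\geq 2$, two distinct cliques supply non-adjacent $a\in C$, $c\in C'$ by the claim, and $\{x,y,a,c\}$ induces a diamond whose middle edge is $\{x,y\}\notin X$ and whose remaining edges leave $V_X$, contradicting Observation~\ref{obs:modulator-intersect}; hence $|\mathcal{C}'|\leq 1$. In part~(\ref{item:lem-bound-ordered-ad}) a single clique already fails: taking $u\in C\setminus\{w_C\}$, the set $\{x,y,w_C,u\}$ induces a diamond (with $y$ non-adjacent only to $u$) avoiding $X$ entirely, so $|\mathcal{C}'|=0$. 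I expect the main obstacle to be the structural claim itself, specifically ruling out the two $a\sim c$ configurations; once it is established, every remaining step is a routine invocation of the sunflower rule or of the maximality of $X$.
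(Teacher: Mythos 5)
Your proof is correct and takes essentially the same route as the paper's: both rest on showing, via diamonds edge-disjoint from $X$ (Observation~\ref{obs:modulator-intersect}) and Observation~\ref{obs:intersection}, that the cliques of $\mathcal{C}'$ are pairwise disjoint and mutually non-adjacent, then invoking the sunflower rule (Rule~\ref{rule:sunflower-middle}) for the $\{x,y\}\in X$ bounds and the maximality of $X$ for the $\{x,y\}\notin X$ bounds. Your explicit structural claim about cross edges and the pairing of representatives into a non-matching merely make rigorous steps the paper leaves implicit.
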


\begin{proof}
  (\ref{item:lem-bound-ordered-aa}).
  Let $C_a, C_b\in \mathcal{C'}$. By Observation~\ref{obs:intersection}(\ref{label:obs-intersection-1}), 
  $|C_a\cap C_b|\leq 1$.   If $v\in C_a\cap C_b$, then by Observation~\ref{obs:intersection}(\ref{label:obs-intersection-noedge}),
  there is no edge between $C_a\setminus\{v\}$ and $C_b\setminus \{v\}$. Hence, $\{x,v,a,b\}$ induces a diamond 
  where $a\in C_a\setminus\{v\}$ and $b\in C_b\setminus \{v\}$, which is edge disjoint with $X$, a contradiction.
  Hence $C_a\cap C_b=\emptyset$. Now, consider any two vertices $a\in C_a$ and $b\in C_b$. 
  Clearly, $\{x,y,a,b\}$ induces a diamond. Hence, $\{x,y\}$ must be an edge in $X$, otherwise the diamond
  is edge disjoint with $X$, a contradiction. Therefore, if $\{x,y\}\notin X$, $|\mathcal{C'}|\leq 1$.
  Now we consider the case in which $\{x,y\}\in X$. If $|\mathcal{C}'|\geq 2k+2$, we get at least $k+1$ diamonds where every two
  diamonds have the only edge intersection $\{x,y\}$. Then sunflower rule applies, which is a contradiction.

  (\ref{item:lem-bound-ordered-ad}).
  Let $\mathcal{C'}$ be the set of all $C\in \CC_{\geq 2}$ such that $x\in A_C$ and $y\in D_C$.
  Consider any two of them - $C_a$ and $C_b$. By Observation~\ref{obs:intersection}(\ref{label:obs-intersection-1}), 
  $|C_a\cap C_b|\leq 1$.   If $v\in C_a\cap C_b$, then by Observation~\ref{obs:intersection}(\ref{label:obs-intersection-noedge}),
  there is no edge between $C_a\setminus\{v\}$ and $C_b\setminus \{v\}$. Let $a\in C_a\setminus \{v\}$ and $b\in C_b\setminus \{v\}$.
  Then $\{x,v,a,b\}$ induces a diamond which is edge disjoint with $X$, a contradiction. Hence $C_a\cap C_b=\emptyset$.
  Let $a,a'\in C_a$ such that $a$ is adjacent to $y$. Then, if $\{x,y\}\notin X$, $\{x,a,a',y\}$ induces a diamond, which is 
  edge disjoint with $X$. Therefore, if $\{x,y\}\notin X$, then $|\mathcal{C'}|=0$. 
  Now we consider the case in which $\{x,y\}\in X$. If $|C'|\geq 2k+2$, we get at least $k+1$ diamonds 
  where   every two
  diamonds have the only edge intersection $\{x,y\}$. Then sunflower rule applies, which is a contradiction.
\end{proof}

\begin{lem}
  \label{lem:c2}
  \begin{enumerate}[(i)]
  \item\label{item:c2:ded}  For \DED,
    $\sum_{C\in \CC_{\geq 2}}|C|=O(k^3)$.
  \item\label{item:c2:dkted} For \DKTED,   
    $\sum_{C\in \CC_{\geq 2}}|C|=O(k^2)$.
  \end{enumerate}
\end{lem}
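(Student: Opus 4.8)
The plan is to bound the \emph{number} of cliques in $\CC_{\geq 2}$ by $O(k^2)$ and then multiply by a bound on their individual sizes: for \DKTED\ the sizes are $O(1)$, and for \DED\ they are $O(k)$ after clique reduction. So the heart of the argument is showing $|\CC_{\geq 2}| = O(k^2)$, which I would establish by a charging argument assigning each clique to a pair of vertices of $V_X$ and then invoking Lemma~\ref{lem:bound-ordered}.

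First I would bound $|\CC_{\geq 2}|$. For every $C \in \CC_{\geq 2}$, Observation~\ref{obs:twox} supplies two adjacent vertices $x,y \in V_X$ with $x \in A_C$ and $y \in A_C \cup D_C$; I fix one such witness pair per clique. Note that $A_C \cap D_C = \emptyset$ when $|C| \geq 2$, since a vertex adjacent to all of $C$ is adjacent to at least two of its vertices. I then split $\CC_{\geq 2}$ by the type of witness. If $y \in A_C$, I charge $C$ to the unordered pair $\{x,y\}$ and apply Lemma~\ref{lem:bound-ordered}(\ref{item:lem-bound-ordered-aa}): each edge of $X$ absorbs at most $2k+1$ such cliques, and each edge of $G'[V_X]$ outside $X$ absorbs at most one. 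If $y \in D_C$, I charge $C$ to the ordered pair $(x,y)$ and apply Lemma~\ref{lem:bound-ordered}(\ref{item:lem-bound-ordered-ad}): each edge of $X$ absorbs at most $2k+1$ such cliques, and each edge outside $X$ absorbs none.

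Since $|V_X| = O(k)$, the graph $G'[V_X]$ has $O(k^2)$ edges and $|X| = O(k)$, so summing both types yields
\[
|\CC_{\geq 2}| \;\leq\; O(k)\cdot(2k+1) \;+\; O(k^2)\cdot 1 \;+\; 2\,O(k)\cdot(2k+1) \;=\; O(k^2).
\]
For part (\ref{item:c2:dkted}), every $C \in \CC$ is a clique in the $K_t$-free graph $G - V_X$ and so has at most $t-1 = O(1)$ vertices, giving $\sum_{C \in \CC_{\geq 2}}|C| \leq (t-1)\,|\CC_{\geq 2}| = O(k^2)$. For part (\ref{item:c2:ded}), I would split $\CC_{\geq 2} = \CC_2 \cup \CC_{\geq 3}$: the cliques of $\CC_2$ contribute $2\,|\CC_2| = O(k^2)$, and by Observation~\ref{obs:clique-bound} each clique of $\CC_{\geq 3}$ retains at most $4k$ vertices after clique reduction, contributing at most $4k\cdot|\CC_{\geq 3}| = O(k^3)$; hence the total is $O(k^3)$.

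The main obstacle is the charging step for $|\CC_{\geq 2}|$. It relies entirely on Observation~\ref{obs:twox} furnishing a witness pair in $V_X$ for \emph{every} clique of $\CC_{\geq 2}$, and on Lemma~\ref{lem:bound-ordered} capping how many cliques can share one witness pair: two cliques with the same witness produce edge-disjoint diamonds overlapping only in $\{x,y\}$, so the sunflower rule limits their number to $2k+1$ when $\{x,y\}\in X$ and almost entirely forbids them otherwise. Since these facts are already proved, the remaining work is purely the bookkeeping of the charge, where the only delicate points are keeping the two witness types disjoint and distinguishing edges inside $X$ (which may carry $\Theta(k)$ cliques each) from those outside $X$ (which carry at most one).
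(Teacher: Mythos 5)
Your proposal is correct and follows essentially the same route as the paper: every clique in $\CC_{\geq 2}$ is charged via Observation~\ref{obs:twox} to a witness pair in $V_X$, Lemma~\ref{lem:bound-ordered} caps the number of cliques per pair (distinguishing edges in $X$ from those outside), and the clique counts are then multiplied by the size bounds $4k$ (Observation~\ref{obs:clique-bound}) for \DED\ and $t-1$ for \DKTED. The only cosmetic differences are that you first isolate the bound $|\CC_{\geq 2}|=O(k^2)$ before multiplying by clique sizes and explicitly split off $\CC_2$, whereas the paper interleaves these steps; your remark on the disjointness of the two witness types is harmless but unnecessary, since an upper bound only requires that every clique be counted at least once.
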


\begin{proof}
  (\ref{item:c2:ded}).
  Consider any two adjacent vertices $x,y\in V_X$. 
  Let $\mathcal{C}_{xy}'\subseteq \CC_{\geq 2}$ be such that 
  $x,y\in A_C$. Then by Lemma~\ref{lem:bound-ordered}(\ref{item:lem-bound-ordered-aa}), 
  if $\{x,y\}\in X$,
  then $|C'_{xy}|\leq 2k+1$ and if $\{x,y\}\notin X$, then $|C'_{xy}|\leq 1$. 
  Since there are at most $5k$ edges in $X$ and
  $O(k^2)$ edges in $G[V_X]\setminus X$, $\bigcup_{\{x,y\}\in E(G[V_X])} C'_{xy}$ 
  has at most $O(k)\cdot (2k+1)+O(k^2)=O(k^2)$ maximal
  cliques. Since every maximal clique has at most $4k$ 
  vertices (by Observation~\ref{obs:clique-bound}), the total
  number of vertices in those cliques is $O(k^3)$.

  Now, let $\mathcal{C'}_{xy}\subseteq \CC_{\geq 2}$ be such that 
  $x\in A_C$ and $y\in D_C$. Then by 
  Lemma~\ref{lem:bound-ordered}(\ref{item:lem-bound-ordered-ad}), if $\{x,y\}\in X$,
  then $|C'_{xy}|\leq 2k+1$ and if $\{x,y\}\notin X$, then $|C'_{xy}|= 0$. 
  Since there are at most $2\cdot 5k=10k$ ordered adjacent
  pairs of vertices in $X$,
  $\bigcup_{\{x,y\}\in E(G[V_X])} C'_{xy}$ has at most $O(k)\cdot (2k+1)$ maximal
  cliques. Since every maximal clique has at most $4k$ 
  vertices (by Observation~\ref{obs:clique-bound}), the total
  number of vertices in those cliques is $O(k^3)$.

  Since, by Observation~\ref{obs:twox}, 
  for every $C\in \mathcal{C}$, there exist two vertices $x\in A_C$ and $y\in A_C\cup D_C$,
  we have counted every $C\in \mathcal{C}_2\cup \mathcal{C}_{\geq 3}$. Hence 
  $\sum_{C\in \mathcal{C}_2\cup \mathcal{C}_{\geq 3}}|C|=O(k^3)$.

  (\ref{item:c2:dkted}).
  Since $G-V_X$ is $K_t$-free, every maximal clique in $\CC$ has at most $t-1$ vertices.
  Hence from the above arguments, we obtain the result.
\end{proof}  

\begin{thm}
  \label{thm:final-bound}
  Given an instance $(G,k)$ of \DED, the kernelization gives an instance $(G',k')$ such that $|V(G')| = O(k^3)$ 
  and $k'\leq k$ or declares that the instance is a no-instance. Similarly, given an instance 
  $(G,k)$ of \DKTED, the kernelization gives an instance $(G',k')$ such that $|V(G')|=O(k^2)$
  and $k'\leq k$ or declare that the instance is a no-instance.
\end{thm}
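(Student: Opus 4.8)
The plan is to assemble the vertex count of $G'$ from pieces that the preceding lemmas have already bounded, so the proof is essentially an accounting argument. I would split the vertex set as $V(G') = V_X \cup \big(V(G')\setminus V_X\big)$ and bound the two parts separately. For the first part, Lemma~\ref{lem:modulator} gives $|V_X|\le 4k$ for \DED\ and $|V_X|\le tk$ for \DKTED; since $t$ is a fixed constant, in both cases $|V_X| = O(k)$.

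The heart of the assembly is to control $V(G')\setminus V_X$ via the maximal clique partitioning $\mathcal{C}$ of $G'-V_X$. Here I would first observe that every vertex of $G'-V_X$ occurs in at least one member of $\mathcal{C}$: by Proposition~\ref{pro:diamond-free} the edges of $G'-V_X$ partition uniquely into maximal cliques, and any vertex of $G'-V_X$ incident to no edge of $G'-V_X$ forms its own singleton clique in $\mathcal{C}_1$. Hence $V(G')\setminus V_X \subseteq \bigcup_{C\in\mathcal{C}} C$, and therefore
\[
  |V(G')\setminus V_X| \;\le\; \sum_{C\in\mathcal{C}_1}|C| \;+\; \sum_{C\in\mathcal{C}_{\ge 2}}|C|.
\]
Now I would invoke the established counts directly: Lemma~\ref{lem:c1} gives $\sum_{C\in\mathcal{C}_1}|C| = O(k^2)$, while Lemma~\ref{lem:c2} gives $\sum_{C\in\mathcal{C}_{\ge 2}}|C| = O(k^3)$ for \DED\ and $O(k^2)$ for \DKTED. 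Combining these with $|V_X| = O(k)$ yields $|V(G')| = O(k^3)$ for \DED\ and $|V(G')| = O(k^2)$ for \DKTED. The bound $k'\le k$ follows because none of the rules increases the parameter (only the sunflower rule alters it, and it does so by decreasing it), and the no-instance declarations are handled by the kernelization algorithm itself.

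The bulk of the difficulty is not in this final assembly but in the counting lemmas on which it rests; the only genuine subtlety at this stage is the covering observation above, namely that $\mathcal{C}$ together with $V_X$ exhausts every vertex of $G'$. Once that is in place the summation over $\mathcal{C}$, which may overcount vertices shared between two cliques, still constitutes a valid upper bound, and the theorem is immediate from Lemmas~\ref{lem:c1} and~\ref{lem:c2}.
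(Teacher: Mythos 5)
Your proposal is correct and follows essentially the same route as the paper's proof, which likewise combines $|V_X| = O(k)$ with Lemma~\ref{lem:c1} and Lemma~\ref{lem:c2} and notes that no rule increases the parameter. Your explicit covering observation (that $V_X$ together with $\bigcup_{C\in\mathcal{C}} C$, including singleton cliques for isolated vertices of $G'-V_X$, exhausts $V(G')$) is left implicit in the paper but is a worthwhile clarification rather than a deviation.
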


\begin{proof}
  None of the rules increases the parameter $k$. 
  Then, the theorem follows from Lemma~\ref{lem:c1} and Lemma~\ref{lem:c2} and the
  fact that $|V_X| = O(k)$.
\end{proof}

\section{Hardness Results}

%An $s$-diamond is defined as the graph $K_2\times (s+1)K_1$~\cite{AravindSS16parameterized}.
%We note that $1$-diamond is a diamond graph. An example is shown in Figure~\ref{fig:5d}.
In this section we prove that, for any fixed $s\geq 1$, 
\SDED\ is \NPC\ even for $K_4$-free graphs.
As a corollary, we obtain that, for any fixed $s\geq 1$ and $t\geq 4$, 
\SDKTED\ is \NPC. We also obtain that
these \NPC\ problems can be solved neither in subexponential time nor in 
parameterized subexponential time,
unless ETH fails. Further, we obtain similar results for \SDEE\ and \SDKTEE.

It is known that \VC\ is \NPC\ on sub-cubic graphs \cite{mohar2001face}.
It is also known that \VC\ is \NPC\ on triangle-free graphs \cite{poljak1974note, Yannakakis81edge} by the 
simple observation that a graph $G$ has a vertex cover of size 
at most $k$ if and only if the graph obtained from $G$ by 
sub-dividing every edge twice has a vertex cover of size at most $k+|E(G)|$.
Combining these two reductions implies that \VC\ is \NPC\ on triangle-free sub-cubic graphs.
Recently, Komusiewicz generalized this technique to obtain a general result, where
the result for triangle-free sub-cubic graphs comes as a special case \cite{komusiewicz2015tight}. 
It also gives that
\VC\ on triangle-free sub-cubic graphs cannot be solved in subexponential time, unless ETH fails.

\begin{pro}\cite{komusiewicz2015tight}
  \label{pro:vctf1}
  \VC\ on triangle-free sub-cubic graphs is \NPC. Further, the problem cannot be solved
  in time \SUBEX, unless ETH fails.
\end{pro}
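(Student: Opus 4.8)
The plan is to derive the statement as a composition of three linear reductions, starting from \TSAT, so that both the hardness and the subexponential lower bound are carried all the way down to \VC\ on triangle-free sub-cubic graphs. The starting point is standard: by ETH together with the Sparsification Lemma~\cite{ImpagliazzoPZ01which}, \TSAT\ admits no $2^{o(n+m)}$ algorithm even when $m=O(n)$, and the usual clause-triangle / variable-edge reduction turns such an instance into a graph with $O(n+m)=O(n)$ vertices and edges whose minimum vertex cover encodes satisfiability. Hence \VC\ on general graphs is \NPC\ and admits no \SUBEX\ algorithm unless ETH fails. Since the Preliminaries record that a composition of linear reductions is linear and that linear reductions preserve both \NPH ness and the absence of \SUBEX\ algorithms, it suffices to transport this last statement, through linear reductions, onto the restricted class.

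The first reduction is from \VC\ on general graphs to \VC\ on sub-cubic graphs, which is the reduction of~\cite{mohar2001face} (refined in~\cite{komusiewicz2015tight}). Each vertex $v$ of degree $d\geq 4$ is replaced by a \emph{degree gadget} on $O(d)$ vertices exposing one degree-two port per edge formerly incident to $v$; the gadget is built so that it has essentially two optimum covers, one covering all ports (``$v$ selected'') and one covering none (``$v$ unselected''), whose costs differ by a fixed amount, so that optimum covers of the new graph match those of $G$ up to a computable additive offset. Every gadget vertex then has degree at most three, and the gadgets together contribute $\sum_v O(\deg v)=O(|E(G)|)$ vertices, so the output has $O(|V(G)|+|E(G)|)=O(n)$ vertices. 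This is the technical heart of the argument and the step I expect to be the main obstacle: one must exhibit a concrete gadget that is simultaneously sub-cubic and parameter-faithful, i.e.\ with distinct costs for ``selected'' versus ``unselected'' (a naive even-cycle gadget fails here, since its two canonical covers have equal cost and so lose the parameter). I would invoke the cited construction, noting only that it is linear, which is exactly what preserves the ETH bound.

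The second reduction is from \VC\ on sub-cubic graphs to \VC\ on triangle-free sub-cubic graphs, obtained by subdividing every edge twice, i.e.\ replacing each edge $\{u,v\}$ by a path $u-a-b-v$ on two fresh vertices. As stated in the text, $G_1$ has a vertex cover of size at most $k$ if and only if the subdivided graph $G_2$ has one of size at most $k+|E(G_1)|$, giving the additive offset. I would then check the two structural properties. For triangle-freeness: every cycle of length $\ell$ in $G_1$ becomes a cycle of length $3\ell$ in $G_2$ (and $G_2$ has no other cycles, since the subdivision vertices have degree two), so the girth of $G_2$ is at least $3\cdot 3=9$ and in particular $G_2$ contains no triangle. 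For sub-cubicness: each new vertex $a,b$ has degree two and the degree of every original vertex is unchanged, so $\Delta(G_2)\leq 3$. Finally, sub-cubicness of $G_1$ gives $|E(G_1)|\leq \tfrac{3}{2}|V(G_1)|$, whence $|V(G_2)|=|V(G_1)|+2|E(G_1)|\leq 4|V(G_1)|$, so the reduction is linear.

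Composing the three reductions yields a polynomial-time linear reduction from \TSAT\ with $m=O(n)$ to \VC\ on triangle-free sub-cubic graphs, whose output has $O(n)$ vertices. Since \VC\ is in \NP, this already gives \NPC ness of the restricted problem. Moreover a \SUBEX\ algorithm for it would, through the linear chain and because $|G|=\Theta(|V(G)|)$ on sub-cubic graphs, produce a $2^{o(n)}$ algorithm for \TSAT, contradicting ETH. The only step whose details demand genuine care is the degree gadget of the second paragraph; the double-subdivision step and the final composition are routine.
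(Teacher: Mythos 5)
Your proposal is correct and follows essentially the same route the paper relies on: the paper states this proposition as a result of Komusiewicz~\cite{komusiewicz2015tight}, justified by exactly your chain --- \VC\ is \NPC\ on sub-cubic graphs~\cite{mohar2001face}, double subdivision of every edge makes the graph triangle-free while keeping it sub-cubic and shifting the cover size by $|E(G)|$, and since every reduction in the chain is linear, the ETH lower bound for \TSAT\ transfers and rules out \SUBEX\ algorithms. Your verification of the subdivision leg (girth at least $9$, degrees unchanged, $|V(G_2)|\leq 4|V(G_1)|$ via $|E(G_1)|\leq \tfrac{3}{2}|V(G_1)|$) is correct and matches the folklore observation the paper quotes from~\cite{poljak1974note,Yannakakis81edge}. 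The one divergence is the first leg: you route through \VC\ on general graphs and then invoke a vertex-by-vertex degree-reduction gadget, which you flag as the main obstacle --- and which is not in fact what \cite{mohar2001face} provides. The cited reduction goes from \TSAT\ \emph{directly} to \VC\ on sub-cubic graphs (the paper's own Lemma~\ref{lem:vctf2} records it as producing $9m$ vertices, $12m$ edges and $k=5m$), with degree controlled inside the SAT gadgets (one gadget per literal occurrence) rather than by a separate degree-reduction step. So the obstacle you single out dissolves if you compose \TSAT\ $\rightarrow$ sub-cubic \VC\ $\rightarrow$ double subdivision, which is precisely the decomposition of the paper and of~\cite{komusiewicz2015tight}; your version is sound in outline (such degree gadgets do exist, e.g.\ in the classical cubic-\VC\ hardness constructions), but as written it leans on a citation for a construction that the citation does not contain, and it does extra work the literature's route avoids.
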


Since, the reduction in \cite{komusiewicz2015tight} 
is not a linear parameterized reduction, we need to compose it
with the reduction from \TSAT\ to \VC\ to obtain a linear parameterized reduction.

\begin{lem}
  \label{lem:vctf2}
  \VC\ on triangle-free sub-cubic graphs cannot be solved in time \PSUBEX, unless ETH fails.
\end{lem}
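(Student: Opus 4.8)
The plan is to establish the bound via a linear parameterized reduction from \TSAT\ to \VC\ on triangle-free sub-cubic graphs. Recall that ETH together with the Sparsification Lemma forbids a $2^{o(n+m)}$ algorithm for \TSAT, where $n$ and $m$ are the numbers of variables and clauses; hence any linear parameterized reduction into our target problem yields the claimed lower bound. As the excerpt notes, the reduction underlying Proposition~\ref{pro:vctf1} is size-linear but not a linear \emph{parameterized} reduction, so I would obtain the result by composing it with the classical reduction from \TSAT\ to \VC, whose virtue is that it produces instances whose parameter is proportional to their size.

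First I would recall the standard gadget reduction $R_1$ from \TSAT\ to \VC: introduce a two-vertex edge gadget for each variable and a triangle gadget for each clause, then join each literal vertex of a clause gadget to the matching literal vertex of its variable gadget. The resulting graph $H$ satisfies $|V(H)|, |E(H)| = O(n+m)$, and the formula is satisfiable if and only if $H$ admits a vertex cover of size $k_1 = n + 2m$. The crucial feature of $R_1$ is that its parameter and its size agree up to constants, i.e. $k_1 = \Theta(|V(H)|) = \Theta(|E(H)|) = \Theta(n+m)$.

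Next I would feed $H$ into the reduction $R_2$ of Proposition~\ref{pro:vctf1}, producing a triangle-free sub-cubic graph $H'$ with budget $k'$. Since $R_2$ is size-linear we have $|V(H')| = O(|V(H)| + |E(H)|) = O(n+m)$, and the budget it emits is never more than the number of vertices, so $k' \le |V(H')| = O(n+m) = O(k_1)$. Together with the (polynomial-time, yes/no-preserving) composition $R_2 \circ R_1$, this makes the whole pipeline a genuine linear parameterized reduction from \TSAT\ to \VC\ on triangle-free sub-cubic graphs. A hypothetical algorithm running in time \PSUBEX\ for the latter would therefore solve \TSAT\ in time $2^{o(n+m)}\cdot (n+m)^{O(1)}$, contradicting ETH.

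The step I expect to be the main obstacle is the parameter bookkeeping in the composition rather than the gadget constructions themselves. The reduction $R_2$ fails to be parameter-linear in isolation precisely because its output budget scales with the \emph{size} of its input: on a general \VC\ instance such as a large star the parameter can be as small as $1$ while the size is arbitrarily large, so the blow-up from input parameter to output parameter is uncontrolled. The role of precomposing with $R_1$ is exactly to restrict $R_2$ to inputs in which the parameter already equals the size up to constants, which forces the size-dependence of $R_2$'s budget to collapse to $O(n+m)$. Verifying this chain of estimates — in particular confirming $k' = O(k_1)$ — is the heart of the argument and the point at which care is required.
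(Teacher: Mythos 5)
Your parameter bookkeeping is the part that works: capping a \VC\ budget at the number of vertices, so that any size-linear chain starting from \TSAT\ automatically emits a parameter of size $O(n+m)$, is exactly the right observation, and it is the same counting the paper does. The gap is in what you take $R_2$ to be. Proposition~\ref{pro:vctf1} does not supply a reduction whose \emph{input} is an arbitrary \VC\ instance: the reduction behind it starts from \TSAT, and its only graph-to-graph ingredient is the double-subdivision trick, which destroys triangles but \emph{preserves maximum degree}. Your $R_1$ (the classical edge-gadget/triangle-gadget reduction) outputs graphs of unbounded degree --- a literal occurring in $t$ clauses yields a variable vertex of degree $t+1$ --- so feeding $H$ into the subdivision produces a triangle-free graph that is not sub-cubic, and the pipeline never lands in the class ``triangle-free sub-cubic.'' Nothing in your write-up reduces degrees: your ``large star'' discussion addresses parameter blow-up, which was never the real obstruction here, while the actual obstruction (degree) goes unaddressed. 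In short, the composition $R_2\circ R_1$ does not type-check, because the black box you invoke with the properties ``accepts any \VC\ instance, is size-linear, outputs sub-cubic graphs'' is not provided by the proposition and is never constructed.

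The paper resolves this by putting the degree control into the \emph{first} stage: it starts from the reduction of \TSAT\ to \VC\ on sub-cubic graphs (Mohar), which yields a sub-cubic instance with $9m$ vertices, $12m$ edges and budget $5m$, and only then subdivides every edge twice, obtaining a triangle-free sub-cubic instance with budget $17m=O(m)$; linearity of the parameter is then immediate. You could repair your route either by replacing your $R_1$ with such a degree-bounded reduction (at which point your proof becomes the paper's), or by inserting an explicit degree-reduction gadget between $R_1$ and the subdivision step. A third fix, closest in spirit to your $k'\leq|V(H')|$ remark, avoids any new reduction: since a \VC\ instance with $k\geq |V(G)|$ is trivially a yes-instance, one may assume $k<|V(G)|$, and then an algorithm running in time \PSUBEX\ on triangle-free sub-cubic graphs already runs in time \SUBEX, so Proposition~\ref{pro:vctf1} yields the lemma directly. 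As written, however, your argument has a genuine hole.
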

\begin{proof}
  The reduction from \TSAT\ to \VC\ on sub-cubic graphs (see \cite{mohar2001face}) 
  gives an instance $(G,k)$ of \VC\ where $G$ is sub-cubic and 
  has $9m$ vertices and $12m$ edges and $k=5m$, where  $m$ is the number of 
  clauses of the input \TSAT\ instance. Now, replace every edge of $G$
  by a path of three edges (i.e., subdivide every edge twice) to obtain $G'$. 
  Now, $G'$ has $33m$ vertices and $36m$ edges.
  It is straight forward to verify that the input \TSAT\ instance is satisfiable if and 
  only if $G'$ has a vertex cover of size at most $17m$. Now, the statement follows from the 
  observation that the reduction is linear parameterized.
\end{proof}

Komusiewicz has also proved that \cite{komusiewicz2015tight} for any non-trivial hereditary property $\Pi$, 
\PIVD\ cannot be solved in time \SUBEX, unless ETH fails. The reduction is 
from a variant of \VC. We require only the case when $\Pi$ is `$K_{1,s}$-free'.
For the sake of completeness, we reiterate his proof tailored for our special case.

\begin{pro} \cite{komusiewicz2015tight}
  \label{pro:cvd}
  For any fixed $s\geq 2$, 
  \KOSVD\ on triangle-free graphs with maximum degree at most $s+2$
  is \NPC\ and can be solved neither in time \SUBEX\ nor in time
  \PSUBEX.
\end{pro}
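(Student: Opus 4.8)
The plan is to give a linear parameterized reduction from \VC\ on triangle-free sub-cubic graphs to \KOSVD, so that the hardness and both lower bounds carry over from Proposition~\ref{pro:vctf1} and Lemma~\ref{lem:vctf2}. The starting observation characterizes the forbidden structure on triangle-free graphs: since the open neighborhood of every vertex of a triangle-free graph is independent, such a graph contains an induced $K_{1,s}$ if and only if some vertex has degree at least $s$. Hence, on triangle-free graphs, \KOSVD\ is exactly the problem of deleting at most $k$ vertices so that the remaining graph has maximum degree at most $s-1$.

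For the construction, given an instance $(G,k)$ of \VC\ with $G$ triangle-free and sub-cubic, I build $G'$ by attaching $s-1$ new degree-one vertices (pendants) to every $v\in V(G)$, and set $k'=k$. Since $G$ is triangle-free and the pendants have degree one, $G'$ is triangle-free; since $d_G(v)\le 3$, every $v\in V(G)$ has degree $d_G(v)+(s-1)\le s+2$ in $G'$, while each pendant has degree one (here $s\ge 2$ ensures that pendants alone do not violate the degree bound $s-1$). Thus $G'$ is triangle-free with maximum degree at most $s+2$, and $|V(G')|=s\cdot|V(G)|=O(|V(G)|)$, so the reduction is linear and, as $k'=k$, linearly parameterized. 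Membership in \NP\ is immediate by guessing and verifying a deletion set.

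For correctness, the forward direction is routine: if $S$ is a vertex cover of $G$ with $|S|\le k$, then deleting $S$ from $G'$ leaves every surviving $v\in V(G)$ with no original neighbor (all of them lie in $S$ since $v\notin S$), hence with degree exactly $s-1$ from its pendants, while pendants keep degree at most one; so $G'-S$ has maximum degree at most $s-1$ and is $K_{1,s}$-free. The converse is the main obstacle and needs a normalization argument. Given a solution $D$ with $|D|\le k$ for $G'$, I first argue that one may assume $D\subseteq V(G)$: if a pendant $p$ attached to $v$ lies in $D$, then replacing $p$ by $v$ (or simply dropping $p$ when $v\in D$ already) never increases $|D|$ and keeps the solution valid, because deleting $v$ only lowers the degrees of its neighbors while $p$ becomes isolated. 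After this cleanup I claim $D$ is a vertex cover of $G$: if some edge $\{u,v\}$ of $G$ had both endpoints outside $D$, then in $G'-D$ the vertex $u$ would still retain its $s-1$ pendants together with the neighbor $v$, giving degree at least $s$ and hence an induced $K_{1,s}$ (its neighborhood being independent in the triangle-free graph $G'-D$), a contradiction. Thus $D$ yields a vertex cover of size at most $k$.

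Finally, the consequences follow mechanically from the linearity of the reduction. An algorithm for \KOSVD\ running in time \SUBEX\ would, via $|V(G')|=O(|V(G)|)$, yield a subexponential algorithm for \VC\ on triangle-free sub-cubic graphs, contradicting Proposition~\ref{pro:vctf1}; likewise, since $k'=k$, a \PSUBEX-time algorithm would contradict Lemma~\ref{lem:vctf2}. The main delicacy in the whole argument is the normalization step in the converse direction, where one must check that pushing deletions from pendants onto their attachment vertices preserves both the size bound and the degree constraint.
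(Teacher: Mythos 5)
Your proof is correct and takes essentially the same route as the paper: the identical construction (attaching an independent set of $s-1$ pendant vertices to every vertex of a triangle-free sub-cubic \VC\ instance with $k'=k$), the same degree-based forward direction, and the same normalization in the converse that pushes pendant deletions onto their attachment vertices before reading off a vertex cover. The only difference is cosmetic---you state explicitly the observation that on triangle-free graphs $K_{1,s}$-freeness coincides with maximum degree at most $s-1$, which the paper uses implicitly.
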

\begin{proof}
  Let $(G,k)$ be an instance of \VC\ on triangle-free sub-cubic graphs. For every vertex $v_i$ in $G$,
  introduce an independent set $I_i$ of $s-1$ vertices and make all of them adjacent to $v_i$. 
  Let the resultant graph be $G'$. 
  Clearly, $G'$ is triangle-free and has degree at most $s+2$.
  
  Let $S$ be a vertex cover of size at most $k$ of $G$. Since $G-S$ is an independent set,
  $G'-S$ is a graph with degree at most $s-1$ and hence $K_{1,s}$-free. 
  Conversely, let $S'\subseteq V(G')$ be such that $|S'|\leq k$ and $G'-S'$ is $K_{1,s}$-free.
  For every vertex $u_i\in I_i$ such that $u_i\in S'$, replace $u_i$ by $v_i$ to obtain $S''$. 
  Clearly, $G'-S''$ is $K_{1,s}$-free and $|S''|\leq k$. It is straight forward to
  verify that $S''$ is an independent set of $G$, otherwise an edge $\{v_i,v_j\}$ in $G-S''$
  will cause a $K_{1,s}$ induced by $I_i\cup \{v_i,v_j\}$ in $G'-S''$, which is a contradiction. Now, the 
  statements follow from Proposition~\ref{pro:vctf1}, Lemma~\ref{lem:vctf2} and the fact that
  the reduction is both linear and linear parameterized.
\end{proof}

Now, we give a reduction from \KOSOVD\ on triangle-free graphs to 
\SDED\ (\SDEE) on $K_4$-free graphs.

\paragraph*{Reduction:} Let $s\geq 1$ be any fixed integer.
Let $(G,k)$ be an instance of \KOSOVD\ such that $G$
is triangle-free. Introduce a new vertex $w$ and make it adjacent
to all the vertices in $G$. Let the resultant graph be $G'$ and let
the reduced instance of \DED\ (\DEE) be $(G',k)$.

\begin{observation}
  \label{obs:red}
  For any fixed integer $s\geq 1$, 
  let $(G,k)$ be an instance of \KOSOVD\ on triangle-free graphs 
  and let $G'$ be obtained by the reduction described above. 
  \begin{enumerate}[(i)]
  \item\label{obs:red:0} $G'$ is $K_4$-free.
  \item\label{obs:red:1} 
    Let $D\subseteq V(G')$. Then, $D$ induces an $s$-diamond in $G'$
    if and only if $w\in D$ and $D\setminus \{w\}$ induces a $K_{1,s+1}$
    in $G$.
  \item\label{obs:red:2} 
    Let $S\subseteq V(G)$ and let $F$ be the set of all
    edges $\{w,v_i\}$, where $v_i\in S$.
    Then, $G-S$ is $K_{1,s+1}$-free if and
    only if $G' - F$ is $s$-diamond-free.
  \end{enumerate}
\end{observation}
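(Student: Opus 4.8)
The plan is to treat the three parts in order, using two structural levers throughout: the triangle-freeness of $G$, and the fact that $w$ is adjacent to every other vertex. Recall that the $s$-diamond $K_2\times(s+1)K_1$ has exactly two hub vertices (forming the $K_2$), each of degree $s+2$ and adjacent to all other $s+2$ vertices, and $s+1$ leaf vertices, each of degree $2$ and adjacent only to the two hubs; in particular the two hubs together with any leaf form a triangle.

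For (\ref{obs:red:0}) I would argue by contradiction. A $K_4$ in $G'$ consists of four mutually adjacent vertices, at most one of which is $w$; hence at least three lie in $V(G)$ and are mutually adjacent, giving a triangle in $G$ and contradicting triangle-freeness. Thus $G'$ is $K_4$-free.

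For (\ref{obs:red:1}), the forward direction starts from an induced $s$-diamond $G'[D]$. Since it contains a triangle and every triangle of $G'$ must use $w$ (the graph induced on $V(G)$ is the triangle-free $G$), we get $w\in D$. As $w$ is universal, $w$ is adjacent to all of $D\setminus\{w\}$, so it has degree $|D|-1=s+2$ in $G'[D]$; since $s\ge 1$ gives $s+2>2$ while every leaf has degree $2$, the vertex $w$ must be a hub. Writing $y$ for the other hub, the leaves are $D\setminus\{w,y\}$, all adjacencies among $D\setminus\{w\}$ coincide in $G$ and $G'$, and so in $G$ the vertex $y$ is adjacent to the $s+1$ pairwise non-adjacent leaves, i.e.\ $D\setminus\{w\}$ induces $K_{1,s+1}$. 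Conversely, if $w\in D$ and $D\setminus\{w\}$ induces a $K_{1,s+1}$ with center $c$ and independent leaves $\ell_1,\dots,\ell_{s+1}$, then in $G'[D]$ both $w$ and $c$ are adjacent to each other and to all leaves, the leaves remain pairwise non-adjacent, so $\{w,c\}$ realizes the $K_2$ and the leaves the $(s+1)K_1$, exhibiting an $s$-diamond.

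For (\ref{obs:red:2}) I would bootstrap the reasoning of (\ref{obs:red:1}), the only new point being that in $G'-F$ the neighborhood of $w$ is exactly $V(G)\setminus S$. If $G-S$ has an induced $K_{1,s+1}$ on a set $D'\subseteq V(G)\setminus S$, then $w$ is still adjacent in $G'-F$ to every vertex of $D'$, and the converse argument of (\ref{obs:red:1}) shows $D'\cup\{w\}$ induces an $s$-diamond in $G'-F$; contrapositively, $s$-diamond-freeness of $G'-F$ forces $K_{1,s+1}$-freeness of $G-S$. For the reverse, let $G'-F$ contain an induced $s$-diamond on $D$. Its triangle again forces $w\in D$ (the graph on $V(G)$ is still the triangle-free $G$), and $w$ is again a hub: were $w$ a leaf, its two hubs would be adjacent vertices of $G$ sharing a further leaf lying in $V(G)$ (one exists since $s\ge 1$), yielding a triangle in $G$. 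Hence $D\setminus\{w\}\subseteq N_{G'-F}(w)=V(G)\setminus S$, and the same edge-preservation argument shows $D\setminus\{w\}$ induces $K_{1,s+1}$ in $G-S$; so $K_{1,s+1}$-freeness of $G-S$ gives $s$-diamond-freeness of $G'-F$.

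I expect the main obstacle to be the bookkeeping in (\ref{obs:red:1}) that pins $w$ down as a hub rather than a leaf: one must simultaneously use the triangle inside the $s$-diamond (to force $w\in D$), the universality of $w$ (to fix its degree at $s+2$), and the inequality $s+2>2$ valid for $s\ge 1$ (to exclude the leaf role), with the parallel leaf-exclusion argument in (\ref{obs:red:2}) relying instead on the existence of a second leaf in $V(G)$. Once $w$ is identified as a hub, everything else is a direct translation between the $K_2$/$(s+1)K_1$ description of the $s$-diamond and the center/leaves description of $K_{1,s+1}$, together with the elementary observation that deleting $w$-incident edges only alters $N(w)$.
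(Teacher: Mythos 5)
Your proposal is correct, and for parts (\ref{obs:red:0}) and (\ref{obs:red:1}) it matches the paper's argument (the paper is merely terser: it asserts that triangle-freeness forces $w\in D$ with degree $s+2$, which your universality/degree count and hub-versus-leaf bookkeeping make explicit). Where you genuinely diverge is part (\ref{obs:red:2}). The paper reasons at the level of $G'$: by (\ref{obs:red:1}), every induced $s$-diamond of $G'$ consists of $w$ plus an induced $K_{1,s+1}$ of $G$, hence is hit by $F$ when $G-S$ is $K_{1,s+1}$-free; it must then invoke $K_4$-freeness of $G'$ to argue that deleting edges creates no \emph{new} induced $s$-diamonds, a step that cannot be skipped since induced-subgraph containment is not monotone under edge deletion. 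You instead analyze an induced $s$-diamond $D$ of $G'-F$ directly: $(G'-F)[V(G)]$ is still the triangle-free graph $G$, so $w\in D$; since universality of $w$ fails in $G'-F$, you rightly replace the degree argument by the second-leaf argument (a leaf role for $w$ would yield a triangle of $G$ on the two hubs and another leaf, which exists as $s\geq 1$); and $N_{G'-F}(w)=V(G)\setminus S$ places $D\setminus\{w\}$ inside $G-S$. Your route is slightly longer but self-contained and sidesteps the ``no new copies after deletion'' subtlety entirely, never needing $K_4$-freeness in this part; the paper's route is shorter given (\ref{obs:red:1}) but carries that extra justification. Both are valid proofs of the observation.
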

\begin{proof}
  \begin{enumerate}[(i)]
    \item Implied by the fact that $G$ is triangle-free.
    \item
      Assume that $D$ induces an $s$-diamond in $G'$. Since $G$ is triangle-free,
      $w\in D$ and $w$ must be a vertex with degree $s+2$
      in the $s$-diamond $G'[D]$. Hence $D\setminus \{w\}$ induces a $K_{1,s+1}$ in $G$.
      The other direction is straight-forward. 
    \item 
      Let $G-S$ be $K_{1,s+1}$-free. By (\ref{obs:red:1}), every induced $s$-diamond in $G'$
      is formed by an induced $K_{1,s+1}$ in $G$ and $w$. Hence every $s$-diamond in $G'$
      is hit by $F$. Since $G'$ is $K_4$-free, no new $s$-diamond is created in $G'$ by
      deleting edges. Hence $G'-F$ is $s$-diamond-free.
      Conversely, let $G'-F$ be $s$-diamond-free. By (\ref{obs:red:1}), every
      induced $K_{1,s+1}$ in $G$ with $w$ forms an $s$-diamond in $G'$. Hence $S$ hits
      all induced $K_{1,s+1}$s in $G$. 
  \end{enumerate}
\end{proof}

\begin{thm}
  \label{thm:diamond-hard}
  For any fixed $s\geq 1$,
  \SDED\ and \SDEE\ are \NPC\ even on $K_4$-free graphs. Further, these
  \NPC\ problems can be solved neither in time \SUBEX\ nor in time \PSUBEX.
\end{thm}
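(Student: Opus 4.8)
The plan is to prove Theorem~\ref{thm:diamond-hard} by chaining the reduction of the preceding paragraph with the hardness results already established in Proposition~\ref{pro:cvd} and Lemma~\ref{lem:vctf2}. The reduction takes an instance $(G,k)$ of \KOSOVD\ on triangle-free graphs and produces $(G',k)$ where $G'$ is obtained by adding a universal vertex $w$. By Observation~\ref{obs:red}(\ref{obs:red:0}) the output graph $G'$ is $K_4$-free, which is exactly the restriction claimed in the theorem. The heart of the correctness argument is the equivalence of the two instances, and this is precisely what Observation~\ref{obs:red}(\ref{obs:red:2}) supplies: a vertex set $S$ of size at most $k$ makes $G-S$ be $K_{1,s+1}$-free if and only if deleting the corresponding edge set $F=\{\{w,v_i\}: v_i\in S\}$ of the same size makes $G'$ be $s$-diamond-free. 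Thus $(G,k)$ is a yes-instance of vertex deletion if and only if $(G',k)$ is a yes-instance of \SDED.

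First I would argue membership in \NP, which is routine since a candidate solution can be verified in polynomial time by checking $s$-diamond-freeness. For hardness of \SDED, I would invoke the forward and backward directions above to conclude the polynomial-time reduction is correct; note the reduction is linear, since $|V(G')| = |V(G)|+1$, and linear parameterized, since the parameter is preserved exactly (the same $k$). Combining this with Proposition~\ref{pro:cvd}, which gives that \KOSOVD\ on triangle-free graphs of bounded degree is \NPC\ and solvable in neither \SUBEX\ nor \PSUBEX\ time under ETH, transfers all three properties to \SDED\ on $K_4$-free graphs: \NP-hardness from the polynomial reduction, the $2^{o(|G|)}$ lower bound from linearity, and the $2^{o(k)}\cdot|G|^{O(1)}$ lower bound from the parameterized-linearity of the reduction.

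The remaining work is to extend everything from the deletion variant to the editing variant \SDEE. The key observation is that because $G'$ is $K_4$-free, one can never \emph{create} a new $s$-diamond by adding an edge to an $s$-diamond-free subgraph without introducing a $K_4$ or a triangle structure that the universal-vertex construction forbids; more directly, adding edges is never helpful, because any optimal editing solution can be assumed to only delete edges. I would formalize this by showing that in the editing setting an added edge can always be replaced by (or simply dropped in favor of) a deletion without increasing the solution size, so that the minimum editing cost equals the minimum deletion cost on these instances. Given this, Observation~\ref{obs:red}(\ref{obs:red:2}) yields the same equivalence for editing, and the \NPC\ and ETH lower bounds carry over verbatim.

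The main obstacle I anticipate is precisely this editing-equals-deletion step: I must rule out that allowing edge additions gives a strictly cheaper way to destroy all $s$-diamonds. The cleanest route is an exchange argument exploiting the structural rigidity of $G'$ — every $s$-diamond must use $w$ as its degree-$(s+2)$ vertex (Observation~\ref{obs:red}(\ref{obs:red:1})), so the only relevant edges incident to any $s$-diamond are the edges from $w$ to the triangle-free graph $G$; adding an edge inside $G$ cannot destroy an $s$-diamond (it only risks creating new ones, which would require a triangle in $G$ and hence cannot happen), and adding an edge incident to a brand-new vertex is pointless. Hence any editing solution's additions are wasted and can be discarded, reducing the budget-$k$ editing question to the budget-$k$ deletion question, and the theorem follows.
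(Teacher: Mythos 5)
Your overall plan coincides with the paper's: the same universal-vertex reduction from \KOSOVD\ on triangle-free graphs, correctness via Observation~\ref{obs:red}, and the \NPC\ and ETH lower bounds transferred through Proposition~\ref{pro:cvd} because the reduction is linear and preserves the parameter exactly. The gap is in the backward direction, and it affects the deletion case as well as the editing case. You claim that ``the only relevant edges incident to any $s$-diamond are the edges from $w$ to the triangle-free graph $G$,'' and on that basis you treat every solution of $(G',k)$ as if it were of the special form $F=\{\{w,v_i\}:v_i\in S\}$ covered by Observation~\ref{obs:red}(\ref{obs:red:2}). That claim is false: by Observation~\ref{obs:red}(\ref{obs:red:1}), an induced $s$-diamond in $G'$ consists of $w$ together with an induced $K_{1,s+1}$ of $G$, so it contains $s+1$ edges lying entirely inside $G$ (the edges of the star), and a valid deletion solution may consist exclusively of such edges. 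Observation~\ref{obs:red}(\ref{obs:red:2}) says nothing about such solutions, so ``$(G',k)$ is a yes-instance $\Rightarrow$ $(G,k)$ is a yes-instance'' does not follow from it; you need an exchange step that converts an arbitrary solution into one of the special form.

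The same issue undermines your treatment of \SDEE. Your assertion that ``adding an edge inside $G$ cannot destroy an $s$-diamond,'' and hence that additions are wasted and can simply be discarded, is wrong: adding an edge between two of the $s+1$ independent vertices of an induced $s$-diamond destroys that induced copy, since it creates a $K_4$, which is not a forbidden graph here. For instance, with $s=1$ and $G$ a path on $\{v_1,v_2,v_3\}$, the single addition $\{v_1,v_3\}$ is a valid editing solution of size $1$ for $G'$, yet discarding it leaves the diamond on $\{w,v_1,v_2,v_3\}$ intact. The repair---which is exactly what the paper does---is to \emph{replace}, not discard: for every pair $\{v_i,v_j\}$ in the solution with both endpoints in $V(G)$, whether it is a deletion or an addition, put the deletion $\{w,v_i\}$ into a new set $T'$. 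Since $w$ is adjacent in $G'$ to every vertex of every induced $s$-diamond, $T'$ still meets every induced $s$-diamond of $G'$ in an edge of that diamond; and since $G'$ is $K_4$-free, deleting edges cannot create new induced $s$-diamonds (a new copy would have induced, before the deletions, an $s$-diamond plus an edge between two of its independent vertices, hence a $K_4$). This single exchange argument handles \SDED\ and \SDEE\ uniformly and reduces any solution to the special form to which Observation~\ref{obs:red}(\ref{obs:red:2}) applies; without it, both directions of your claimed equivalence remain incomplete.
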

\begin{proof}
  We reduce from \KOSOVD\ on triangle-free graphs. Let $(G,k)$ be an
  instance of \KOSOVD\ such that $G$ is triangle-free. 
  We apply the reduction described above to obtain
  an instance $(G',k)$ of \SDED~(\SDEE). 
  By Observation~\ref{obs:red}(\ref{obs:red:0}), $G'$ is $K_4$-free.
  We claim that $(G,k)$ is a yes-instance of \KOSOVD\ if and only if
  $(G',k)$ is a yes-instance of \SDED\ (\SDEE). 

  Let $S\subseteq V(G)$ be such that $|S|\leq k$ 
  and $G-S$ is $K_{1,s+1}$-free. Let $F\subseteq E(G')$ be defined 
  as the set of all edges $\{w,v_i\}$, where $v_i\in S$.
  Now, by Observation~\ref{obs:red}(\ref{obs:red:2}), $G'-F$ is $s$-diamond-free.

  Conversely, let $(G',k)$ be a yes-instance of \SDED~(\SDEE). 
  Let $T\subseteq E(G')$~($T\subseteq [V(G')]^2$) be such that $|T|\leq k$
  and $G'\triangle T$ is $s$-diamond-free. For every (potential) edge $\{v_i,v_j\}$
  of $G$ in $T$, replace $\{v_i,v_j\}$ with either $\{w,v_i\}$ or $\{w,v_j\}$ in $T$.
  Let the resultant set of edges be $T'$. Clearly, $|T'|\leq k$. 
  By \ref{obs:red}(\ref{obs:red:1}), every $s$-diamond in $G'$ contains $w$. Therefore,
  since $T$ hits every $s$-diamond in $G'$, $T'$ hits every $s$-diamond in $G'$. 
  Since $G'$ is $K_4$-free, no new $s$-diamond is created by deleting edges from $G'$.
  Hence, $G'-T'$ is $s$-diamond-free. Let $S$ be the set of all vertices
  $v_i$ such that $\{w,v_i\}$ is in $T'$. Then, by \ref{obs:red}(\ref{obs:red:1}),
  $G-S$ is $K_{1,s+1}$-free. Now, the statements follows from Proposition~\ref{pro:cvd} and 
  the observation that the reduction we give is both linear and linear parameterized.
\end{proof}

We observe that, in the proof of Theorem~\ref{thm:diamond-hard}, 
even for the editing problem, for every solution (the set of edges to be deleted and 
the set of edges to be added) of $(G',k)$,  has 
a corresponding solution which contains only the set of edges to be deleted. Hence we get the 
following corollary.

\begin{cor}
  For any fixed integers $s\geq 1$ and $t\geq 4$, \SDKTED\ and \SDKTEE\ are \NPC
  even on $K_4$-free graphs. Further, these problems 
  can be solved neither in time \SUBEX\ nor in time \PSUBEX.
\end{cor}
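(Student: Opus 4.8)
The plan is to show that the reduction already constructed in the proof of Theorem~\ref{thm:diamond-hard} doubles as a reduction from \KOSOVD\ on triangle-free graphs to both \SDKTED\ and \SDKTEE, so that no new gadget is needed. The structural fact I would exploit is that the graph $G'$ produced by that reduction is $K_4$-free (Observation~\ref{obs:red}(\ref{obs:red:0})). Since $K_t$ contains $K_4$ as a subgraph for every $t\geq 4$, a $K_4$-free graph is automatically $K_t$-free; moreover $G'$ witnesses the claimed hardness \emph{on $K_4$-free graphs}. I would then argue that, on this particular instance, imposing the extra $K_t$-free requirement changes nothing.

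For the deletion variant I would note that deleting edges can never create a new clique, so every graph obtained from $G'$ by edge deletions stays $K_4$-free and hence $K_t$-free. Thus the $K_t$-free constraint is satisfied vacuously by any edge-deleted subgraph of $G'$, and $(G',k)$ is a yes-instance of \SDKTED\ if and only if it is a yes-instance of \SDED. Combining this with Theorem~\ref{thm:diamond-hard} yields the equivalence with the \KOSOVD\ instance $(G,k)$.

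For the editing variant the extra subtlety, and the step I expect to be the main obstacle, is that an edge \emph{addition} could in principle create a $K_t$, so one must rule out that additions are genuinely useful for satisfying the $K_t$-free constraint. This is precisely what the observation recorded just before the corollary resolves: in the proof of Theorem~\ref{thm:diamond-hard}, every editing solution of $(G',k)$ for \SDEE\ has a corresponding deletion-only solution of the same size. A deletion-only solution preserves $K_4$-freeness, hence $K_t$-freeness, so if $(G',k)$ is a yes-instance of \SDEE\ then its associated deletion-only solution already makes $G'$ simultaneously $s$-diamond-free and $K_t$-free, i.e.\ $\{s\text{-diamond},K_t\}$-free, giving a yes-instance of \SDKTEE. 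Conversely, any $\{s\text{-diamond},K_t\}$-free graph is in particular $s$-diamond-free, so every \SDKTEE\ solution is also an \SDEE\ solution; hence a yes-instance of \SDKTEE\ is a yes-instance of \SDEE, which by Theorem~\ref{thm:diamond-hard} corresponds to a yes-instance of \KOSOVD.

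Finally, because the reduction used is exactly the one from Theorem~\ref{thm:diamond-hard}, which is both linear and linear parameterized, the consequences transfer verbatim: membership in \NP\ is clear, \NPH\ (on $K_4$-free graphs) follows from Proposition~\ref{pro:cvd}, and the absence of algorithms running in time \SUBEX\ or \PSUBEX\ (unless ETH fails) follows from Proposition~\ref{pro:cvd} together with Lemma~\ref{lem:vctf2}. I would therefore present the proof as a short chain of equivalences reusing the earlier instance, with the editing direction being the only place requiring the deletion-only-solution observation.
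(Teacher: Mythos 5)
Your proposal is correct and takes essentially the same approach as the paper: the paper also derives the corollary by reusing the reduction of Theorem~\ref{thm:diamond-hard}, relying on the fact that $G'$ is $K_4$-free (hence $K_t$-free for $t\geq 4$), that edge deletions preserve this, and that every editing solution admits a corresponding deletion-only solution of the same size. Your write-up is just a more explicit rendering of the paper's brief observation preceding the corollary.
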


\section{Concluding Remarks}

Consider the graph $G$ in Figure~\ref{fig:difficult}.
A big circle denotes a clique of $k$ vertices. There are
$k$ of them. 
Outside the large 
cliques there are only four vertices which induces a 
diamond. One of those vertices is adjacent to all the 
vertices in the large cliques (thick edge denotes this) and another vertex in 
the diamond is adjacent to exactly one vertex from each large clique.
We observe that $G$ has $k^2+4$ vertices. 
None of our rules reduces the size of this graph and a $O(k)$ of 
such structures in a graph causes $O(k^3)$ vertices. We believe that
rules to tackle this structure is the key to obtain a smaller kernel for
\DED.

\begin{figure}[h]
  \centering
  \includegraphics[width=2.0in]{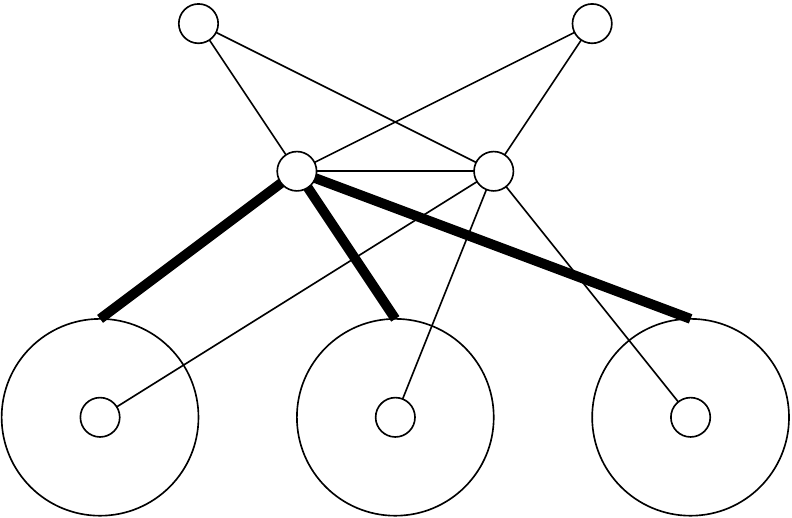}
  \caption{A structure to dismantle for a smaller kernel for \DED}
  \label{fig:difficult}
\end{figure}

\begin{op}
  Does \DED\ admit a kernel of $O(k^2)$ vertices?
\end{op}

We have proved the hardness and lower bounds for \SDED, for any fixed $s\geq 1$.
The vertex-split rule is safe to apply for these problems. Studying the 
structural properties of $s$-diamond-free graphs may help us to obtain a 
polynomial kernels for these problems when $s\geq 2$.

\begin{op}
  Does \SDED\ admit a polynomial kernel when $s\geq 2$?
\end{op}

Polynomial kernelization of \pname{Claw-free Edge Deletion} is considered as a difficult 
problem~\cite{CaiC15incompressibility, cygan2015polynomial}
in this area. One of the difficulties with this problem is that the characterization
of claw-free graphs is quite complicated. A paw graph is a graph obtained by adding 
an edge between two non-adjacent vertices in a claw. It is known that every component in a paw-free
graph is either triangle-free or complete multipartite \cite{Olariu88paw}. 
Can we use this to obtain a polynomial kernel for \pname{Paw-free Edge Deletion}?

\begin{op}
  Does \pname{Paw-free Edge Deletion} admit a polynomial kernel?
\end{op}

%%
%% Bibliography
%%

%% Either use bibtex (recommended), but commented out in this sample

\bibliography{diamond}

%% .. or use bibitems explicitely
%\newpage
%\appendix
%\input{appendix}

\end{document}